\author{Independent Researcher \\[2mm] Yong Tan}
\date{}
\title{Construct Graph Logic}
\theoremstyle{plain}
\newtheorem{theorem}{Theorem}[subsection]
\newtheorem{definition}{Definition}[subsection]
\newtheorem{descussion}{Discussion}[subsection]
\newtheorem{lemma}{Lemma}[subsection]
\begin{document}
\maketitle
\begin{abstract}
In this paper, author uses set theory to construct a logic model of abstract figure from binary relation. Based on the uniform quantified structure, author gives two logic system for graph traversal and graph coloring respectively, moreover shows a new method of cutting graph. Around this model, there are six algorithms in this paper including exact graph traversal, Algebra calculation of natural number, graph partition and graph coloring. 
\end{abstract}
\section{Introduction}
\textbf{Background. }The graph theory indeed presents relevance with abstract relation among objects. Since author wrote the BOTS algorithm for graph traversal at April, 2012, author still studies this problem. We find the fact that an instance can be computed by BOTS algorithm without concern the graph classes. We guess that there is a logic model support this phenomenon, which leads to the approach possess the capacity of data-oriented. The strategy of equivalent visiting is posed so that we can quantify the process of graph traversal with monotone decreasing function. The uniform data structure of graph is set up with the method of partition of set, and this core idea may be used to solve other problems of graph, such as graph partition and graph coloring. It makes these problems may present a quantified model for the abstract relation. And this abstract logic model can guarantee algorithms possess much more general and stronger.\\
~\newline
\textbf{Related }Work. We formally state the binary relation on graph. This relation contribution can be underlying basis for this model. We repeatedly abstract the basic relation for our new logical mode, and obtain new classes or new properties of relation in new model. These algorithms we given always may have no associated with weights to each edge or arc. It makes the new relation is easy to present a practical instance. The theoretical proof and computing of relation are transformed to Algebra of sets. Furthermore, these algorithms present more reliable, intuitive, simple and high precision, although they are heuristic approaches. First, we introduce BOST and OBOTS algorithms, which runtime complexity both are $O(mn^2B)$. They can exactly compute any connected graph classes, including difficult mixed graph. All connected instances may be explored by these algorithms without recursive method as \emph{Dynamic programming}\cite{1}, such that greatly reduce the complexity of program. We can really and easily achieve the aim of parallel and distributed computing for graph traversal.%

Graph partition may be independent of weight not like Kernighan Lin algorithm\cite{2} 
, although its runs on time complexity $O(n^3)$. In this thesis, the graph partition actually is a method of cutting graph. You can arbitrarily choose the nodes on instance for your research of AI, network flow, graph color, physical problems and etc. It makes the abstract relation among nodes be partition on a sequence of domains for your model of problem.%

Graph coloring is not a simple labeling each vertex on instance. It becomes a logical problem for how to cut graph and let those vertices be partitioned to two classes. Author gives two speed-up algorithms BOGPC and BOERC, which can run in $O(mn^3)$ and $O(m^2n^2)$ respectively. We prove that their precision can be less than and equal to a constant on an instance.\\
\newline
\textbf{Overview. }Author will follow this format: defining objects, exploring the features of objects, proving algorithm, giving pseudocode
, computing runtime complexity of approach and finally present Exp. on instance. In these process, we give the discussion or summary to express author's viewpoint with problem. Then this paper is organized as follows. Firstly preliminary knowledge is in Section 2. Section 3 introduces basic definition, properties, method and proofs for graph, including pseudocode of BOTS and OBOTS, experiments. At the end give the solving problem of natural number BOCPS. The definition and method of graph partition are stated in Section 4. Similarly, there are pseudocode, algorithmic complexity and experiment. Section 5 proposes definition of edge and model of graph coloring. Finally we give the formula of graph coloring. Of cause, there show the algorithms BOGPC and BOERC with concerning complexity. We will evaluate those algorithms about probabilistic of exploring minimum chromatic value. The paper is concluded by a summary, a conjecture of \emph{Russell Paradox} and future work in Section 6.

\section{Preliminaries}
In this paper, we are interested in the connected graph. For each vertex $u$ on instance, there can be at least a path between $u$ and the others. We set each vertex can be labeled with number. Let $V$ be a collection of vertices having $V=\{v_1,v_2,\cdots ,v_n\}$.  We reserve the letter \emph{n} and the term $\vert V\vert $ for the number of vertices on an instance.\\
~\newline
\textbf{Partition of a set.}\cite{3} Given a no-empty universal set $A$,  there exists a family of sets $\bar{A}$, which is the partition of $A$, if and only if these following conditions hold:
\begin{enumerate} 
\item $\bigcup_{a_i\in \bar{A}} \bar{A}=A$
\item $a_i\neq\varnothing\quad \text{iff }a_i\in \bar{A}$
\item $\text{if }a_i, a_j\in \bar{A}~\text{and }a_i\neq a_j~\text{then }a_i\cap a_j=\varnothing $
\end{enumerate}

~\newline
\textbf{Equivalent Class Partition.}\cite{4} If there is a binary relation $\rho$ on set $A$, then there is a unique partition set $\bar{A}$ of the set $A$. For each component $a\in \bar{A}$, such that there are properties of reflexivity, symmetry and transitivity among all elements in set $a$ with respect to $\rho$.\\
~\newline
\textbf{Cartesian product.}\cite{5} Given $N$ sets $A_1,~A_2,~\cdots,~A_N$, there exists a multiplying sets $A_1\times A_2\times \cdots\times A_N$ and return \emph{N-ordered} vectors set $A^N$, in which for these members $v(i,j)$ such that $ v(i,j)\in A_i$.

\section{Graph Traversal}
\subsection{Definition and Property}
\begin{definition}
Given a no-empty and connected graph $G$ and vertices set $V$. Consider a pair $u,v\in V$. If there is a binary relation to characterize a behavior of traversal from vertex $u$ to vertex $v$. We define the binary relation as traversal relation, denote by $\tau$. We write the form $u\tau v$ and $(u,v)\in\tau$ to represent this relation on pair $u,v$. %

The ordered pair $(u,v)$ denotes a direction of left to right. We reserve the notation $\tau(1)$ equal to the first member in ordered pair, and then one equals to $\tau(2)$.

\end{definition}

\begin{theorem}\label{t1}
 If there is a no-empty traversal relation $\tau$ on an instance $G(V)$, then $\tau \subseteq V^2$.
\end{theorem}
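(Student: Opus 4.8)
The plan is to prove the containment by a direct element-chasing argument straight from the definition of the traversal relation. I would start by fixing an arbitrary member of $\tau$; by the definition of $\tau$ as a binary relation characterizing traversal between two vertices, every member of $\tau$ is an ordered pair, say $(u,v)$, with $\tau(1)=u$ and $\tau(2)=v$. The essential observation to isolate from the definition is that both coordinates are vertices of the instance, that is $u\in V$ and $v\in V$, since the relation is only ever posited on a pair $u,v\in V$ and $V$ is the full vertex set of $G$.

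Having established $\tau(1)\in V$ and $\tau(2)\in V$ for this arbitrary pair, I would then appeal to the Cartesian product construction from Section 2 specialized to $N=2$ and $A_1=A_2=V$: the product $V\times V$ is precisely the set $V^2$ of $2$-ordered vectors whose first entry belongs to $A_1=V$ and whose second entry belongs to $A_2=V$. Hence $(u,v)\in V^2$. Because the chosen member of $\tau$ was arbitrary, this yields $\tau\subseteq V^2$; the hypothesis that $\tau$ is non-empty is not needed for the inclusion itself, it merely guarantees the statement is not vacuously about the empty relation.

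The main --- and essentially the only --- difficulty here is definitional rather than technical: one must be sure that the definition of the traversal relation genuinely confines $\tau$ to pairs drawn from $V$, rather than allowing one or both endpoints to be objects outside the vertex set. Once that reading is fixed, no induction, case split, or computation is required, and the proof is a one-line consequence of unwinding the definitions of $\tau$ and of the Cartesian power $V^2$.
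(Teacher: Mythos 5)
Your proof is correct and follows essentially the same route as the paper's: both arguments reduce to observing that the definition of the traversal relation only ever posits $\tau$ on pairs $u,v\in V$, so every member of $\tau$ lies in the Cartesian power $V^2$. The paper phrases the final step as a contradiction (assuming a pair in $\tau\setminus V^2$ forces $u,v\notin V$), whereas you argue directly on an arbitrary element, but this is an immaterial difference of presentation.
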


\begin{proof} 
Given an instance $G(V)$. Let $\tau$ be a traversal relation on instance. As described in definition of Cartesian product, for each pair $u,~v\in V$ such that there is $(u,v)\in V^2$. Observe if pair $(u,v)\in \tau$ then $\tau\subseteq V^2$.%

Assume that there is a pair $(u,v)\in (\tau\setminus V^2)$. Then we have $u,v\notin V$, a contradiction to definition of traversal relation.\\
\end{proof}

It is obviously that there are some properties in traversal relation as follow:
\begin{enumerate} 
\item Reflexivity: if $u\in V$, then there may be $\exists (u,u)\in\tau$.
\item Anti-symmetry: if $(u,v)\in\tau$, then there may be $(v,u)\notin\tau$.
\item Anti-transitivity: if $(u,v),(v, t)\in\tau$, then there may be $(u,t)\notin\tau$.
\end{enumerate}

\begin{proof} 
Let $\tau$ be a traversal relation on instance $G(V)$. For each vertex $u\in V$ such that there may be $(u,u)\in V^2$ with definition of Cartesian product. Then observe that there may be pair $(u,u)\in \tau$ with $\tau\subseteq V^2$ as Theorem\ref{t1}. Hence, $\tau$ may have property of reflexivity.%

Consider a pair $(v,u)\notin\tau$ and there is $(u,v)\in\tau$. Observe that we can not say $(v,u)\in\tau$ hence, there is no property of symmetry in $\tau$.%

When there are three vertices $u,v,t\in V$ and pair $(u,t)\notin\tau$, similarly we can not say pair $(u,t)\in\tau$ with pairs $(u,v),(v,t)\in\tau$. We prove there is no transitive relation in $\tau$ and finish this proof.\\
\end{proof}

\begin{definition}
 Let $s$ be a subset of set $\tau$. If $\vert s\vert\geq 2$, for two arbitrary components $\tau_i,\tau_j\in s$ such that $\tau_i(1)=\tau_j(1)$. We call  set $s$ \emph{unit subgraph}. $S$ denotes  the collection of unit subgraphs.%

We reserve the subscript of set $s$ equal to the one of first member of each component in set $s$.\\
\end{definition}

\begin{theorem}\label{t2}
 Let $S$ be a collection of unit subgraphs on a traversal relation $\tau$. Then set $S$ is the partition of set $\tau$.
\end{theorem}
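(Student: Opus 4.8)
The plan is to exhibit $S$ explicitly and then verify the three defining conditions for a partition of $\tau$ stated in the Preliminaries. For each vertex $v_i\in V$ that occurs as a left member of some pair of $\tau$, set $s_i := \{\tau_k\in\tau : \tau_k(1)=v_i\}$; any two components of $s_i$ share the first coordinate $v_i$, so $s_i$ satisfies the condition in the definition of unit subgraph (vacuously when $|s_i|=1$), it is the largest such subset of $\tau$, and the reserved subscript is consistent. I would then take $S=\{\,s_i : v_i \text{ occurs as a left member in } \tau\,\}$.

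First, for condition (1), the inclusion $\bigcup_{s_i\in S} s_i\subseteq\tau$ is immediate since each $s_i\subseteq\tau$; conversely, given $\tau_k=(u,v)\in\tau$, Theorem~\ref{t1} gives $\tau\subseteq V^2$, hence $u\in V$, so $u=v_i$ for a unique $i$ and $\tau_k\in s_i$. Second, condition (2) holds because $v_i$ was chosen precisely as the left member of some pair in $\tau$, so that pair witnesses $s_i\neq\varnothing$. Third, for condition (3): if $\tau_k\in s_i\cap s_j$ then $\tau_k(1)=v_i$ and $\tau_k(1)=v_j$, forcing $v_i=v_j$ and hence $s_i=s_j$; contrapositively, $s_i\neq s_j$ implies $s_i\cap s_j=\varnothing$. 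Conditions (1)--(3) together say $S$ is a partition of $\tau$.

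The point requiring care is not these routine verifications but the reading of the definition of unit subgraph: since it only constrains subsets with $|s|\ge 2$, a priori there are many overlapping, non-maximal unit subgraphs whose totality is not a partition of $\tau$. The statement is true only for the collection of \emph{maximal} unit subgraphs — equivalently, the blocks of the equivalence ``having the same first coordinate,'' i.e. the fibers of the map $(u,v)\mapsto u$ on $\tau$ — and I would make explicit at the outset that this is the $S$ intended, which is exactly what the reserved-subscript convention (one set $s_i$ per left-member vertex $v_i$) already signals. With $S$ pinned down this way, the result is the standard fact that the fibers of a map partition its domain, transcribed into the Algebra-of-sets language of this paper.
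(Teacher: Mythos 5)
Your proof is correct and follows essentially the same route as the paper's: verify the three partition conditions from the Preliminaries, using the fact that the components of $S$ are the fibers of the first-coordinate map on $\tau$. Your closing observation is in fact the more valuable part: the paper's definition of unit subgraph admits arbitrary (hence overlapping, non-maximal) subsets of a fiber, and the paper's own proof never pins down that $S$ must consist of the \emph{maximal} such sets, whereas you make this explicit at the outset, which is what actually makes the statement true.
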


\begin{proof} 
Let $S$ be a collection of unit subgraphs on traversal relation $\tau$. We aim to prove three conditions hold for set $S$ on set $\tau$. Hence, first we can let set $S=\{s_1,s_2,\cdots,s_N\}$. Consider there is an isolated vertex $v_i\in V$. It is certainly that pairs $(\varnothing, v_i),(v_i,\varnothing)\notin \tau$ with the definition of traversal relation. Hence for each component $s_i\in S$, there is no such case $(\varnothing, u),(u,\varnothing)\in s_i$ with $s_i\subseteq \tau$ by the definition of unit subgraph. %

Consider a pair $(v_i, v_j)\in\tau$. As the definition of unit subgraph, there naturally may have a component $s_i\in S$ such that $v_i\tau v_j\in s_i$, which subscript is $i$. Hence, we have that $\tau \subseteq S$ and $s_i\neq\varnothing$.%

Assume $\tau\setminus S\neq\varnothing$ and having a pair $v_k\tau v_t\in (\tau\setminus S)$. There certainly may exist a component $s_k\in S$ and introduce pair $v_k\tau v_t$ to $ s_k$, thus observe there may be $v_k\tau v_t\in S$, a contradiction. Hence $\tau\setminus S=\varnothing$ and $\tau= S$.%

Consider two components $s_i,s_j\in S$ with $i\neq j$ such that $s_i\neq s_j$. Assume to $s_i\cap s_j\neq\varnothing$. Set a pair $v_k\tau v_t\in(s_i\cap s_j)$. As the definition of unit subgraph, observe there can be $k=i=j$, contradicts the given condition of $i\neq j$. Hence $s_i \cap s_j = \varnothing$.%

To sum up above, set $S$ is the partition of set $\tau$.\\
\end{proof}

\begin{lemma} \label{t3}
Unit subgraph is a Cartesian product set.
\end{lemma}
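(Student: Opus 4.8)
The plan is to exhibit a unit subgraph explicitly as a product of two sets, namely the singleton of its common first coordinate and the set of all its second coordinates. First I would fix a unit subgraph $s\in S$ and invoke the defining property: for any two components $\tau_i,\tau_j\in s$ we have $\tau_i(1)=\tau_j(1)$, so there is a vertex $u\in V$ with $\tau_i(1)=u$ for every $\tau_i\in s$. The hypothesis $\vert s\vert\geq 2$ (together with the fact that unit subgraphs arise as the non-empty blocks of the partition $S$ in Theorem \ref{t2}) guarantees that $s$ is non-empty, so this $u$ exists and is unique. Then I would collect the second coordinates into $W=\{\,\tau_i(2):\tau_i\in s\,\}$, which is a subset of $V$ because $\tau\subseteq V^2$ by Theorem \ref{t1}.

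Next I would prove $s=\{u\}\times W$ by double inclusion. For $s\subseteq\{u\}\times W$: any $\tau_i\in s$ is an ordered pair equal to $(\tau_i(1),\tau_i(2))=(u,\tau_i(2))$ with $\tau_i(2)\in W$ by construction, hence $\tau_i\in\{u\}\times W$. For $\{u\}\times W\subseteq s$: any member of $\{u\}\times W$ has the form $(u,w)$ with $w\in W$, and by the definition of $W$ there is a component $\tau_i\in s$ with $\tau_i(2)=w$; since also $\tau_i(1)=u$, we get $\tau_i=(u,w)$, so $(u,w)\in s$. Combining the two inclusions yields $s=\{u\}\times W$.

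Finally, since $\{u\}$ and $W$ are sets, $\{u\}\times W$ is a Cartesian product (the $N=2$ instance of the Cartesian product recalled in Section 2), and therefore $s$ is a Cartesian product set, which is the claim. I do not expect a genuine obstacle here; the only point that needs a little care is checking that the common first coordinate is well defined and unique, which is exactly where the hypothesis $\vert s\vert\geq 2$ and the non-emptiness of unit subgraphs are used.
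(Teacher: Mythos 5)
Your proof is correct and follows essentially the same route as the paper: both identify the common first coordinate $u$ (the root) and the set of second coordinates (the leaves), and write the unit subgraph as $\{u\}\times W$. The paper simply displays this factorization directly, whereas you justify the equality by double inclusion; this is a more careful rendering of the same idea, not a different approach.
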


\begin{proof} 
Let $s_i$ be a unit subgraph. We have a term to characterize it as follow
\[s_i=\{v_i\tau v_1,~v_i\tau v_2,~\cdots ,~v_i\tau v_t\}:1\leq i,t\leq n.\]
The form can be written as follow
\begin{equation}
s_i=\{v_i\}\times \{v_1,v_2,\cdots,v_t\}:1\leq i,t\leq n. \label{subgraph}
\end{equation}
Observe that set $s_i$ is a Cartesian product set.\\
\end{proof}

\begin{definition}
As the form\eqref{subgraph}, we call term $\{v_i\} $ \emph{root set}, denote by $R(s_i)$. Call the right set \emph{leaf set}, denote by $L(s_i)$. Therefore, the unit subgraph $s_i$ can be abbr. by $s_i = R(s_i)\times L(s_i)$.
\end{definition}
~\newline
\textbf{Claim.} The cardinality of a multiple set is the number of difference members, not be the quantity of members. We reserve the notation $\{x\}^{\hash\omega}$ or $(x)^{\hash\omega}$ to represent a component  in a multi-set, the $\omega$ is the count of element $x$ and $\omega\geq 0$. We call \emph{group} for a component containing same and repeated elements. We define the group minus as that $x^{\hash s}\setminus x^{\hash t}=x^{\hash s-t}$, if and only if $s\geq t$. Then the difference value is 0 If $s\leq t$.%
\begin{definition}
If there is a multiple set $\tau_m$ and $\hash \tau_m=\hash\tau$, i.e. for each pairs $u\tau v\in\tau_m$ such that $u\tau v\in\tau$, then we call set $\tau_m$ \emph{multiple traversal relation}. For convenience, we use $(u,v)^{\hash\omega}$ denote each group in set $\tau_m$. The notation $\omega$ is the count of the pairs $(u,v)_{\in\tau_m}$.\\
\end{definition}

\begin{definition}
Let $g_i$ be a subset of multiple traversal relation $\tau_m$. If $\vert g_i\vert \geq 2$, then for two groups $\tau_i^{\hash \omega_i},\tau_j^{\hash \omega_j}\in g_i$ such that $\tau_i(1)=\tau_j(1)$. We call set $g_i$ \emph{weighted unit subgraph}, and reserve the subscript of set $g_i$ equals to the one of first element of each pair in set $g_i$. The collection of weighted unit subgraphs we denote by $g$.\\
\end{definition}

\begin{lemma}\label{t6}
Let $\tau_m$ be a multiple traversal relation and $g$ be a collection of weighted unit subgraphs on set $\tau_m$. Then set $g$ is the partition of set $\tau_m$.
\end{lemma}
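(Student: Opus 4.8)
The plan is to mirror the proof of Theorem~\ref{t2} almost verbatim, only lifting it from single pairs to groups. First I would write $g=\{g_1,g_2,\dots,g_N\}$, reserving each subscript for the common first-coordinate index shared by all groups collected in that weighted unit subgraph, and then check in turn the three conditions from the \textbf{Partition of a set} paragraph of Section~2, applied to $g$ over $\tau_m$.

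For the covering condition, take any group $(u,v)^{\hash\omega}\in\tau_m$. Since $\hash\tau_m=\hash\tau$, the underlying pair satisfies $(u,v)\in\tau$, so $u,v\in V$ and $u$ carries a label $i$; by the definition of weighted unit subgraph there is a component $g_i\in g$ gathering exactly the groups whose first coordinate is $v_i$, whence $(u,v)^{\hash\omega}\in g_i$. This gives $\tau_m\subseteq\bigcup_{g_i\in g} g_i$, and the reverse inclusion is immediate because each $g_i\subseteq\tau_m$ by definition; an assume-for-contradiction step — a leftover group $v_k\tau v_t$ in $\tau_m\setminus g$ that would have to sit inside $g_k$ already — closes the equality, exactly as in Theorem~\ref{t2}. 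The same construction exhibits, for each occupied label $i$, at least one group inside $g_i$, so $g_i\neq\varnothing$. For disjointness, suppose a group lies in $g_i\cap g_j$ with $i\neq j$; its first coordinate forces both $i$ and $j$ to equal that vertex's label, so $i=j$, a contradiction, hence $g_i\cap g_j=\varnothing$ whenever $i\neq j$. Assembling the three facts shows $g$ is the partition of $\tau_m$.

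The only non-routine point — the main obstacle — is the multiset bookkeeping, since the partition axioms in the Preliminaries are stated for ordinary sets. I would make explicit that here the objects being partitioned are the groups $(u,v)^{\hash\omega}$ taken atomically, with their multiplicities attached, and that no pair's multiplicity is ever split between two weighted unit subgraphs: a group is by definition a single indivisible component, and the first coordinate of its pair selects a unique label, so each group lands in exactly one $g_i$. Consequently $\hash g=\hash\tau_m$ is respected and the set-theoretic partition conditions transfer without change; everything beyond this observation is a direct transcription of the proof of Theorem~\ref{t2}.
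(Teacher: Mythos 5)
Your proposal is correct and follows essentially the same route as the paper's own proof: establish non-emptiness by introducing each group $(v_i,v_j)^{\hash\omega}$ into the component $g_i$ indexed by its first coordinate, rule out a leftover group in $\tau_m\setminus g$ by contradiction, and derive disjointness from the subscript identity $k=i=j$. Your explicit remark that groups are treated atomically so that no multiplicity is split between components is a clarification the paper leaves implicit, but it does not change the argument.
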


\begin{proof} 
Let $g$ be a collection of weighted unit subgraphs on multiple traversal relation $\tau_m$. As the definition of weighted unit subgraph, for a group $(v_i,v_j)^{\hash\omega}\in \tau_m$ and $\omega\geq 1$, there may be $g_i\in g$ such that group $ (v_i,v_j)^{\hash\omega}$ may be introduced to set $g_i$. Hence observe $g_i\neq\varnothing$ if $\exists (v_i,v_j)^{\hash\omega}\in \tau_m$.%

If there is $\tau_m\setminus g\neq\varnothing$, then at least we can have a group $\tau_i^{\hash\omega}\in(\tau_m\setminus g)$. Summarizing above, It is certainly that there may be a component $g_i\in g$ such that group $\tau_i^{\hash\omega}$ can be introduced to $g_i$ hence, $\tau_m\setminus g=\varnothing$ and $\tau_m= g$.%

Consider two components $g_i,g_j\in g$ with $i\neq j$ such that $g_i\neq g_j$. Assume to $g_i\cap g_j\neq\varnothing$ and at least a pair $v_k\tau v_t\in (g_i\cap g_j)$. Then, we have $k=i=j$ a contradiction to given condition $i\neq j$ as described in definition of weighted unit subgraph. Hence $g_i\cap g_j=\varnothing$.%

For satisfying three conditions for set $g$ on set $\tau_m$, we understand set $g$ is the partition of set $\tau_m$.\\
\end{proof}

\begin{definition}\label{d1}
There is a no-empty traversal relation $\tau$ on graph $G$. Consider each pair $u,v\in V$ and a trail $T$ on instance. If the pair $u,v$ lies on trail $T$ with a constraint of direction respect to the traversal relation $\tau$, then we call this constraint \emph{traversal visiting}.\\
\end{definition}

\begin{theorem}\label{t4}
There is a no-empty multiple traversal relation $\tau_m$ on an instance G(V). Then set $\tau_m$ characterizes the traversal visiting among all vertices on instance.
\end{theorem}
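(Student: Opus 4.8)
The plan is to use connectivity of $G$ together with the partition structure established in Lemma \ref{t6} to show that every vertex participates in a traversal visiting recorded by $\tau_m$. First I would recall that, since $G$ is no-empty and connected, for each pair $u,v\in V$ there exists at least one trail $T$ joining $u$ and $v$; fixing the left-to-right orientation of $T$ as in Definition \ref{d1} turns each step of $T$ into an ordered pair, and the direction constraint along $T$ is precisely a traversal visiting in the sense of Definition \ref{d1}. By the definition of the traversal relation and Theorem \ref{t1}, each such directed step $(x,y)$ lies in $\tau$, hence in $V^2$.

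Next I would track multiplicities. When a directed step $(x,y)$ is traversed more than once — because it occurs on several trails, or repeatedly while the instance is exhausted — it contributes a group $(x,y)^{\hash\omega}$ with $\omega\geq 1$ to the multiple traversal relation $\tau_m$; this is legitimate because $\hash\tau_m=\hash\tau$ by the definition of $\tau_m$, so $\tau_m$ carries exactly the underlying pairs of $\tau$ but with counts. Collecting these groups by their first coordinate and invoking Lemma \ref{t6}, the family $g$ of weighted unit subgraphs partitions $\tau_m$; thus every vertex $v_i$ that is the source of some traversal step is represented by exactly one weighted unit subgraph $g_i$, and the leaf set of $g_i$ lists every target reachable from $v_i$ on a trail together with its visiting count.

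Then I would close the loop in the other direction: every group in $\tau_m$ arises from a pair in $\tau$, hence from a pair $u,v$ that can be placed on a trail with a direction constraint, i.e. from a traversal visiting. Combining the two directions, $\tau_m$ is in multiplicity-respecting correspondence with the collection of directed steps that realize traversal visitings on $G(V)$, and by connectivity this collection reaches every vertex of the instance. Therefore $\tau_m$ characterizes the traversal visiting among all vertices on $G(V)$, which completes the argument.

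The main obstacle I expect is the looseness of the word ``characterizes'' in the statement: the real work is to fix an acceptable reading — that $\tau_m$ both records every traversal visiting with its count and contains nothing else — and then to use connectivity carefully so that ``among all vertices'' is genuinely covered rather than only the vertices that happen to appear as trail endpoints. The partition lemmas handle the bookkeeping; connectivity supplies the coverage.
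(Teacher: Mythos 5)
Your proposal takes a genuinely different route from the paper. The paper's proof is a local, per-pair case analysis on the weights: for each ordered pair $v_i,v_j$ it examines whether $(v_i,v_j)^{\hash\omega_i}$ and $(v_j,v_i)^{\hash\omega_j}$ are both present in $\tau_m$ and compares $\omega_i$ with $\omega_j$, thereby reading off from $\tau_m$ exactly which visitings are possible between the pair — none in one direction (directed graph), one each way (simple graph), several equal (multi-graph), or unequal (mixed graph). That enumeration of weight configurations \emph{is} the characterization the theorem asserts, and it is the content that Lemma \ref{t5} and the Section Summary later build on. Your proof instead argues globally: connectivity gives trails, trails give directed steps, steps give groups in $\tau_m$, and Lemma \ref{t6} organizes the bookkeeping; then you close the correspondence in the reverse direction. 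What your approach buys is an explicit two-way reading of ``characterizes'' and an honest treatment of the vagueness of the statement; what the paper's approach buys is the classification into graph classes, which is the part actually used downstream.

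There is one concrete soft spot you should repair. Connectivity guarantees a trail between any two vertices, but it does not make every ordered pair $(u,v)$ a \emph{step} of some trail; by Definition \ref{d1} a pair can lie on a trail with the direction constraint only if that ordered pair is already in $\tau$ (equivalently, has positive weight in $\tau_m$). So your first paragraph has the implication backwards: you cannot start from connectivity and conclude that each step of an arbitrary trail lies in $\tau$ as a way of covering all pairs. More importantly, your argument never addresses the asymmetric cases — a group present in one orientation but absent in the other, or present in both orientations with unequal weights — and those cases are precisely where the paper locates the characterization (directed versus mixed behavior). As written, your proof shows that $\tau_m$ is in multiplicity-respecting correspondence with the realizable traversal steps, which is true but weaker than the classification the paper extracts; adding the per-pair case split on $\omega_i$ versus $\omega_j$ would close that gap.
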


\begin{proof} 
Given an instance $G(V)$. Let $\tau_m$ be a multiple traversal relation on it. Consider each pair $v_i,v_j\in V$. If there is a group $(v_i,v_j)^{\hash\omega_i} \in\tau_m$ and no group $(v_j,v_i)^{\hash\omega_j}\in\tau_m$, it is obviously that there is no traversal visiting on direction $(v_j,v_i)$, i.e. there is impossible for ordered pair $(v_j,v_i)$ to lie on each trail on instance. We call this pair \emph{directed graph}. %

Let $ (v_i,v_j)^{\hash\omega_i},(v_j,v_i)^{\hash\omega_j}\in\tau_m$. When $\omega_i=\omega_j=1$, we say there exists a bidirected traversal visiting between pair $v_i,v_j$; the case is a \emph{simple graph}. If $\omega_i=\omega_j$ and $\omega_i\cdot \omega_j>1$, then there are several bidirected and equal visiting to each other. Observe the case is a \emph{multi-graph}. For $\omega_i\cdot \omega_j> 1$ and $\omega_i\neq \omega_j$, then there may be unequal visiting opportunities between the pair. This instance is usually called \emph{mixed graph}.\\
\end{proof}

\begin{lemma}\label{t5}
There is $\tau\subseteq\tau_m$ and $S\subseteq g$ on an instance $G(V)$.
\end{lemma}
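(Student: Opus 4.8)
The plan is to prove the two inclusions separately and in order, since the second one will rely on the first together with the partition results already in hand (Theorem~\ref{t2} and Lemma~\ref{t6}). Both inclusions are, at bottom, a matter of unwinding the definitions of \emph{multiple traversal relation} and \emph{weighted unit subgraph} against the $\hash$-convention for multi-sets; the only genuine work is deciding exactly what ``$\subseteq$'' should mean when one side is an ordinary set of pairs and the other a set of groups.

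For $\tau\subseteq\tau_m$, I would start from the defining property $\hash\tau_m=\hash\tau$: for every pair $u\tau v\in\tau$ there is a group $(u,v)^{\hash\omega}\in\tau_m$, and (as already used in the proofs of Theorem~\ref{t4} and Lemma~\ref{t6}) that count satisfies $\omega\geq 1$. Identifying a bare pair $(u,v)$ with the group $(u,v)^{\hash 1}$ and invoking the group-minus rule $x^{\hash 1}\setminus x^{\hash\omega}$, which vanishes precisely when $\omega\geq 1$, gives $(u,v)^{\hash 1}\subseteq(u,v)^{\hash\omega}$ and hence $u\tau v\in\tau_m$. Since $u\tau v$ was arbitrary, $\tau\subseteq\tau_m$; the degenerate case $\tau=\varnothing$ (for instance a lone isolated vertex) is immediate because $\varnothing$ is a subset of everything.

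For $S\subseteq g$, I would take an arbitrary unit subgraph $s_i\in S$, so that by its definition every member of $s_i$ has first coordinate $v_i$, i.e.\ $R(s_i)=\{v_i\}$. By Lemma~\ref{t6} the collection $g$ partitions $\tau_m$, and by the definition of weighted unit subgraph there is a unique component $g_i\in g$ all of whose groups have first element $v_i$. Now for each pair $v_i\tau v_j\in s_i$ we have $v_i\tau v_j\in\tau\subseteq\tau_m$ by the first part, so the corresponding group $(v_i,v_j)^{\hash\omega}$ lies in $\tau_m$; since $g$ is a partition and that group has first element $v_i$, it must fall in $g_i$. Hence $s_i$ is contained (in the group sense of the previous paragraph) in $g_i\in g$, and letting $s_i$ range over $S$ yields $S\subseteq g$.

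The step I expect to be the main obstacle is not a deep argument but pinning down this mixed notion of containment so that both statements are literally true as written: one must agree once and for all that $(u,v)$ and $(u,v)^{\hash 1}$ denote the same object and that a group is ``contained'' in another group over the same element whenever its count is no larger. Once that convention is fixed, both inclusions reduce to the bookkeeping above over the partitions supplied by Theorem~\ref{t2} and Lemma~\ref{t6}; the special configurations (empty $\tau$, isolated vertices, and components in which every count equals $1$, where in fact $s_i=g_i$) each need only a one-line check.
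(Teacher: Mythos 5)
Your proposal is correct and follows essentially the same route as the paper: both arguments rest on the defining identity $\hash\tau_m=\hash\tau$ and on identifying a bare pair $(u,v)$ with the group $(u,v)^{\hash 1}$, so that each element of $\tau$ reappears in $\tau_m$ with count $\omega\geq 1$. You are in fact somewhat more careful than the paper, which disposes of $S\subseteq g$ with a bare ``similarly'' and only states the identification for $\omega=1$, whereas you spell out the containment convention for groups and route the second inclusion explicitly through the partition structure of $g$ from Lemma~\ref{t6}.
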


\begin{proof} 
Let $\tau_m$ be multiple traversal relation and $\tau$ be traversal relation on an instance. As the described in definition of multiple traversal relation, we have $\hash \tau_m=\hash\tau$. For each group  $(u,v)^{\hash\omega}\in\tau_m$ , there is $(u,v)=(u,v)^{\hash\omega}$, if and only if $\omega=1$. Then $\tau\subseteq\tau_m$, similarly prove $S\subseteq g$.\\
\end{proof}

With Theorem\ref{t4}, consider given a connected graph, there can be these data structures on it as follow:
\begin{align*}
 G& = (V,~\tau_m).\\
V&=\{v_1,~v_2,~\cdots,~v_n\}:1\leq n\leq n.\\
\tau_m&=\{(v_i,v_j)^{\hash\omega_{i,j}}\}:i,j\in\{\mathbb{N}\}\setminus\{0\};~ \omega_{i,j}\geq 1.
\end{align*}
~\newline
\textbf{Claim. }We reserve the abbr. $G=(V,~\tau)$ or $G=(V, ~\tau_m)$ to represent a connected instance with no-empty set $\tau$ or $\tau_m$ respectively.\\
~\newline
\textbf{Section Summary. }In this section, author constructs the basic logic for graph, that is $\tau\subseteq V^2$ with the property of reflexive. In following, author will gradually abstract the subset of traversal relation to construct new relation for problems. The new glossary \emph{unit subgraph} indeed is an equivalent class partition in set $\tau$, because there are three properties of reflexivity, symmetry and transitivity in relation of equal first element of each pair in unit subgraph. In set theory, the equivalent class also is the partition of set, but it is the unique partition of set respect to certain relation. Therefore it is the essential data structure for our research in this paper with the feature of uniqueness. %

Lemma\ref{t5} states a fact that all connected graphs can be viewed as an instance of mixed graph on traversal visiting, such that the exact graph traversal algorithms we will show has to cover all connected graph, which method is data-oriented only.\\

\subsection{Exact Traversal Algorithm}
There exist two demands for graph traversal, traversing vertices and traversing edges. In this paper, author only introduces the problem of  traversing vertices. Because the data of traversing edges are huge, and the method is similar to traversing vertices too.\\

\subsubsection{Definition}
As the definition\ref{d1} of traversal visiting, we define a characteristic function $\phi_{\tau}:V^2\rightarrow\{0,1\}$ as follow
\begin{equation*}
\phi_{\tau} (u,v)= \left\{
\begin{array}{ll}
1,\qquad&\text{iff }\exists (u, v)\in\tau .\\
&~\\
0,\qquad&\text{otherwise}.
\end{array}
\right.
\end{equation*}

Because of the case $\tau\subseteq\tau_m$ with Lemma\ref{t5}, therefore we must consider the group weight $\omega\leq 0$. Then the characteristic function $\phi$ will be converted to map the group with weight to the binary set as follow
\begin{equation*}
\phi_{\tau} (u,v)= \left\{
\begin{array}{ll}
1,\qquad&\text{iff } (u, v)^{\hash \omega}\in\tau_m \wedge \omega>0 .\\
&~\\
0,\qquad&\text{otherwise}. 
\end{array}
\right.
\end{equation*}

Now when the program enumerates the possible vertices, the program only needs to scan the leaf sets and checks the weights. The characteristic function $\phi_{\tau}$ provides a method of judgment to ensure enumerating valid vertices. Let $A$ be the subset of leaf set. The weighted unit subgraph is entry parameter. We define the enumerating operator $\Phi $ as follow
\begin{equation*}
\Phi (g_i)\coloneq \left\{
\begin{array}{ll}
v_j\in A,\quad&\text{if }\phi_{\tau}(v_i,v_j)_{v_j\in L(g_i)}=1.\\
&~\\
\text{undefine, }\quad&\text{otherwise. }
\end{array}
\right.
\end{equation*}

\begin{definition}
Let $\beta_i$ be a subset of set $\tau$. If $\vert \beta_i\vert \geq 2$. For two arbitrary pairs $\tau_i,\tau_j\in\beta_i$ such that $\tau_i(2)=\tau_j(2)$, then we call set $\beta_i$ \emph{visiting set}. The collection of visiting sets we denote  by $\beta$; and  reserve the subscript of component in set $\beta$ equals to the one of then element in each pair in native component.%

Use $T(\beta_i)$ to denote the set of all first elements in pairs and the then elements set is $H(\beta_i)$. It is easy to prove visiting set is a Cartesian product set like unit subgraph. \\
\end{definition}

\begin{theorem}\label{a1}
Let $\beta$ be a collection of visiting sets on set $\tau$. Then set $\beta$ is the partition of set $\tau$.
\end{theorem}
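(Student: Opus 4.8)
The plan is to reprise the argument of Theorem~\ref{t2} almost verbatim, with the roles of the two coordinates of each pair interchanged, checking in turn the three conditions that define a partition of a set as recalled in the Preliminaries. So first I would write $\beta=\{\beta_1,\beta_2,\cdots,\beta_M\}$ and then dispatch the three clauses one at a time.

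\emph{No empty component.} Since the definition of traversal relation forbids the pairs $(\varnothing,u)$ and $(u,\varnothing)$ from lying in $\tau$, and each $\beta_i\subseteq\tau$ by the definition of visiting set, no component $\beta_i$ can contain such a pair and no $\beta_i$ can equal $\varnothing$; moreover any pair actually placed into $\beta_i$ witnesses $\beta_i\neq\varnothing$.

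\emph{Covering.} For each pair $(v_i,v_j)\in\tau$, the definition of visiting set produces a component of $\beta$ — the one whose subscript equals the then-element $v_j$ — containing $(v_i,v_j)$, so $\tau\subseteq\bigcup\beta$. For the reverse inclusion, suppose $\tau\setminus\beta\neq\varnothing$ and take a pair $v_k\tau v_t\in(\tau\setminus\beta)$; there is a component $\beta_t\in\beta$ to which $v_k\tau v_t$ may be introduced, so $v_k\tau v_t\in\beta$, a contradiction. Hence $\tau\setminus\beta=\varnothing$ and $\tau=\beta$. For the disjointness clause, take $\beta_i,\beta_j\in\beta$ with $i\neq j$ and assume $\beta_i\cap\beta_j\neq\varnothing$, say $v_k\tau v_t\in(\beta_i\cap\beta_j)$; since the subscript of a component of $\beta$ equals the then-element of each of its pairs, we get $t=i$ and $t=j$, forcing $i=j$, a contradiction, so $\beta_i\cap\beta_j=\varnothing$. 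Having verified the three conditions, set $\beta$ is the partition of $\tau$.

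I expect no genuine obstacle here, since the statement is the mirror image of Theorem~\ref{t2} and Lemma~\ref{t6} already demonstrates that this template transfers cleanly to a variant grouping. The one point that needs the same light bookkeeping as in Theorem~\ref{t2} is the clause $\vert\beta_i\vert\geq 2$ in the definition of visiting set: a pair $(v_i,v_j)$ that is the unique pair of $\tau$ with then-element $v_j$ still determines a (degenerate, one-element) block, and the covering step must read ``visiting set'' as admitting such blocks, otherwise $\bigcup\beta$ could miss exactly those pairs. Apart from this, the proof is routine.
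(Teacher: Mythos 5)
Your proposal follows essentially the same route as the paper's own proof, which likewise transfers the three-condition argument of Theorem~\ref{t2} to visiting sets by swapping the roles of the two coordinates (the paper even remarks that ``the proof is in same fashion with unit subgraph''). Your side observation about the clause $\vert\beta_i\vert\geq 2$ is a fair caveat that the paper itself does not address, but it does not alter the argument, so the two proofs agree in substance.
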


\begin{proof}
There is a collection $\beta$ of visiting sets on set $\tau$. As the described in definition of visiting set, for a pair $(v_i, v_j)\in\tau$, then there may exist a visiting set $\beta_j\subseteq\beta$ and $v_i\tau v_j\in\beta_j$. Hence, observe that for each component $\beta_i\in\beta$, there may be $\beta_i\neq\varnothing$ and $\tau\subseteq\beta$. %

Assume to $\tau\setminus\beta\neq\varnothing$ and at least a pair $\tau_k\in (\tau\setminus\beta)$. There may be a visiting set $\beta_k\in\beta$ such that pair $\tau_k$ can be introduced to set $\beta_k$. Hence there is a contradiction of $\tau_k\in\beta$ then, $\tau\setminus\beta=\varnothing$.%

Consider two components $\beta_i,\beta_j\in\beta$ with $i\neq j$ such that $\beta_i\neq \beta_j$. Assume to $\beta_i\cap \beta_j\neq\varnothing$ and a pair $v_k\tau v_t\in (\beta_i\cap \beta_j)$. With definition of visiting set, we can have $t=i=j$ a contradiction to given condition of $i\neq j$. Hence $\beta_i\cap \beta_j=\varnothing$%

Summarizing, we can understand set $\beta$ is the partition of set $\tau$. Indeed, the proof is in same fashion with unit subgraph, because they both are two equivalent classes on set $\tau$.\\
\end{proof}

\begin{definition}
Let $\nu_i$ be a subset of set $\tau_m$. If $\hash \nu_i\geq 2$. For two arbitrary groups $\tau_i^{\hash\omega_i},~\tau_j^{\hash\omega_j}\in\nu_i$, such that there is $\tau_i(2)=\tau_j(2)$. We call the set $\nu_i$ \emph{multiple visiting set}, the collection of multiple visiting sets denote by $\nu$. We reserve the subscripts of components in set $\nu$ equals to the ones of then elements in each group in native component. \\%
\end{definition}

\begin{theorem}\label{a2}
Let $\nu$ be a collection of multiple visiting sets on set $\tau_m$. Then set $\nu$ is the partition of set $\tau_m$.
\end{theorem}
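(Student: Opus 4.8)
The plan is to follow exactly the template already used for Theorem~\ref{a1} and Lemma~\ref{t6}, verifying in turn the three defining conditions of \emph{Partition of a set} from Section 2 for the family $\nu$ sitting inside the multiset $\tau_m$. First I would fix notation: write $\nu=\{\nu_1,\nu_2,\cdots,\nu_M\}$, where by the definition of \emph{multiple visiting set} the subscript of each component records the label of the common then element shared by every group it contains. The atoms being distributed here are the groups $(v_i,v_j)^{\hash\omega_{i,j}}$ of $\tau_m$, not the individual copies of a pair, so throughout the argument ``membership'' means membership of a whole group; this is legitimate because distinct groups are distinct elements, so the partition conditions (stated for ordinary sets) apply with $\tau_m$'s groups as the underlying elements.

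Next, for non-emptiness and covering: given any group $(v_k,v_t)^{\hash\omega}\in\tau_m$ with $\omega\geq 1$, the definition of multiple visiting set lets us place this group into the component $\nu_t$ indexed by its then element $t$; consequently every $\nu_i$ that is formed is non-empty, and $\tau_m\subseteq\bigcup_{\nu_i\in\nu}\nu_i$. For the reverse inclusion I would argue by contradiction as in Theorem~\ref{a1}: if $\tau_m\setminus\nu\neq\varnothing$, pick a group $\tau_k^{\hash\omega}$ in it; but that group carries a then element, hence can be introduced to the corresponding component $\nu_k$, so it already lies in $\bigcup\nu$, a contradiction. Hence $\bigcup_{\nu_i\in\nu}\nu_i=\tau_m$.

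For pairwise disjointness, take $\nu_i,\nu_j\in\nu$ with $i\neq j$ and suppose $\nu_i\cap\nu_j\neq\varnothing$, say a group $\tau_k^{\hash\omega}\in\nu_i\cap\nu_j$. By the defining property of a multiple visiting set the then element of this group equals the subscript of each container, so $i=k=j$, contradicting $i\neq j$; therefore $\nu_i\cap\nu_j=\varnothing$. Collecting the three facts shows $\nu$ is the partition of $\tau_m$, and indeed the reasoning runs in the same fashion as for unit subgraphs, since $\nu$ is simply the equivalence-class partition of $\tau_m$ with respect to equality of the then element.

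I do not expect a genuine obstacle here — the statement is the multiset analogue of Theorem~\ref{a1}, and the bookkeeping is identical to what already proved Theorem~\ref{t2}, Theorem~\ref{a1} and Lemma~\ref{t6}. The only point that needs a word of care is the degenerate case in which some vertex $v_t$ is the then element of exactly one group, so the would-be component has $\hash\nu_t=1$ and falls below the $\hash\nu_i\geq 2$ threshold in the definition; I would treat such a one-group set as the corresponding component of $\nu$, exactly as the paper tacitly does for singleton unit subgraphs, so that the covering condition is not broken.
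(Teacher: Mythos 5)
Your proposal is correct and follows essentially the same route as the paper: the paper's own proof of Theorem~\ref{a2} simply defers to the argument of Lemma~\ref{t6} (``with the same fashion of proof\ldots we need not do the repeated work again''), and what you have written is precisely that deferred three-condition check, adapted from first elements to then elements. Your extra remark about the singleton component falling below the $\hash\nu_i\geq 2$ threshold addresses a gap the paper leaves tacit, but it does not change the approach.
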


\begin{proof}
Indeed we aim to prove three conditions hold for set $\nu$ on set $\tau_m$. With the same fashion of proof in Theorem\ref{t6}, it is clearly for us to prove the fact on set $\tau_m$. Here we need not do the repeated work again. \\
\end{proof}

\begin{definition}
Let $\hat{V}$ be a collection of Cartesian product of set $V$ on an instance $G=(V,~\tau_m)$, having $V^N\in\hat{V}$ with $1\leq N\leq \infty$. Let $P$ be a member in set $\hat{V}$. If for each ordered pair $x_i,x_{i+1}$ lies on sequence $P$, such that there is ordered pair $(x_i,x_{i+1})\in\tau_m$, then we call sequence $P$ \emph{connected path}.\\
\end{definition}

\begin{definition}\label{p}
Let $P$ be a connected path on graph $G=(V,~\tau_m)$ and $\nu$ be a collection of multiple traversal visiting sets. If there is such an approach of cutting graph, for each group $\{v_i\}^{\hash m}\in P$ such that   $\nu_i\coloneq\nu_i\setminus\{\beta_i\}^{\hash m}$, then we call this approach \emph{equivalent visiting}. Sequence $P$ is called \emph{equivalent visiting path}.\\
\end{definition}

With definition\ref{p}, we define the equivalent visiting operator $\psi$ as follow

\begin{equation*}
\psi (\nu_i)\coloneq \left\{
\begin{array}{ll}
\omega \coloneq \omega- 1,\quad&\text{if }\phi_{\tau}(v_j,v_i)_{v_j\in T(\nu_i)}=1.\\
&~\\
\text{undefine, }\quad&\text{otherwise. }
\end{array}
\right.
\end{equation*}

\begin{lemma}\label{a3}
Let $P$ be an equivalent visiting path on graph $G=(V,~\tau_m)$. There is an approach of equivalent visiting on path $P$. Consider a group $(v_x, v_i)^{\hash\omega}\in\tau_m$ and $v_i\in H(\nu_i)$.  Then the weight $\omega$ may be converged to 0 by invoking equivalent visiting operator $\psi$ if group $\{v_i\}^{\hash m}$ lies on $P$ and $m\geq \omega$.
\end{lemma}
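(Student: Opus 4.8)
The plan is to read the statement as a \emph{termination of a monotone decreasing process}: each successful application of $\psi$ to the group $(v_x,v_i)^{\hash\omega}$ strips exactly one unit from its weight, the weights live in $\mathbb{N}$, and the equivalent visiting of $P$ supplies at least $\omega$ such applications, so $\omega$ is forced down to $0$.

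First I would unfold the definitions. By the (reweighted) definition of $\phi_{\tau}$ on $\tau_m$, for a predecessor $v_j\in T(\nu_i)$ the value $\phi_{\tau}(v_j,v_i)$ equals $1$ precisely when the group $(v_j,v_i)^{\hash\omega}$ still carries positive weight, i.e.\ $\omega>0$, and it equals $0$ otherwise. Hence $\psi(\nu_i)$ is \emph{defined} — and then executes the assignment $\omega\coloneq\omega-1$ — exactly while $\omega>0$, and becomes undefined the instant $\omega$ reaches $0$. In particular every successful invocation of $\psi(\nu_i)$ acting on $(v_x,v_i)^{\hash\omega}$ strictly decreases its weight by one, and once the weight is $0$ no later invocation can disturb it.

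Next I would set up the recursion. Writing $\omega_0=\omega$ for the initial weight of $(v_x,v_i)^{\hash\omega}$ and $\omega_{k+1}=\omega_k-1$ for the weight after the $(k{+}1)$-st successful call, we get a strictly decreasing sequence of natural numbers bounded below by $0$; it is therefore well founded and attains the value $0$ after exactly $\omega$ steps. It thus remains to show that the equivalent visiting of $P$ triggers at least $\omega$ such calls at $v_i$. By Definition~\ref{p}, the group $\{v_i\}^{\hash m}$ lying on $P$ forces $\nu_i\coloneq\nu_i\setminus\{\beta_i\}^{\hash m}$, i.e.\ it issues $m$ removals against $\nu_i$; since $m\ge\omega$, at least $\omega$ of these removals can be charged to the group $(v_x,v_i)^{\hash\omega}$, and by the previous paragraph those $\omega$ removals carry its weight to $0$. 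Combining this with the group-minus convention (the weight saturates at $0$, it never goes negative) gives the asserted convergence.

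The step I expect to be the main obstacle is the bookkeeping in the last paragraph. When the multiple visiting set $\nu_i$ contains several groups $(v_x,v_i)^{\hash\omega},(v_y,v_i)^{\hash\omega'},\dots$ sharing the endpoint $v_i$, one must justify that the $m$ removals released by the occurrences of $v_i$ on $P$ can be (or, under the equivalent-visiting discipline, actually are) distributed so that the particular group named in the statement receives all $\omega$ of the decrements it needs, rather than having part of that budget absorbed by its siblings. The hypothesis $m\ge\omega$, together with the group-minus rule $x^{\hash s}\setminus x^{\hash t}=x^{\hash s-t}$ (and its truncation at $0$), is exactly what makes such an allocation feasible; if one instead interprets $\omega$ as the aggregate weight carried by $\nu_i$, this allocation subtlety disappears entirely and only the monotone-termination argument of the first two paragraphs is needed.
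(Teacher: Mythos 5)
Your proposal is correct and follows essentially the same route as the paper's proof: both arguments observe that $\psi$ decrements $\omega$ by one exactly while the inner characteristic function $\phi_{v_j\tau v_i}$ returns $1$ (i.e.\ while $\omega>0$), iterate this $m$ times via $\nu_i\coloneq\nu_i\setminus(\beta_i)^{\hash m}$, and conclude convergence to $0$ with saturation once $\phi$ returns $0$. The allocation worry you raise in the last paragraph is resolved in the paper by the convention $\hash\nu_i=\hash\beta_i$ (made explicit in Theorem~\ref{a4}), under which each removal of $\beta_i$ decrements \emph{every} group in $\nu_i$ simultaneously, so your second reading is the intended one.
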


\begin{proof}
Let $P$ be an equivalent visiting path on an instance $G=(V,~\tau_m)$, on which there is an approach of equivalent visiting. We set there is a component $\nu_i\in\nu$ with $v_i\in H(\nu_i)$. Consider a group $(v_j,v_i)^{\hash \omega}\in \tau_m$. If $\omega= 0$, we can understand that equivalent visiting operator $\psi$ do nothing inducing from the inner characteristic function $\phi_{v_j\tau v_i}= 0$, as definitions of these functions. \\%

If $\omega> 0$, then the $\omega$ may be self-subtract-one and returned by operator $\psi$ inducing from inner function $\phi_{v_j\tau v_i}=1$. Consider the vertex group $\{v_i\}^{\hash m}\in P$ with $m\geq 1$. As described in definition of equivalent visiting, this case can lead to $\nu_i\coloneq\nu_i\setminus(\beta_i)^{\hash m}$, such that there is $\omega(m)\coloneq\omega(m-1) -1$ by iteratively invoking operator $\psi$. When $\omega=0$, the function $\psi_{v_j\tau v_i}$ would return with nothing, thus we can understand that while for $\omega<m$, the function $\psi$ can not continue to compute $\omega$ as the inducing from function $\phi_{v_j\tau v_i}=0$. Hence, the $\omega$ can be converged to 0.\\

\end{proof}

\begin{theorem}\label{a4}
Let $\nu$ be a collection of multiple visiting sets and $P$ be an equivalent visiting path on instance. If for each component $\nu_i\in\nu$ and each group $\tau_k^{\hash\omega_k}\in\nu_i$ such that $\omega_k\geq 1$, then for each vertex group $\{u\}^{\hash m}\in P$ and $u\in H(\nu_i)$, we have $m\leq Max(\omega_k)$.
\end{theorem}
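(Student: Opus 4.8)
The plan is to argue by contradiction, leaning on Lemma~\ref{a3} together with the group--minus convention introduced in the Claim. First I would fix notation: take a component $\nu_i\in\nu$ and list its groups as $\tau_k^{\hash\omega_k}$ with $k$ ranging over $T(\nu_i)$, so the hypothesis reads $\omega_k\geq 1$ for every such $k$; write $M=Max(\omega_k)$. Since $\nu_i$ is a multiple visiting set, $H(\nu_i)=\{v_i\}$ is a singleton, so the group $\{u\}^{\hash m}\in P$ with $u\in H(\nu_i)$ is precisely $\{v_i\}^{\hash m}$, and the claim amounts to bounding the multiplicity of $v_i$ along the equivalent visiting path $P$.

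Next I would make the effect of a single visit explicit. By Definition~\ref{p}, every occurrence of $v_i$ on $P$ forces the update carried out by the equivalent visiting operator $\psi$ on $\nu_i$, which by the word ``equivalent'' removes one copy uniformly from all groups of $\nu_i$; by the group--minus rule this sends a group of weight $\omega_k\geq 1$ to weight $\omega_k-1$ and leaves a group of weight $0$ at $0$. Iterating, after the $t$-th occurrence of $v_i$ on $P$ the group $\tau_k^{\hash\omega_k}$ carries weight $\omega_k-t$ while $t\leq\omega_k$ and weight $0$ once $t\geq\omega_k$. This is exactly Lemma~\ref{a3} applied to each incoming group $(v_x,v_i)^{\hash\omega_k}$: whenever $m\geq\omega_k$, the weight $\omega_k$ is driven to $0$.

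Now suppose toward a contradiction that $m>M$. Then $m>M\geq\omega_k$ for every $k$, so in particular $m\geq\omega_k$ and Lemma~\ref{a3} applies to each group; consequently, after at most $M$ occurrences of $v_i$ on $P$ every group of $\nu_i$ has weight $0$. But $m>M$ means $P$ contains an $(M+1)$-st occurrence of $v_i$, which would require a successful invocation of $\psi(\nu_i)$, hence some $v_j\in T(\nu_i)$ with $\phi_\tau(v_j,v_i)=1$, hence some group $(v_j,v_i)^{\hash\omega}\in\tau_m$ with $\omega>0$ inside $\nu_i$. This contradicts all groups of $\nu_i$ having weight $0$. Therefore $m\leq M=Max(\omega_k)$.

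The step I expect to be the main obstacle is the bookkeeping in the second paragraph: pinning down that one occurrence of $v_i$ on $P$ corresponds to exactly one successful application of $\psi$ and to one uniform decrement across all groups of $\nu_i$, i.e. turning the informal ``equivalent'' decrement of Definition~\ref{p} into a precise invariant, and checking that the floor-at-zero convention genuinely blocks the extra visit rather than silently doing nothing. Once that invariant is in hand, the contradiction is immediate from the definition of $\phi_\tau$ on the multiple traversal relation.
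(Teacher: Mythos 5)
Your proposal is correct and follows essentially the same route as the paper's own proof: both track the uniform decrement $\omega_k(s)=\omega_k(s-1)-1$ of every weight in $\nu_i$ per occurrence of $v_i$ on $P$, invoke Lemma~\ref{a3} to conclude all weights reach $0$ after $M=Max(\omega_k)$ occurrences, and then observe that the characteristic function $\phi_\tau$ returning $0$ blocks any further introduction of $v_i$. The only cosmetic difference is that you phrase the final step as a contradiction while the paper argues it directly (and attributes the blocking to the enumerating operator $\Phi$ rather than to $\psi$, which is where it actually lives in the paper's machinery).
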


\begin{proof}
Let $\nu$ be a collection of multiple visiting sets and $P$ be an equivalent visiting path on instance $G=(V,~\tau_m)$. For each component $\nu_i\in\nu$, we let $D=\omega_1, \omega_2,\cdots, \omega_k$ represent all weights of groups in set $\nu_i$.%

Consider vertex $u\in H(\nu_i)$ and having group $\{u\}^{\hash m}\in P$ with $m\geq 1$. As the definition of equivalent visiting, there is $\nu_i \coloneq\nu_i \setminus (\beta_i)^{\hash m}$. Because the $\hash \nu_i=\hash \beta_i$, we have that $\omega_k(m)=\omega_k - m$ for $\omega_k\in D$. It implies the fore equation can be view as an iterative equation $\omega_k(s)=\omega_k(s-1) - 1$ with $1\leq s\leq m$. Let $M=Max(\omega_k)$. When $M\leq m$, for each $\omega_k(M)=\omega_k - M$ such that there is $\omega_k(M)\leq 0$. Then we have each $\omega_k(M)=0$ with Lemma\ref{a3}, and the number set $D$ converges at 0. For each weight of group $\tau_k$ is equal to 0, the enumerating function $\Phi$ can not introduce vertex $u$ to path $P$ as a valid vertex again. Then for each vertex $u$, such that at most there are $M$ possibilities on path $P$. Namely, $m\leq Max(\omega_k)$. If $m> Max(\omega_k)$, we can set $\beta_i$ is a constant, then $\nu_i=\varnothing$ with formula $\nu_i \coloneq\nu_i \setminus (\beta_i)^{\hash m}$. Hence, the case does not exist with respect to invoking enumerating functiuon $\Phi$.%

Because of this theorem shows the maximum possibility of visiting a vertex, author call it \emph{Equivalent Visiting Maximum Value Theorem}.\\
\end{proof}

\begin{lemma}\label{a5}
 Let $P$ be an equivalent visiting path on a finite and no-empty instance $G=(V,~\tau_m)$. Then path $P$ can be convergent.
\end{lemma}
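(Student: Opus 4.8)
The plan is to attach to the construction of $P$ a nonnegative, integer-valued, monotone decreasing quantity that strictly drops at every step, so that the process must terminate after finitely many steps; reading ``$P$ is convergent'' as ``the equivalent visiting path has finite length and every group weight it consumes reaches $0$'', this yields the claim.

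First I would fix notation. Since $G=(V,\tau_m)$ is finite and no-empty, $n=\vert V\vert<\infty$ and $\tau_m=\{(v_i,v_j)^{\hash\omega_{i,j}}\}$ with each $\omega_{i,j}\geq 1$ a finite natural number; because $\tau_m$ has only finitely many groups, $\omega_{\max}:=\mathrm{Max}\{\omega_{i,j}\}$ exists and is finite. By Theorem\ref{a2} the collection $\nu$ of multiple visiting sets partitions $\tau_m$, so every group of $\tau_m$ sits in exactly one component $\nu_i$ with $v_i\in H(\nu_i)$. Set $\Omega=\sum_{(v_i,v_j)^{\hash\omega}\in\tau_m}\omega$; this is a finite nonnegative integer with $\Omega\leq n^2\,\omega_{\max}$.

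Next I would feed in the two finiteness inputs. By the Equivalent Visiting Maximum Value Theorem (Theorem\ref{a4}), for every vertex $u\in H(\nu_i)$ every group $\{u\}^{\hash m}$ that occurs on $P$ has $m\leq\mathrm{Max}(\omega_k)\leq\omega_{\max}$, so each vertex is appended to $P$ at most $\omega_{\max}$ times and hence $\vert P\vert\leq n\,\omega_{\max}<\infty$. In parallel, by Lemma\ref{a3} each invocation of the equivalent visiting operator $\psi$ triggered by appending a vertex either does nothing --- which by the definitions of $\phi_\tau$ and $\Phi$ happens exactly when the relevant weight is already $0$, so $\Phi$ cannot extend $P$ through that vertex --- or decreases exactly one group weight by $1$, and therefore decreases $\Omega$ by $1$. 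Hence, along the construction of $P$ the values of $\Omega$ form a strictly decreasing sequence of nonnegative integers, so $\Omega$ reaches $0$ after at most $n^2\,\omega_{\max}$ steps.

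Finally I would close the argument: once $\Omega=0$ every weight in every $\nu_i$ equals $0$, so for every candidate vertex $\phi_\tau$ returns $0$ and the enumerating operator $\Phi$ is undefined; no further vertex can be appended to $P$, so $P$ is finite and its consumed weights have all converged to $0$, i.e. $P$ is convergent. The step I expect to be the main obstacle is the bookkeeping linking one construction step of $P$ to one unit decrease of $\Omega$: one must check that the equivalent-visiting update removing $(\beta_i)^{\hash m}$ from $\nu_i$ corresponds to exactly $m$ applications of $\psi$, neither under- nor over-counting, so that the strict monotone decrease of $\Omega$ genuinely holds; everything else is a finiteness argument resting entirely on Theorem\ref{a4} and Lemma\ref{a3}.
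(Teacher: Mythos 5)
Your proposal is correct and takes essentially the same route as the paper: the paper's iterated function $F_{P}(\beta_i,m)=\tau_m\setminus(\beta_i)^{\hash m}$, which is monotone decreasing and converges at $0$ because the instance is finite, is precisely your potential $\Omega$ in set-valued form, and both arguments conclude that once every weight vanishes the enumerating operator $\Phi$ can introduce no further vertex. Your explicit appeal to Theorem\ref{a4} to bound $\vert P\vert$ plays the role of the paper's closing contradiction that an infinite path would require an infinite weight, so the two proofs differ only in bookkeeping, not in substance.
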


\begin{proof}
Given a finite and no-empty graph $G=(V,~\tau_m)$. Let $P$ be an equivalent visiting path. Consider each group $\{u\}^{\hash m}\in P$ with $m\geq 1$, then there is $u\in H(\nu_i)$ such that $\nu_i\coloneq\nu_i\setminus (\beta_i)^{\hash m}$. Further we can understand the term equals to $\tau_m\coloneq\tau_m\setminus (\beta_i)^{\hash m}$ with set $\nu$ is partition of set $\tau_m$ as Theorem\ref{a2}. Let $F_{P}(\beta_i, m)=\tau_m\setminus (\beta_i)^{\hash m}$ and $\beta_i$ be a constant.

 We can see sequence $P$ as a discrete point-sequence for iterated function $F_{P}$, and have

\begin{equation*}
F_{P}(\beta_i, m)=\left\{
\begin{array}{ll}
\tau_m, & \text{iff } m=0.\\
~&~\\
F_{P}(\beta_i, (m-1)) - \beta_i, &\text{iff }m\geq 1 \text{ and } H(\beta_i)\in P.
\end{array}
\right.
\end{equation*}

Observe the function $F_{P}$ is iterative and monotone decreasing with input $\beta_i$. Because of instance being finite, therefore $F_{P}$ can be convergent at 0. Then path $P$ converges, to which function $\Phi$ can not introduce any vertex with all weights equal to 0.%

Similarly consider the end-node $v_i$ on current path $P$ with $v_i\in R(g_i)$. For each group $\tau_k^{\hash \omega}\in g_i$, if each $\omega$ equals to 0, even nor $F_{P}$ converges at 0, then path similarly converges.%

Assume the path is infinite. As the described in Theorem\ref{a4}, there are infinite weights, and then it implies that at least a number of a pair is infinity. Then, this assumption contradicts the condition of finite instance.\\
\end{proof}

\begin{lemma}\label{a6}
Let $P$ be an equivalent visiting path on graph $G=(V,~\tau_m)$ and $\nu$ be a collection of multiple visiting sets. Consider each maximum weight $M_i$ in each component $\nu_i\in\nu$. Then $\vert P\vert \leq \sum_{i=1}^{n}M_i$.
\end{lemma}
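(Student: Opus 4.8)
The plan is to bound $\vert P\vert$ by distributing the vertex occurrences on $P$ across the components of the partition $\nu$ and applying the Equivalent Visiting Maximum Value Theorem (Theorem \ref{a4}) to each component separately. First I would record that, since $\nu$ is the partition of $\tau_m$ (Theorem \ref{a2}) and the subscript convention forces $H(\nu_i)=\{v_i\}$, every vertex $v_i$ that appears on $P$ is the then-element of exactly one component $\nu_i\in\nu$. Writing $\{v_i\}^{\hash m_i}$ for the group of all occurrences of $v_i$ on $P$ (with $m_i=0$ when $v_i$ does not lie on $P$), and viewing $P$ as the sequence, equivalently the multiset, of its vertex occurrences, we have the bookkeeping identity $\vert P\vert=\sum_{i=1}^{n}m_i$. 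Lemma \ref{a5} guarantees $P$ converges, so each $m_i$ is finite and the sum is well defined.

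Next, for each index $i$ with $m_i\geq 1$ I would invoke Theorem \ref{a4} with the vertex $u=v_i\in H(\nu_i)$ and the group $\{v_i\}^{\hash m_i}\in P$: since the hypothesis gives $\omega_k\geq 1$ for every group $\tau_k^{\hash\omega_k}\in\nu_i$, the theorem yields $m_i\leq Max(\omega_k)=M_i$. For the remaining indices, $m_i=0\leq M_i$ holds trivially because $M_i\geq 1$. Summing over $i=1,\dots,n$ then gives
\[
\vert P\vert=\sum_{i=1}^{n}m_i\leq\sum_{i=1}^{n}M_i,
\]
which is the assertion of the lemma.

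The main obstacle is not analytic but definitional: making the identification ``one component of $\nu$ per vertex, one application of Theorem \ref{a4} per component'' fully rigorous. In particular I expect the delicate point to be the start vertex of $P$, which contributes an occurrence to $\vert P\vert$ but is appended without any prior call of the operator $\psi$; one must check that the count $m_i$ appearing in the decomposition of $\vert P\vert$ is exactly the count $m_i$ controlled by Theorem \ref{a4}, i.e. that the equivalent-visiting mechanism decrements the weights attached to $v_i$ precisely once for each occurrence of $v_i$ on $P$ that the theorem is meant to count. Once that correspondence is pinned down, the summation step is immediate.
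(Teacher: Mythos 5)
Your proposal is correct and follows essentially the same route as the paper: both arguments use Lemma \ref{a5} to guarantee the path converges (so the occurrence counts are finite), apply Theorem \ref{a4} to bound the number of occurrences of each vertex $v_i$ on $P$ by the maximum weight $M_i$ of its component $\nu_i$, and sum over all vertices. Your version is somewhat tidier in making the decomposition $\vert P\vert=\sum_{i=1}^{n}m_i$ explicit before summing, whereas the paper phrases the same idea by first asserting equality when $F_P$ converges to $0$ and then noting the path may be forced to terminate earlier, but the underlying argument is identical.
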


\begin{proof}
Given a finite and no-empty graph $G=(V,~\tau_m)$. Let $P$ be an equivalent visiting path on graph $G$. With Lemma\ref{a5}, the path $P$ is convergent. Consider each component $\nu_i\in\nu$, in which the maximum weight of group is $M_i$. With Theorem\ref{a4}, we understand that each vertex $v_i\in V$ can lie on path $P$ for at most $M_i$ possibilities. Hence there is $\vert P\vert = \sum_{i=1}^{n}M_i$, if $F_{P}=0$ as described in Lemma\ref{a5}.\\%

Consider there is such case, of which each weight of group in weighted unit subgraph equals to 0 as input for enumerating function $\Phi$. The function $\Phi$ can return with nothing. The path $P$ can be forced to converge even nor $F_{P}=0$. Its length can be shorter than $\sum_{i=1}^{n}M_i$. Hence, $\vert P\vert \leq \sum_{i=1}^{n}M_i$ it holds.\\
\end{proof}

\begin{lemma}\label{a7}
If there are self-cycles on an instance $G=(V, \tau)$, then they are invalid traversal visiting on equivalent visiting.
\end{lemma}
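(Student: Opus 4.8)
The plan is to show that a self-cycle can never be produced as a step of an equivalent visiting path, because the instant its endpoint is first visited the equivalent-visiting bookkeeping drives that loop's weight to zero, after which the enumerating operator can never propose it again. First I would fix a self-cycle: a pair $(v_i,v_i)\in\tau$, which on $G=(V,\tau)$ carries unit weight, i.e.\ it is the group $(v_i,v_i)^{\hash 1}$ of $\tau_m$ (cf.\ Lemma\,\ref{t5}). The structural fact I would record is that this one group belongs simultaneously to the weighted unit subgraph $g_i$ (its first element is $v_i$, so $v_i\in L(g_i)$) and to the multiple visiting set $\nu_i$ (its then element is $v_i$, so $v_i\in H(\nu_i)$); since $g$ and $\nu$ are both partitions of the single multiset $\tau_m$ and each places this group in exactly one block, decrementing it inside $\nu_i$ is the very same act as decrementing it inside $\tau_m$, hence inside $g_i$ — exactly the reading already used in the proof of Lemma\,\ref{a5}.

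Next I would trace an equivalent visiting path $P$. If $P$ never reaches $v_i$ there is nothing to prove, so assume $v_i$ lies on $P$. At the first appearance of $\{v_i\}$ on $P$ — whether $v_i$ is the start vertex or is entered through an incoming edge — Definition\,\ref{p} fires $\nu_i\coloneq\nu_i\setminus(\beta_i)^{\hash 1}$, i.e.\ one invocation of $\psi(\nu_i)$; since $\hash\nu_i=\hash\beta_i$, every positive-weight group of $\nu_i$ loses one unit, so the weight of $(v_i,v_i)^{\hash 1}$ becomes $0$ under the group minus, and it stays $0$ thereafter (nothing in the process raises a weight; cf.\ Lemma\,\ref{a3}).

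Then I would close the loop with the characteristic and enumerating operators. With the self-cycle group at weight $0$, by definition $\phi_{\tau}(v_i,v_i)=0$, so $\Phi(g_i)$ — which consults exactly the leaf set $L(g_i)$, to which $v_i$ belongs — can never return $v_i$ as a valid next vertex while the traversal sits at $v_i$. Hence the ordered pair $(v_i,v_i)$ is never appended to $P$ as a traversal step; in the terminology of the paper the self-cycle is \emph{invalid traversal visiting} under equivalent visiting, which is the claim. A quicker route to the same conclusion, worth recording, is Theorem\,\ref{a4}: on $G=(V,\tau)$ every present edge carries weight $1$, so each vertex is visited at most once, whereas traversing $(v_i,v_i)$ would force $v_i$ to occur twice in a row on $P$ — a contradiction.

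The step I expect to be the real obstacle is making the first-visit decrement watertight in the corner where $v_i$ is the start vertex: one must check that Definition\,\ref{p} counts the start occurrence of $v_i$ so that $\psi$ actually fires there, and, if it does not, run the argument instead at the first edge leaving $v_i$. I would also flag, for the multigraph analogue, that a loop of weight $\omega\ge 2$ on $G=(V,\tau_m)$ does not collapse in one shot, so there one must argue separately that ordinary (non-loop) visits to $v_i$ still consume loop weight and hence the number of honest loop steps cannot match $\omega$ — but under the stated hypothesis $G=(V,\tau)$ the weight-to-zero argument above settles everything.
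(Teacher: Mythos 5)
Your argument is essentially the paper's own proof: both observe that the self-loop pair $v_i\tau v_i$ lies in $s_i\cap\beta_i$ with weight $1$, that the first appearance of $v_i$ on $P$ triggers $\tau\coloneq\tau\setminus\beta_i$ forcing that weight to $0$, and that $\Phi$ consequently can never propose $v_i$ via the loop. The extra material you add (the Theorem~\ref{a4} shortcut and the start-vertex and multigraph caveats) goes beyond what the paper records but does not change the route.
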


\begin{proof}
Given a no-empty graph $G=(V,\tau)$. Let $v_i\in V$ and pair $(v_i,v_i)\in\tau$. There is pair $v_i\tau v_i\in( s_i\cap \beta_i)$ and its weight equals 1 as the definitions of unit subgraph and visiting set. If there is an approach of equivalent visiting on instance. When the vertex $v_i$ lies on path $P$, as the definition of equivalent visiting, we have $\tau\coloneq \tau\setminus\beta_i$. Then the weight of pair $(v_i,v_i)$ will be forced to subtract 1, such that enumerating operator $\Phi$ can not introduce vertex $v_i$ again. Hence, the approach can not traverse self-cycle, i.e. the self-cycle equals to an empty traversal visiting on instance.\\
\end{proof}

\begin{descussion}
Here author briefly shows the viewpoint about the case of self-cycle on simple graph: the self-cycle at least is an instance of \emph{Russell Paradox}. Look at the term $v_i\tau v_i\in( s_i\cap \beta_i)$ above. When we partition the set $\tau$ with the relations of $\tau_i(1)=\tau_j(1)$ or $\tau_i(2)=\tau_j(2)$, we can not say the self-cycle is arriving or leaving on vertex. Thus the method of equivalent visiting on simple graph would filter out self-cycle as the symmetry relation of leaving and arriving.
\end{descussion}

\begin{definition}
On an instance, if an approach obtains those equivalent visiting paths depend on iteratively and alternately invoking enumerating function $\Phi$ and equivalent visiting function $\psi$ to enumerate vertices and modify the traversal relations, we call this approach \emph{Based On Table Search}, abbr. by \emph{BOTS}.
\end{definition}

\begin{lemma}\label{a8}
The approach BOTS can enumerate every equivalent visiting path on a simple graph $G=(V,\tau)$.
\end{lemma}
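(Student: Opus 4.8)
The plan is to prove completeness of BOTS by induction on the length of an arbitrary equivalent visiting path, using two structural facts. The enumerating operator $\Phi$ is set-valued: applied to $g_i$ it returns \emph{all} leaves $v_j\in L(g_i)$ whose group weight is still positive, so a BOTS run branches over every legal next vertex. And on a simple graph every group weight equals $1$, so the residual of $\tau$ under equivalent visiting shrinks in a way that is trivial to track. Finiteness of the whole search is already in hand from Lemma\ref{a5} together with the length bound of Lemma\ref{a6}.

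First I would fix an equivalent visiting path $P = x_0, x_1, \dots, x_\ell$ on $G=(V,\tau)$ and follow the BOTS run that begins the search at $x_0$ (a complete BOTS starts from every vertex, so this run is among those it performs). The invariant to maintain is: after BOTS has emitted the prefix $P_k = x_0,\dots,x_k$, the traversal table it currently holds equals the residual relation obtained by applying the equivalent visiting rule of Definition\ref{p} along $P_k$, i.e. $\tau$ with $(\beta_{x_1})^{\hash 1},\dots,(\beta_{x_k})^{\hash 1}$ removed. This is well posed on a simple graph because every weight is $1$, so each call of $\psi$ is a single decrement from $1$ to $0$ and no ambiguity in the bookkeeping is possible. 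For the inductive step: since $P$ is a genuine equivalent visiting path, the pair $(x_k,x_{k+1})$ has not yet been consumed along $P_k$, so it still carries positive weight in the residual relation; hence $\phi_\tau(x_k,x_{k+1})=1$ and $x_{k+1}\in\Phi(g_{x_k})$. BOTS, branching over all of $\Phi(g_{x_k})$, has a branch that selects $x_{k+1}$, and invoking $\psi$ on the corresponding multiple visiting set $\nu_{x_{k+1}}$ re-establishes the invariant for $P_{k+1}$. Self-cycles require no special handling: by Lemma\ref{a7} they never occur on an equivalent visiting path over $G=(V,\tau)$.

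Next I would close the induction at the endpoint. When $P$ terminates at $x_\ell$, the definition of an equivalent visiting path (and the convergence argument of Lemma\ref{a5}) says there is no continuation of positive residual weight; by the invariant BOTS holds the same residual table, so $\Phi(g_{x_\ell})$ is undefined and this branch of BOTS halts having output exactly $P$. Conversely Lemma\ref{a6} gives $\ell\le\sum_{i=1}^{n}M_i$, which on a simple graph is $n$ because every $M_i=1$ by Theorem\ref{a4}; and Lemma\ref{a5} makes each branch converge. Hence the BOTS search tree is finite and every branch is run to completion, so every equivalent visiting path appears as the output of some branch, which is the claim.

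The step I expect to be the real obstacle is making that state invariant airtight: one must say precisely which table (the current $\nu$, equivalently the residual $\tau$) BOTS carries between a $\Phi$-step and the following $\psi$-step, check that alternating $\Phi$ and $\psi$ along $P_k$ produces the \emph{same} collection $\nu$ that Definition\ref{p} assigns to $P_k$, and dispose of the degenerate case in which a vertex is still graph-adjacent but all its incident weights have already dropped to $0$ --- so that BOTS's halting criterion and the ``no valid continuation'' endpoint condition of an equivalent visiting path coincide exactly. Once this invariant is established the induction is routine, and the simple-graph hypothesis is precisely what keeps the weight bookkeeping trivial.
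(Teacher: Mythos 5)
Your proposal is correct and rests on the same key observation as the paper's own proof: when a vertex $v_i$ is placed on the path, $\psi$ removes only the incoming visiting set $\beta_i$, so the outgoing pair to the next vertex keeps weight $1$ and is returned by $\Phi$, while repeated visits are blocked and termination coincides with $\Phi$ returning nothing. You merely package this local argument as an explicit induction with a residual-table invariant, which is a cleaner presentation of the same route rather than a different one.
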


\begin{proof}
Given a no-empty graph $G=(V,\tau)$. Let $P$ be a connected path on graph. Consider a pair $(v_i,v_k)\in\tau$ with $v_i\neq v_k$. If vertex $v_i$ lies on path $P$, BOTS would invoke operator $\psi$ such that $\tau\coloneq\tau\setminus\beta_i$ with Lemma\ref{a5}. If having pair $(v_k,v_i)\in\tau$, then there can be $v_k\tau v_i\in\beta_i$, moreover its weight equals to 0. But for pair $v_i\tau v_k\notin\beta_i$ such that it weight is still 1. Hence, BOTS can introduce vertex $v_k$ to path $P$ by invoking the function $\Phi$ to scan the leaf set in subgraph $s_i$. Hence, function $\psi$ can not affect this work of function $\Phi$.%

With the weight of pair $v_j\tau v_i$($\forall v_j\in L(s_i)$) is equal to 0, while the BOTS scan the leaf set in subgraph $s_j$, it can not enumerate vertex $v_i$ as the next valid vertex. The method of equivalent visiting prevents repeated visiting, but not to block enumerating vertices, which weights are equal to 1.\\%

Consider the reasons of search terminating. Let vertex $v_i$ be current node on path $P$. If $L(s_i)\setminus P=\varnothing$, as the described in Theorem\ref{a4} the enumerating operator $\Phi$ can return with nothing, then the length of path $P$ may be less than or equal to $n$. If $\vert P\vert < n$, then the path is a dead-end path. Otherwise, the path is a Hamiltonian path. Namely, the BOTS may enumerate all connected paths on graph $G$. \\
\end{proof}

\begin{descussion}
Author gave two equivalent classes for traversal relation, so that we can define two operators on these classes. Their works can interactively constraint each other, according to the constraint of traversal visiting and method of equivalent visiting. Consequently, the approach of BOTS has no necessary to be a recursive method. Because there is not any demand of traversing a mixed graph or multi-graph, therefore author will not argument those problems in this paper. Hence all works stop at simple graph traversal. %
\end{descussion}

\subsubsection{BOTS Algorithm}
Here we will discuss the problem on the level of program. Summarize the augments above, we can obtain some conclusions as follow: The traversal relation can be organized as a table, in which unit subgraph can be a unit of data. We can evaluate the longest length of path. When the enumerating function returns an empty set, we know this search work on current path is over.%

Hence, we need define three set for approach as follow: First is the set \emph{Stack}, in which there are the path waiting for search. Set \emph{P} is second, which is a path containing a sequence of vertices in process of current exploring. And then set \emph{R} stores the final results and returns finally. The following pseudocode for the approach is given as Algorithm

\renewcommand\arraystretch{1.1}
\begin{longtable}{ p{120mm}}
\toprule  
\textbf{Algorithm 1: BOTS}\\
\toprule  
\textbf{input} graph $G=\{ g_1,g_2,\cdots,g_n\}$\\
\quad\quad \quad  set $\bar{G}=G$\\
\quad\quad \quad  set $P~\leftarrow v_i$\\
\quad\quad\quad set $ Stack$\\
\quad\quad\quad set $R=\varnothing$\\
\textbf{output}  $R$\\
\textbf{00}\quad                         \textbf{While}$(Stack\neq\varnothing$ or $P\neq \varnothing)$\\
\textbf{01}\qquad 			\textbf{If }$(P=\varnothing)$  \textbf{Than }$P=\emph{\textbf{pop}}( Stack);$\\
\textbf{02}\qquad                       \textbf{For} $1 \rightarrow \vert P \vert$ \textbf{do}\\
\textbf{03}\qquad   \quad          $\bar{G}\coloneq\psi(\nu_x) ~\Longleftarrow~\exists v_x\in P$;\\
\textbf{04}\qquad                      $A=\Phi(g_{s})~\Longleftarrow~ x_s=\emph{\textbf{end}}( P)$;\\        
\textbf{05}\qquad		        \textbf{If }$(A= \varnothing)$  \textbf{Than }$ R ~\leftarrow ~ P;~\bar{G}=G$;\\
\textbf{06}\qquad \quad           \textbf{Else }\textbf{For }$1\rightarrow \vert A \vert$ \textbf{ do}\\               
\textbf{07}\qquad \qquad		 $Stack~\leftarrow ~ P+v_i~\Longleftarrow~\exists v_i\in A$; \\ 
\textbf{08}\quad 	\textbf{output }$R=p_1,p_2,\cdots,p_N$;\\			  
\bottomrule
\end{longtable}
~\newline
\textbf{Algorithmic Complexity. }
The algorithmic complexity of BOTS focuses on the works of iterative reading and writing the table. In the process of search, program need record each info. of traversals, such that program update a table with size of $mn$(for $m=\vert T(\nu)\vert$). Consequently, in the worst-case  the loop times for a completed path is equal to the length of the final path. We have the gain of loop times for a path  is: $\lambda n$(for $\lambda\geq 1$) in worst-case. The runtime complexity of obtain a completed path is
\[T=\lambda n\cdot nm = \lambda m n^2.\]
When the instance is a \emph{r-regular} graph, for the $\lambda=1$ such that the runtime complexity is $O(mn^2)$. While the instance is a $K_n$ \emph{completed} graph, for the $m=n-1$, then the runtime complexity is $O(n^3)$. The probelm about how the algorithmic complexity for an instance will be discussed in following with optimized algorithm together.

\subsubsection{OBOTS Algorithm}
This subsection, author introduces the optimized algorithm without copying whole table, which author calls \emph{OBOTS}. The function $\Phi$ only need compare the weight with the count $k$ of a vertex appearing in current path, as Theorem\ref{a4}. The program needs not to copy, scan and modify the whole table. The pseudocode is in following
\renewcommand\arraystretch{1.1}
\begin{longtable}{ p{120mm}}
\toprule  
\textbf{Algorithm 2: OBOTS算法}\\
\toprule   
\textbf{input} graph $G=\{ g_1,g_2,\cdots,g_n\}$\\
\quad\quad \quad  set $P~\leftarrow v_i$\\
\quad\quad\quad set $ Stack~\leftarrow P$\\
\quad\quad\quad set $R=\varnothing$\\
\textbf{output}  $R$\\
\textbf{00}\quad                              \textbf{While}$(Stack\neq\varnothing)$\\
\textbf{01}\qquad                    $x_e= \emph{\textbf{end}}(P)~\Longleftarrow~ P=\emph{\textbf{pop}}( Stack);$ \\
\textbf{02}\qquad  		     $L(g_e)~\Longleftarrow~ g_e\in G ~\Longleftarrow~  x_e\in V$;\\        
\textbf{03}\qquad                    \textbf{For }$1~\rightarrow \vert L(g_e) \vert$ \textbf{ do}\\
\textbf{04}\qquad \quad                  $k= \emph{\textbf{ count}}(v_j\in P)~\Longleftarrow~\exists v_j\in  L(g_e) $;\\
\textbf{05}\qquad\quad                  $(\omega_j > k)$? $A~\leftarrow~v_j$ : \textbf{continue};\\
\textbf{06}\qquad 				\textbf{If }$(A=\varnothing)$  \textbf{Than }$ R ~\leftarrow ~ P$;\\                
\textbf{07}\qquad	\quad			\textbf{Else For }$1\rightarrow \vert A\vert$\textbf{ do}\\
\textbf{08}\qquad	\qquad			 $Stack~\leftarrow ~ P+v_i~\Longleftarrow~\exists v_i\in A$; \\ 
\textbf{09}\quad 		\textbf{output }$R=p_1,p_2,\cdots,p_N$;\\			
\bottomrule
\end{longtable}
~\newline
\textbf{Algorithmic Complexity. }
The loop times similarly is equal to the length of a completed path $\lambda n$. In per-loop, program need a comparing between two arrays with size of $m$ and $c\cdot n$(for $0<c\leq\lambda$). Hence, in worst-case the runtime complexity of a final path is
\[m\cdot\lambda n(\lambda n+1)/2 \leq m\lambda^2 n^2 .\]

The runtime complexity for a path is $O(mn^2)$, if instance is a \emph{r-regular} graph. And then $K_n$ \emph{completed} graph is $O(n^3)$. The advantage of OBOTS is the search job and table can be separated on different machines.\\
~\newline
\textbf{Problem Complexity. }
The enumerating function in BOTS or OBOTS, both returns a subset of leaf set. In the worst-case, there is $\vert A\vert = m$, which is the cardinality of leaf set. Consequently, the number of backtrace paths may increase quickly as exponential. For an instance, if we can use the quantity of paths to represent the complexity, then we reserve the word \emph{Breadth} to denote the search breadth of an instance. Hence, we have to write the complexity function is $O(mn^2B)$ for BOTS and OBOTS, which latter \emph{B} is the inherent breadth of an instance. Hence, because $mn$ equals to the quantity of pairs in set $\tau$, we can write the term as $O(n\vert\tau\vert B)$ for simple graph.

\subsubsection{Exp. of Graph Traversal}
In this subsection, there three experiments: First is to test whether the algorithm is valid or no for graph traversal. These objects for experiment are completed graph, which we have understood the $Breadth=(n-1)!$ for each instance. Second we test several familiar figures, and then we can learn how much the Hamiltonian path and Hamiltonian cycle. Third we explore the breadth of simple graph and how relevant is between breadth and shape of figure.%

The tests were executed on one core of an Intel Dual-Core CPU T4400 @ 2.20GHz 2.20GHz, running Windows 7 Home Premium. The machine is 64bit system type, clocked at 800MHz and has 4.00 GB of RAM memory. The executive program is on console, which compiled by C++11\footnote{Code::Blocks 12.11 IDE; http://www.codeblocks.org}. The implemented algorithm is OBOTS. %

The results are listed in following Table 3. Columns mean as follow: Columns $n$ indicate the number of vertices on an instance. Columns L.C. give the total of overall loops for searching. Columns B. show the \emph{Breadth} on an instance. Columns (1)/(2) mean the value that how much loops to obtain a path in average-case. The 5th columns list the actual runtime for testing on an object. We set $n=5,6,7,8,9,10,11,12$.\\

\begin{center}
\renewcommand\arraystretch{1.0}
\begin{longtable}{ p{10mm}|| p{24mm}| p{20mm}| p{24mm}| p{26mm}}
\caption{\textbf{$K_n$ Completed Graph Search}}\\
\toprule
n&L.C.(1)&B(2)&(1)/(2)&R.T.\\
\toprule  
n=5&65&24&2.708333333&4 ms\\ 
\midrule 
n=6&326&120&2.716666667&13 ms\\ 
\midrule 
n=7&1 957&720&2.718055556&24 ms\\ 
\midrule 
n=8&13 700&5 040&2.718253968&175 ms\\ 
\midrule 
n=9&109 601&40 320&2.718278769&1 s 440 ms\\ 
\midrule 
n=10&986 410&362 880&2.718281526&13 s 970 ms\\ 
\midrule 
n=11&9 864 101&3 628 800&2.718281801 &2 m 51 s 271 ms\\ 
\midrule 
n=12&108~505~112&39~916~800&2.718281826 &36 m 3 s 915 ms\\
\bottomrule
\end{longtable}
\end{center}

Each \emph{Breadth} on instance is $(n-1)!$ equals our expected number before. Because each objects is completed graph, therefore we know that the cardinality of set $A$ of subset of leaf set is $n-1$ after the first enumerating stage. and the search breadth increases to $n-1$. After the second stage, the search breadth increases to $(n-1)(n-2)$. Hence, we can learn that in the $k$th stage, the search breadth would be $P_{n-1}^k$. We can count the loops as follow:

\begin{align*}
C&=1+(n-1)+(n-1)(n-2)+\cdots+(n-1)\cdots 2 + (n-1)\cdots 1\\
~&~\\
&=P^0_{n-1}+P^1_{n-1}+P^2_{n-1}+\cdots+P^{n-2}_{n-1}+P^{n-1}_{n-1}
\end{align*}

While multiplying each term at the right side with $ (n-1)!/(n-1)!$, then

\[C = (1/P^{n-1}_{n-1}+1/P^{n-2}_{n-1}+\cdots+1/P^{1}_{n-1}+1+1)(n-1)!\]

Let

\[ E=1+1+1/P^{1}_{n-1}+1/P^{2}_{n-1}+\cdots+1/P^{n-1}_{n-1}\]

When in case $n\rightarrow \infty$, the expression $E$ is the approximative formulate of \emph{Euler constant}. Here are the \emph{Euler constant} $e = 2.718281828459$, which precision is 12 decimal after the dot. It is clear that the number is 8 decimal after the dot for $n=12$. The $Row_{(9,2)}, Row_{(9,3)}$ is the integers for ratio to \emph{Euler constant}. \\

Consequently, it proves our algorithm is valid for graph traversal. There are five figures for our testing in following. Each instance is simple graph, some is the grid figures and the $4th$ is the dodecahedron on 2D.\\

~\newline
\begin{center}
\includegraphics[width=120mm]{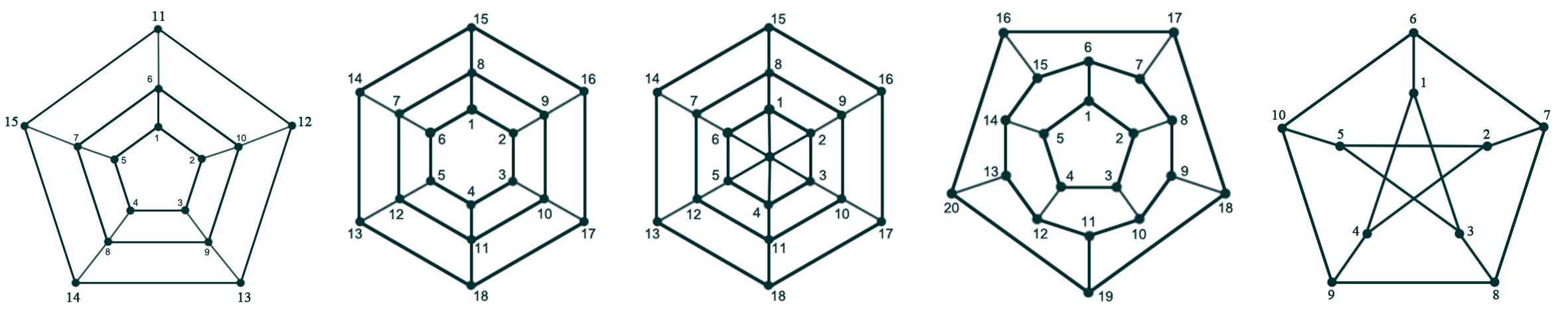}\\
~\\
\textbf{Figure 1}
\end{center}

In this experiment, we will record the loop times, breadth, actual runtimes in following Table 4: \\
\begin{center}
\renewcommand\arraystretch{1.0}
\begin{longtable}{ p{15mm}|| p{15mm}| p{18mm}| p{24mm}| p{26mm}}
\caption{\textbf{Simple Graph Exp. 1}}\\
\toprule
No.&L.C.(1)&B(2)&(1)/(2)&R.T.\\
\toprule   
1(n=15)&5 575&1 504&3.706781915&49 ms\\ 
\midrule 
2(n=18)&21 525&5 750&3.743478261&213 ms\\ 
\midrule 
3(n=19)&177 352&50 354&3.522123367&1 s 670 ms\\ 
\midrule 
4(n=20)&12 538&3 120&4.018589744&124 ms\\ 
\midrule 
5(n=10)&274&72&3.805555556&10 ms\\ 
\bottomrule
\end{longtable}
\end{center}

In following Table 5, Columns B shows the breadth of instance. Columns E.B gives the estimated value of breadth relevant with $n$. Columns H.P shows the quantity of Hamiltonian paths. Similar to Columns H.C there are the quantity of Hamiltonian cycles.\\
\renewcommand\arraystretch{1.0}
\begin{longtable}{ p{15mm}|| p{15mm}| p{22mm}| p{20mm}| p{26mm}}
\caption{\textbf{Simple Graph Exp. 2}}\\
\toprule
No.&B&E.B&H.P&H.C\\
\toprule  
1(n=15)& \quad 1 504&$\quad< 9\cdot P^2_{14}$&\quad 234&\quad 60\\ 
\midrule 
2(n=18)&\quad  5 750&$\quad<2\cdot P^3_{17}$&\quad 556&\quad172\\ 
\midrule 
3(n=19)&\quad  50 354&$\quad<11\cdot P^3_{18}$&\quad 4 390&\quad 1 008\\ 
\midrule 
4(n=20)&\quad  3 120&$\quad<10\cdot P^2_{19}$&\quad 162&\quad 60\\ 
\midrule 
5(n=10)&\quad  72&$\quad=P^2_9$&\quad 24&\quad 0\\ 
\bottomrule
\end{longtable}

Intuitively, we can get the conclusion of $Breadth=P^m_{n-1}$(for $m=L(g)$). Unlucky, it is wrong. The following experiment will give the new conclusion.\\

The object is a cycle-sequence, the quantity of cycles we denoted by $z$. For the first and the end cycle, on which there are $k$ vertices respectively. On each medium cycles, there are $2k$ vertices, such that on two adjacent cycles there are two $k$ vertices connect to each other. Hence, the cardinality of each leaf set is $m=3$, and the cardinality $\vert \tau\vert$ identically equals to $mn$. While change the number $z$, the number $m,n$ will not be changed. Indeed, for such type figure, if the $k=5\wedge z=3$, the figure would be the isomorphic to regular dodecahedron on 2D. In this experiment, we firstly let $n\in\{36,~44,~52,~60\}$ with $z=3$, then $k\in \{9,~11,~13,~15\}$ .

\renewcommand\arraystretch{1.0}
\begin{longtable}{ p{8mm}|| p{26mm}| p{15mm}| p{16mm}| p{36mm}}
\caption{\textbf{Simple Graph Exp. 3}}\\
\toprule
k=&B&H.P&H.C&R.T.\\
\toprule  
9& 804 226&1 412&300&1 m 2 s 986 ms\\ 
\midrule 
11&11 474 516&3 858&836&8 m 10 s 710 ms\\ 
\midrule 
13&158 293 248&10 258&2 080&1 h 57 m 46 s 50 ms\\ 
\midrule 
15&2 141 758 872 &27 044&1 130&28 h 33 m 59 s 672 ms\\ 
\bottomrule
\end{longtable}

Now we can evaluate $Breadth = CP_n^{\lceil k/z \rceil}$ from Table 6. Then we test the shape of figure to observe how the shape effects to the given instance for breadth and the numbers of Hamiltonian path and Hamiltonian cycle. The number $n$ would be $48$ not to change. Set the number $z\in\{3,4\}$. It leads to the number $k\in\{12,8\}$. The results is in Table 7 in following

\renewcommand\arraystretch{1.1}
\begin{longtable}{ p{8mm}|| p{26mm}| p{15mm}| p{16mm}| p{36mm}}
\caption{\textbf{Simple Graph Exp. 4}}\\
\toprule
z=&B&H.P&H.C&R.T.\\
\toprule  
3& 42 751 934&6 290&1 284&32 m 37 s 311 ms\\ 
\midrule 
4&56 999 358&7 272&656&41 m 36 s 134 ms\\ 
\bottomrule
\end{longtable}

It is obviously that the overall data structure has not been changed, but what it changed is the distribution of vertices on those cycles. Hence, the two actual runtimes is closed but the differences of other data are remarkable, so that we need more experiments to research. Here author have to stop going on this title.\\
~\newline
\textbf{Section Summary. }To sum up above, the advantage of BOTS algorithm is exact and general. You may unconcern what calss is given instance. On program level, this approach is simple, directed and easy to update. You can easily decompose it for different task of parallel or distributed computing. But the weakness is remarkable for that the approach lacks controlling for special works, so that it should make huge unnecessary data. Due to this reason, we need too research for the logic problems. The new problem arises that the approach randomly selects a vertex as start-node for traversal, and then what is the invariant in this problem.\\

\subsection{Trail, Path and Cycle}
\begin{definition}
Given a no-empty graph $G=(V,~\tau)$. Let $\hat{\tau}$ is a collection of $\tau^1, \tau^2,\cdots, \tau^N$ with $1\leq N\leq \infty$. Let $\pi$ be a component of set $\hat{\tau}$ and for each pair $\tau_i\in\pi$ such that $\tau_i(1)\neq\tau_i(2)$. If there exists $\vert \pi\vert>1$ and each ordered pair $\tau_i, \tau_{i+1}\in\pi$ such that $\tau_i(2)=\tau_{i+1}(1)$, then we call set $\pi$ \emph{trail}, the collection of trails is denoted by $\Pi$.\\
\end{definition}

\begin{definition}
Given a no-empty instance $G=(V,\tau)$. We may call $G$ \emph{connected}, If and only if for two arbitrary vertices $u, v\in V$ with $u\neq v$ such that there is at least a trail $\pi$ and $u,v\in\pi$. The number of ordered pairs on trail $\pi$ is called \emph{length}, denote by $\vert \pi\vert$. \\
\end{definition}

\begin{definition}
Consider a trail $\pi_k$. We call trail $\pi_k$ \emph{path}, if and only if for each vertex $u\in\pi_k$ such that
\[ u\in\bigcap_{1\leq N\leq 2}(\tau_{i})_{N}:i\in\{1,2,\cdots,\vert \pi_k\vert\}, \tau_i\in\pi_k.\]
The collection of paths  denote by $\vec{\pi}$.\\
\end{definition}

\begin{definition}
Consider a trail $\pi_k$. We call set $\pi_k$ \emph{cycle}, if and only if for each vertex $u\in\pi_k$ such that
\[ u\in\bigcap_{N=1}^2(\tau_{i})_{N}:i\in\{1,2,\cdots,\vert \pi_k\vert\}, \tau_i\in\pi_k.\]
The collection of cycles denote by $\mathring{\pi}$.\\
\end{definition}

\begin{definition}
Given a no-empty path $\vec{\pi}_k=(\tau_1,\tau_2\,\cdots , \tau_N)$. Let $A$ be a sequence set of vertices $x_1,x_2,\cdots,x_k$. If for each vertex $x_i\in A$ such that $x_i=\tau_i(2)$ with $1\leq i< N$ and $\tau_i\in\vec{\pi}_k$, then we call set $A$ \emph{medium vertices set} of path $\vec{\pi}_k$.\\
\end{definition}

\begin{theorem}\label{c1}
Let A be the medium vertices set of path $\vec{\pi}_k$. Consider pair $ x_i,~x_j\in A$ with $ i\neq j$, then $ x_i\neq x_j$.
\end{theorem}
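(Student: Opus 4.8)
The plan is to prove Theorem~\ref{c1} by contradiction, exploiting the defining property of a \emph{path} that every vertex on it occurs in the intersection of both coordinate-projections of its constituent pairs. First I would set up notation: write $\vec{\pi}_k=(\tau_1,\tau_2,\cdots,\tau_N)$ and let $A=(x_1,x_2,\cdots,x_k)$ be its medium vertices set, so by definition $x_i=\tau_i(2)$ for $1\leq i<N$. Since $\vec{\pi}_k$ is a trail, we also have $\tau_i(2)=\tau_{i+1}(1)$, hence each medium vertex $x_i$ equals $\tau_i(2)=\tau_{i+1}(1)$ and therefore lies in both a second-coordinate slot and a first-coordinate slot; this is exactly the path condition $u\in\bigcap_{N=1}^{2}(\tau_i)_N$ being witnessed.

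Next I would assume for contradiction that there exist indices $i\neq j$ (say $i<j$) with $x_i=x_j=:u$. The key step is to show this forces the vertex $u$ to appear in the second coordinate of two distinct pairs $\tau_i$ and $\tau_j$, i.e.\ the trail ``arrives at'' $u$ twice. I would then argue that because $u=x_i=\tau_i(2)=\tau_{i+1}(1)$ and also $u=x_j=\tau_j(2)=\tau_{j+1}(1)$, the trail leaves $u$ twice as well, via $\tau_{i+1}$ and $\tau_{j+1}$. Now I invoke the definition of path: for $u$ to satisfy $u\in\bigcap_{N=1}^{2}(\tau_\ell)_N$ consistently across the sequence, the occurrences of $u$ as a second coordinate and as a first coordinate must be ``paired up'' uniquely — a vertex on a path is visited exactly once, so it cannot be both $\tau_i(2)$ and $\tau_j(2)$ for $i<j<N$. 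Concretely, if $u$ occurred as $\tau_i(2)$ and $\tau_j(2)$ with $i<j$, then the segment $\tau_{i+1},\cdots,\tau_j$ would form a sub-trail from $u$ back to $u$, i.e.\ a cycle, contradicting that $\vec{\pi}_k$ is a path (a path contains no repeated vertex, which is what the intersection condition encodes, as opposed to the cycle definition which allows the closing repetition).

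The main obstacle I anticipate is that the paper's definitions of \emph{path} and \emph{cycle} differ only subtly — both use an intersection over $N\in\{1,2\}$ of coordinate projections, with the path version writing $\bigcap_{1\leq N\leq 2}$ and allowing $i$ to range over $\{1,\dots,|\pi_k|\}$ while the cycle version writes $\bigcap_{N=1}^{2}$ over the same range — so I must be careful to extract the genuine distinguishing feature (that in a path the endpoints $\tau_1(1)$ and $\tau_N(2)$ are \emph{not} forced into the double intersection, whereas in a cycle they are, which is what makes a path acyclic and hence vertex-simple). I would therefore phrase the contradiction as: a repeated medium vertex $x_i=x_j$ produces a closed sub-trail, so every vertex of that sub-trail — including $x_i$ — would satisfy the stronger cycle condition, collapsing $\vec{\pi}_k$'s status as a path. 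Since the definitions make ``path'' and ``cycle'' mutually exclusive for the relevant vertices, this is the contradiction, and the theorem follows. The remaining details are routine index bookkeeping, so I would not belabor them.
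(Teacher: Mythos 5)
Your setup matches the paper's: you correctly identify that a repeated medium vertex $u=x_i=x_j$ would occur as $\tau_i(2)=\tau_{i+1}(1)$ and as $\tau_j(2)=\tau_{j+1}(1)$, so that $u$ lies in three or four distinct pairs of the sequence. But the way you close the contradiction does not work in this paper's system, for two reasons. First, you appeal to ``a path contains no repeated vertex, which is what the intersection condition encodes'' --- that is precisely the statement of Theorem~\ref{c1} itself, so invoking it as a premise is circular. Second, your actual mechanism is that the segment $\tau_{i+1},\dots,\tau_j$ forms a closed sub-trail, i.e.\ a cycle, and that ``the definitions make path and cycle mutually exclusive.'' They do not: Theorem~\ref{c5} proves $\mathring{\pi}\subseteq\vec{\pi}$, every cycle \emph{is} a path in this framework, so exhibiting a cyclic sub-structure cannot by itself contradict $\vec{\pi}_k$ being a path.

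The paper's own proof stops exactly where you should have stopped: from $x_i=x_j$ it concludes $x_i\in\bigcap_{k\in\alpha}\tau_k$ with $\alpha=\{i,i+1,j,j+1\}$, so $\vert\alpha\vert=3$ (when $j=i+1$) or $\vert\alpha\vert=4$, and a vertex lying in the common intersection of three or more pairs directly violates the path condition $u\in\bigcap_{1\leq N\leq 2}(\tau_i)_N$ (this is the same principle later isolated as Theorem~\ref{c2}, that $\bigcap_{i=1}^{N}\tau_i=\varnothing$ for $N\geq 3$ on a path). You had all the ingredients for this; the gap is that you routed the final step through a path-versus-cycle dichotomy that the paper explicitly refutes, rather than through the cardinality of the set of pairs containing the repeated vertex.
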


\begin{proof}
Given a no-empty path $\vec{\pi}_k$. Let A be the medium vertices set of path $\vec{\pi}_k$ and pair $x_i,x_j\in A$ with $ i\neq j$. Observe there is $x_i=\tau_i(2)=\tau_{i+1}(1)$ and $x_j=\tau_j(2)=\tau_{j+1}(1)$ with the definition of medium vertices set and trail. We assume to $x_i=x_j$, thus we can have $x_i\in\bigcap\tau_k:k\in \alpha,~\alpha=\{i,~i+1,~j,~j+1\}$. If $j=i+1$, then $\vert \alpha \vert=3$. Otherwise $j\neq i+1$ having $ \vert \alpha \vert =4$, similar to number $j$. However all these cases contradict the definition of path. Hence, $x_i\neq x_j$ i.e. the set $A$ is not a multi-set.\\
\end{proof}

\begin{lemma}\label{c1_1}
Given a no-empty path $\vec{\pi}_j$ and $N=\vert \vec{\pi}_j\vert$. Let $A$ be the medium vertices set of path $\vec{\pi}_j$. If $N>2$, then $A\subseteq\vec{\pi}_j\setminus \{\tau_1,~\tau_N\}$.
\end{lemma}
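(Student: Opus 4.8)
The plan is to unpack the definitions of \emph{path} and \emph{medium vertices set} and then argue by contradiction that neither $\tau_1$ nor $\tau_N$ can be a member of $A$. First I would recall that, for the path $\vec{\pi}_j=(\tau_1,\tau_2,\cdots,\tau_N)$, the medium vertices set is $A=\{x_1,x_2,\cdots,x_{N-1}\}$ where $x_i=\tau_i(2)$ for $1\le i<N$; in particular each element of $A$ is realized as the second coordinate of some pair $\tau_i$ with $i\le N-1$, and by the trail condition also as $\tau_{i+1}(1)$. So every vertex in $A$ lies on \emph{two} consecutive pairs of the trail, i.e. it belongs to the intersection $\bigcap_{N=1}^2(\tau_i)_N$ for two distinct indices.

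Next I would isolate what is special about the endpoints $\tau_1$ and $\tau_N$ \emph{as vertices on the path}. The natural reading is that $\tau_1$ here denotes the start vertex $\tau_1(1)$ and $\tau_N$ the end vertex $\tau_N(2)$ of the path (consistent with how the paper writes $\vec{\pi}_j\setminus\{\tau_1,\tau_N\}$ to mean dropping the two terminal pairs, hence the two terminal vertices from consideration). The start vertex $\tau_1(1)$ appears as a first coordinate only (it is the head of no pair on a path, by the definition of path forbidding any vertex from sitting on more than the two coordinate-slots that the path structure allows, combined with Theorem~\ref{c1} which rules out repetition among medium vertices); similarly $\tau_N(2)$ appears as a second coordinate only, namely of $\tau_N$, and $N\notin\{1,\dots,N-1\}$, so $\tau_N(2)$ is never an $x_i$. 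I would then show: if $\tau_1(1)\in A$, say $\tau_1(1)=x_i=\tau_i(2)$ for some $i\le N-1$, then the start vertex lies on both $\tau_1$ (as first coordinate) and on $\tau_i$ and $\tau_{i+1}$ — giving it membership in at least three pairs, contradicting the defining constraint of a path exactly as in the proof of Theorem~\ref{c1}. The hypothesis $N>2$ is what guarantees there is ``room'' for this contradiction, i.e. that the index $i+1$ is genuinely a new pair distinct from $\tau_1$; for $N=2$ the claim can fail, which is why the hypothesis is stated.

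Having shown $\tau_1(1)\notin A$ and $\tau_N(2)\notin A$, I would conclude that every $x_i\in A$ equals $\tau_k(2)$ for some $k$ with $1\le k\le N-1$ and, being neither terminal vertex, is in fact a vertex sitting strictly inside the pair-list; since each such $x_i=\tau_i(2)=\tau_{i+1}(1)$ with $2\le i+1\le N-1$ available (again using $N>2$), the pair $\tau_{i+1}$ carrying it as a first coordinate is one of $\tau_2,\dots,\tau_{N-1}$, none of which is $\tau_1$ or $\tau_N$. This exhibits each element of $A$ as lying on a pair in $\vec{\pi}_j\setminus\{\tau_1,\tau_N\}$, which is the desired inclusion $A\subseteq\vec{\pi}_j\setminus\{\tau_1,\tau_N\}$ (reading the right-hand side, as the paper does, as the vertex set carried by those interior pairs).

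I expect the main obstacle to be purely notational rather than mathematical: the paper conflates a pair $\tau_i$ with the vertices $\tau_i(1),\tau_i(2)$ it carries, and writes $A\subseteq\vec{\pi}_j\setminus\{\tau_1,\tau_N\}$ where $A$ is a vertex set and $\vec{\pi}_j$ is a list of pairs, so I must fix a consistent interpretation (elements of $A$ appear among the coordinates of the pairs left after deleting the two end pairs) and then the argument is just a careful index bookkeeping built on Theorem~\ref{c1} and the definition of path. The one genuine content point to get right is the role of the hypothesis $N>2$: it is exactly what ensures the ``middle'' pairs $\tau_2,\dots,\tau_{N-1}$ form a nonempty list and that the contradiction-producing index $i+1$ in the endpoint argument is distinct from $1$, so I would flag that explicitly where it is used.
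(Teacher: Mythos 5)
Your proposal is correct, but it takes a genuinely different route from the paper. The paper proves the lemma by induction on the length $N$: it first checks that the claim fails for $N=2$, verifies the base case $N=3$ directly (showing $A$ coincides with the vertices carried by $\tau_2$), and then extends a sub-path $\vec{\pi}_k$ to $\vec{\pi}_{k+1}$ by appending $\tau_{k+1}$, observing that the new medium vertex $x_k\in\tau_k\cap\tau_{k+1}$ already lies on the interior pair $\tau_k$. You instead argue directly: each $x_i=\tau_i(2)=\tau_{i+1}(1)$ lies on the two consecutive pairs $\tau_i$ and $\tau_{i+1}$, and for $N>2$ at least one of the indices $i,i+1$ falls in $\{2,\dots,N-1\}$, so every element of $A$ is carried by an interior pair; you supplement this with a contradiction argument (in the style of Theorem~\ref{c1}) showing the terminal vertices $\tau_1(1)$ and $\tau_N(2)$ cannot appear in $A$, since that would place a vertex on three distinct pairs. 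Your direct bookkeeping is arguably cleaner and avoids the induction entirely, while the endpoint-exclusion step, though true and clarifying, is strictly speaking redundant once each $x_i$ has been exhibited on an interior pair; the paper's induction, for its part, is what lets it also record the side remark $\vert A\vert=N-1$ along the way. Both arguments rest on the same necessary interpretive choice, which you identify explicitly: reading $A\subseteq\vec{\pi}_j\setminus\{\tau_1,\tau_N\}$ as saying that every vertex of $A$ lies on one of the pairs $\tau_2,\dots,\tau_{N-1}$, exactly as the paper does implicitly in its $N=3$ case.
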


\begin{proof}
Given a no-empty path $\vec{\pi}_j$ and $N=\vert \vec{\pi}_j\vert$. Let $A$ be the medium vertices set of path $\vec{\pi}_j$. Consider $N=2$. Let ${\pi}_j=(\tau_1,~\tau_2)$. Then $A\subseteq\tau_1\cap\tau_2$ and $\vec{\pi}_j\setminus \{\tau_1,~\tau_2\}=\varnothing$. Hence,  $A\subseteq\vec{\pi}_j\setminus \{\tau_1,~\tau_N\}$ it does not hold if $N=2$.%

Consider $N>2$. For each $\tau_i\in {\pi}_j$ and $1<i<N$, such that we have $\tau_{i-1}(2)=\tau_i(1)$ and $ \tau_i(2)=\tau_{i+1}(1)$ with the definition of trail. Observe there is pair $\tau_i\in (\tau_{i-1}\cup\tau_{i+1})$. Thus for $N=3$ such that there is pair $\tau_2\in (\tau_1\cup\tau_3)$. For $x_i,x_2\in A$, we have $x_1\in(\tau_1\cap \tau_2)$ and $ x_2\in(\tau_2\cap \tau_3)$. Hence, for pair $\tau_2=x_1\tau x_2$ having $A=\tau_2$ and $ A=\vec{\pi}_j\setminus\{\tau_1,\tau_3\}$, i.e. $A=\vec{\pi}_j\setminus\{\tau_1,\tau_N\}$ holds if $N=3$.%

We set a sequence $\vec{\pi}_k$ on path $\vec{\pi}_j$ having $\vec{\pi}_k\subseteq \vec{\pi}_j$, and $\vert \vec{\pi}_k\vert = k$ with $3<k<N$. Let $A_k$  be the medium vertices set of path $\vec{\pi}_k$. Moreover we can set $A_k\subseteq\vec{\pi}_k\setminus \{\tau_1,~\tau_k\}$ holds. Further, there may be a sub-sequence $\vec{\pi}_{k+1}$ on path $\vec{\pi}_j$ with $\vec{\pi}_{k+1}=(\vec{\pi}_k,~\tau_{k+1})$. With the definition of trail, there may be vertex $x_k\in (\tau_k\cap\tau_{k+1})$. Let $A_{k+1}$ be the medium vertices set of path $\vec{\pi}_{k+1}$, then there is $A_{k+1}=(A_k,~x_k)$. Observe that we have $x_k\in\vec{\pi}_{k+1}\setminus \tau_{k+1}$ such that $A_{k+1}\subseteq \vec{\pi}_{k+1}\setminus\{\tau_1,~ \tau_{k+1}\}$. Hence it $A\subseteq\vec{\pi}_j\setminus \{\tau_1,~\tau_N\}$ holds if $N\geq 3$. However $\tau_1(1)=\tau_N(2)$ or otherwise, we can have $\vert A\vert = N-1$.\\
\end{proof}

\begin{theorem}\label{c2}
Let $\vec{\pi}_j$ be a path. For $\vert\vec{\pi}_j\vert=N$ and $N\geq 3$, then there is  $\bigcap_{i=1}^N\tau_i=\varnothing$.
\end{theorem}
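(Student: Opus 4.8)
The plan is to show that for a path $\vec{\pi}_j$ of length $N \geq 3$, the intersection of all its pairs (viewed as two-element sets $\tau_i = \{\tau_i(1), \tau_i(2)\}$) is empty. The strategy is to rule out membership of any vertex in this intersection by a short case analysis on where a hypothetical common vertex would sit along the trail. So first I would suppose, toward a contradiction, that some vertex $u$ lies in $\bigcap_{i=1}^N \tau_i$, meaning $u \in \tau_i$ for every $i \in \{1,\dots,N\}$. In particular $u \in \tau_1$ and $u \in \tau_2$, and more generally $u \in \tau_i \cap \tau_j$ for all $i,j$.

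Next I would invoke the structural facts already available. By the definition of trail, consecutive pairs share exactly the linking vertex: $\tau_i(2) = \tau_{i+1}(1)$, and by the definition of path every vertex appears in at most two consecutive pairs (this is exactly what Theorem~\ref{c1} and Lemma~\ref{c1_1} encode — the medium vertices are pairwise distinct and a vertex can belong to at most $\tau_{i-1}, \tau_i, \tau_{i+1}$ only through coinciding endpoints, which the path condition forbids beyond two consecutive pairs). So if $u \in \tau_i$ for some $i$, then $u$ can belong to at most one more pair, necessarily adjacent to $\tau_i$; it cannot belong to a third pair. Since $N \geq 3$, the family $\{\tau_1, \tau_2, \tau_3\}$ already contains three pairs, and $u$ would have to lie in all three — but $\tau_1$ and $\tau_3$ are non-adjacent (they share no forced vertex), so $u \in \tau_1 \cap \tau_2 \cap \tau_3$ forces a vertex repetition of the kind excluded by the path definition, i.e. the argument in the proof of Theorem~\ref{c1} applied to $\alpha = \{1,2,3\}$ or $\{1,3\}$. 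This contradiction gives $\bigcap_{i=1}^N \tau_i = \varnothing$.

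I would organize the write-up by first disposing of the base case $N = 3$ directly (three pairs, a common vertex forces $u \in \tau_1 \cap \tau_3$ with $\tau_1, \tau_3$ non-consecutive, contradicting the path condition exactly as in Theorem~\ref{c1}), and then observing that for $N > 3$ the intersection over all $N$ pairs is contained in the intersection over just $\tau_1, \tau_2, \tau_3$, so it is a fortiori empty. This reduces everything to the three-pair case and keeps the proof short.

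The main obstacle, as usual in this paper, is less a mathematical difficulty than pinning down precisely what "vertex belongs to a pair" and "path" mean in the author's set-based encoding, so that the intersection $\bigcap \tau_i$ is genuinely a set of vertices and the path axiom really does cap multiplicity at two consecutive pairs. Once that identification is made cleanly — treating each $\tau_i$ as the set $\{\tau_i(1),\tau_i(2)\}$ and reusing the distinctness argument from Theorem~\ref{c1} — the rest is routine. One should be slightly careful about the degenerate possibility that $\tau_1$ and $\tau_N$ happen to share a vertex when the path closes up (the $\tau_1(1) = \tau_N(2)$ situation mentioned at the end of Lemma~\ref{c1_1}); but even then that shared vertex cannot also lie in $\tau_2$, so the conclusion $\bigcap_{i=1}^N \tau_i = \varnothing$ still holds for $N \geq 3$.
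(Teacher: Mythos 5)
Your proposal is correct and follows essentially the same route as the paper: both arguments reduce everything to the three-pair case, where a common vertex $u\in\tau_1\cap\tau_2\cap\tau_3$ contradicts the path definition's cap on how many pairs a vertex may belong to (the distinctness argument of Theorem~\ref{c1}), and then extend to general $N\geq 3$ by noting that $\bigcap_{i=1}^{N}\tau_i\subseteq\tau_1\cap\tau_2\cap\tau_3$ (the paper phrases this containment as a trivial induction on $k$). Your extra remark about the closed-up case $\tau_1(1)=\tau_N(2)$ is a sensible precaution but does not change the argument.
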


\begin{proof}
Given a path $\vec{\pi}_j$ with $\vert\vec{\pi}_j\vert = N$. If $N=2$, there is  $\tau_1\cap\tau_2\neq\varnothing$ with definition of trail,  $\bigcap_{i=1}^N\tau_i=\varnothing$ does not hold if $N=2$.%

Consider $N=3$ having $u\in(\tau_1\cap\tau_2)$ and $v\in(\tau_2\cap\tau_3)$. For $u,v\in V$, there is for $ u\neq v$, becauce otherwise there may be $u\in\bigcap_{N=1}^{3}\tau_N$, a contradition. Hence $\tau_1\cap\tau_2\neq\tau_2\cap\tau_3$. Assume to the vertex $u=\tau_1\cap\tau_3$, then $u\in\tau_3$ such that $u\in (\tau_1\cap\tau_2\cap\tau_3)$, a contradiction to the definition of path. Therefore we have the vertex $u\notin\tau_3$, similar to $v\notin \tau_1$. Hence $\tau_1\cap\tau_2\cap\tau_3=\varnothing$.%

We set there is a sub-path $\vec{\pi}_k\subseteq\vec{\pi}_j$ with $k>3$ and  $\bigcap_{i=1}^k\tau_i=\varnothing$ holds. Then there may be a sub-path $\vec{\pi}_{k+1}\subseteq\vec{\pi}_j$ and $\vec{\pi}_{k+1}=(\vec{\pi}_k,~\tau_{k+1})$. It is obviously to $\tau_{k+1}\cap\bigcap_{i=1}^k\tau_i=\varnothing$. Then term $\bigcap_{i=1}^{N}\tau_i=\varnothing$ holds if $N\geq 3$.%

If for a path $\vec{\pi}_j$ such that there may be $\bigcap_{i=1}^N\tau_i=\varnothing$(for $\tau_i\in\vec{\pi}_j$). We can write the form is $(\cdots((\tau_1\cap\tau_2\cap\tau_3)\cap\cdots)\cap\tau_N)$, so that we can understand $N\geq 3$.\\
\end{proof}

\begin{theorem}\label{c3}
Let $\vec{\pi}_k$ be a path on an instance $G=(V,\tau)$, then there is $1\leq\vert \vec{\pi}_k\vert \leq n$
\end{theorem}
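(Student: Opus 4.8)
The plan is to prove the two bounds $\vert\vec{\pi}_k\vert\geq 1$ and $\vert\vec{\pi}_k\vert\leq n$ separately. The lower bound is essentially immediate: by the definition of trail we need $\vert\pi\vert>1$ for a genuine trail, but a path can in principle be considered to have length at least $1$ (a single ordered pair $\tau_i$ with $\tau_i(1)\neq\tau_i(2)$ already satisfies the path condition vacuously on its single vertex of each coordinate). So the real content is the upper bound $\vert\vec{\pi}_k\vert\leq n$.

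For the upper bound, I would argue by counting vertices. First consider the case $\vert\vec{\pi}_k\vert = N$ with $N\geq 3$, so that Theorem~\ref{c2} and Lemma~\ref{c1_1} apply. Let $A$ be the medium vertices set of $\vec{\pi}_k$. By Theorem~\ref{c1}, $A$ is not a multi-set, i.e.\ its $N-1$ listed elements $x_1,\dots,x_{N-1}$ are pairwise distinct (using $x_i = \tau_i(2) = \tau_{i+1}(1)$ for $1\leq i<N$). Then I would adjoin the two ``endpoint'' vertices $\tau_1(1)$ and $\tau_N(2)$. The key sub-claim is that $\tau_1(1)$ and $\tau_N(2)$ are each either new vertices or, if the trail closes up, equal to each other but still the set of all vertices appearing on $\vec{\pi}_k$ has size either $N$ (open path) or $N$ (closed: $N-1$ medium vertices plus the one shared endpoint — wait, that is $N$ as well since a closed path of length $N$ visits $N$ distinct vertices). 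In either case the set of distinct vertices touched by $\vec{\pi}_k$ has cardinality at most $N$ — actually exactly $N$ for an open path and exactly $N$ for a cycle too, but in general we want: every vertex on the path is distinct, hence $N\leq \vert V\vert = n$. The distinctness of the endpoints from the medium vertices is exactly what the path condition (each vertex lies in $\bigcap_{1\leq M\leq 2}(\tau_i)_M$ for the appropriate indices) rules out, in the same style as the contradiction arguments in Theorem~\ref{c1} and Theorem~\ref{c2}.

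Then I would dispose of the small cases $N=1$ and $N=2$ directly: for $N=1$ we have one vertex in each coordinate, at most $2\leq n$ (for a connected graph $n\geq 2$, or $n\geq 1$ trivially); for $N=2$, writing $\vec{\pi}_k=(\tau_1,\tau_2)$ with $\tau_1(2)=\tau_2(1)$ and all coordinates distinct from the path/trail conditions, we touch at most $3$ vertices — and here one must note that $N=2\leq n$ provided $n\geq 2$ as the connectedness hypothesis gives. Actually the cleanest statement is just: the number of ordered pairs equals the number of ``steps'', and since all medium vertices plus endpoints are distinct, $N$ distinct vertices are used (or $N$ in the cyclic case), forcing $N\leq n$.

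The main obstacle I anticipate is handling the cyclic case and the endpoints carefully: Lemma~\ref{c1_1} already hints that when $\tau_1(1)=\tau_N(2)$ one has $\vert A\vert = N-1$ with all $N-1$ medium vertices distinct, and then the single shared endpoint is a further vertex distinct from all of them (by the path condition, applied to $\tau_1$ and $\tau_N$, exactly as in the $N=3$ contradiction in Theorem~\ref{c2}), giving $N$ distinct vertices total. One must check that no medium vertex coincides with an endpoint and, in the cyclic case, that the two endpoint-coordinates genuinely coincide rather than producing a spurious extra vertex. Once that bookkeeping is pinned down via the now-standard intersection/contradiction technique of this section, the bound $\vert\vec{\pi}_k\vert\leq n$ follows because $V$ has only $n$ elements. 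I expect the author's proof to invoke Theorem~\ref{c1}, Lemma~\ref{c1_1}, and Theorem~\ref{c2} and reduce everything to ``distinct vertices on a path'' $\leq n$.
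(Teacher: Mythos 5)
Your proposal follows essentially the same route as the paper's proof: take the medium vertices set $A$ (of size $N-1$, pairwise distinct by Theorem~\ref{c1}, and contained in $V$), adjoin the endpoint vertices $\tau_1(1)$ and $\tau_N(2)$, and compare against $\vert V\vert=n$, with the extremal case $N=n$ forced to close up via $\tau_1(1)=\tau_N(2)$ exactly as the paper argues. The one slip is your claim that an open path of length $N$ touches exactly $N$ distinct vertices --- it touches $N+1$, which is precisely why the paper must separately exclude $N=n+1$ and why $N=n$ requires the two endpoints to coincide --- but since $N+1\leq n$ still implies $N\leq n$, this off-by-one does not invalidate the bound.
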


\begin{proof}
Given a no-empty graph $G=(V,~\tau)$. Let $\vec{\pi}_k$ be a path on graph $G$. With the definition of path, there is $\vec{\pi}_k\in\Pi$. With the definition of trail, there is $\Pi\subseteq\hat{\tau}$ and $\tau^N\in\hat{\tau}$ with $1\leq N$. Then, there exists possibility of $\vert \vec{\pi}_k \vert = 1$ if $N=1$. Assume to $\vert \vec{\pi}_k \vert = 1$ and $\vec{\pi}_k = (\tau_1)$. For each vertex $u\in \tau_1$, there is $u\in\bigcap\tau_1$. This assumption satisfies the definition of path. Hence, $\vert \vec{\pi}_k\vert =1$ if $\vec{\pi}_k\in\hat{\tau}^N$ and $N=1$.%

Consider $N\geq 2$. Let $\vec{\pi}_k =(\tau_i)_{ i=1}^{N}$ and $A$ be the medium vertices set of path $\vec{\pi}_k$ having $A=(x_i)_{i=1}^k:k=N-1$, as the described in Lemma\ref{c1_1}. Then for $N>n+1$ such that there may be $\vert A\vert > n$. Because set $A$ is no a multi-set with the Theorem\ref{c1} therefore, there can be $A\setminus V\neq\varnothing$, i.e. having $\exists u\in (\vec{\pi}_k\setminus V)$, a contradiction. %

In case $N=n+1$, then $\vert A\vert = n$ such that have $A=V$ with Theorem\ref{c1}. Then for $A \subseteq \vec{\pi}_k\setminus \{\tau_1,~\tau_N\}$, such that at least there may be a vertex $u\in\tau_1$ and $u\notin (A\cup V)$, a contradiction.%

Hence, we only consider the case $N=n$ and $\vert A\vert = n-1$. There may be a pair $u,v\in (\tau_1\cup\tau_N)$ and $u,v\in (V\setminus A)$ as the described in the Lemma\ref{c1_1}. Set $u=\tau_1(1)$ and $v=\tau_N(2)$. If $u\neq v$, then $V\subset\vec{\pi}_k$. Otherwise for $u= v$ having $V\setminus\vec{\pi}_k=\varnothing$. Hence we may have $u=v$ and $u\in(\tau_i\cap\tau_N)$ satisfied the definition of path, $\vert\vec{\pi}_k\vert \leq n$ holds and finish the proof.\\
\end{proof}

\begin{theorem}\label{c5}
Let $\vec{\pi}$ be a collection of paths and $\mathring{\pi}$ be a collection of cycles on an instance. Then $\mathring{\pi}\subseteq \vec{\pi}$.
\end{theorem}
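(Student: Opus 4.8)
The plan is to show every cycle satisfies the defining condition of a path, so that $\mathring{\pi}\subseteq\vec{\pi}$ follows from the definitions. Recall that for a trail $\pi_k$, being a \emph{path} requires each vertex $u\in\pi_k$ to lie in $\bigcap_{1\leq N\leq 2}(\tau_i)_N$ over the indices $i\in\{1,\dots,\vert\pi_k\vert\}$ with $\tau_i\in\pi_k$, while being a \emph{cycle} requires $u\in\bigcap_{N=1}^{2}(\tau_i)_N$ over the same index range. So first I would unwind both quantified conditions and observe that the cycle condition $\bigcap_{N=1}^{2}(\tau_i)_N$ and the path condition $\bigcap_{1\leq N\leq 2}(\tau_i)_N$ are, as written, the same intersection over $N\in\{1,2\}$; the only difference the author seems to intend is how the endpoints $\tau_1$ and $\tau_N$ are treated (a cycle closes up, $\tau_1(1)=\tau_N(2)$, so its endpoints also participate in the two-fold intersection, whereas a generic path may have $\tau_1(1)$ and $\tau_N(2)$ appearing only once).

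Next I would argue the containment at the level of components: take an arbitrary $C\in\mathring{\pi}$ and verify it is a member of $\vec\pi$. Since $C$ is a cycle, it is in particular a trail (cycles and paths are both defined as special trails), so $C\in\Pi$ and $C\in\hat\tau$. Then for each vertex $u\in C$ the cycle condition gives $u\in\bigcap_{N=1}^{2}(\tau_i)_N$ over all $i$, which immediately implies the path condition $u\in\bigcap_{1\leq N\leq 2}(\tau_i)_N$ since the latter is a \emph{weaker} (or identical) requirement — every vertex that appears as both a left and a right coordinate across the whole edge-sequence a fortiori appears as a left or right coordinate. Hence $C$ meets the definition of path, so $C\in\vec\pi$, and since $C$ was arbitrary, $\mathring{\pi}\subseteq\vec\pi$.

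The main obstacle is not any computation but the ambiguity of the two displayed set-builder conditions: as typeset they look nearly identical, and one must pin down exactly which intersection ranges over which indices (in particular whether $\tau_1$ and $\tau_N$ are included) to see that "cycle" is strictly stronger than "path." I would resolve this by appealing to the informal reading already used elsewhere in the paper — a cycle is a closed path, so $\tau_1(1)=\tau_N(2)$ forces the start/end vertex into the two-fold intersection alongside all medium vertices (cf. the remark $\vert A\vert = N-1$ when $\tau_1(1)=\tau_N(2)$ in Lemma~\ref{c1_1}) — and then the inclusion is purely formal. A one-line sanity check with a small example, e.g. a triangle $v_1\tau v_2,\,v_2\tau v_3,\,v_3\tau v_1$, confirms that each of its vertices is simultaneously some $\tau_i(1)$ and some $\tau_j(2)$, so the triangle is both a cycle and a path, consistent with the claimed containment.
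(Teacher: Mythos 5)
Your proposal is correct and follows essentially the same route as the paper: both arguments take an arbitrary cycle, observe that its defining intersection condition $u\in\bigcap_{N=1}^{2}(\tau_i)_N$ subsumes (indeed is formally identical to or stronger than) the path condition $u\in\bigcap_{1\leq N\leq 2}(\tau_i)_N$, and conclude the component-wise containment. Your explicit acknowledgement of the notational ambiguity between the two displayed conditions, and the triangle sanity check, are reasonable additions but do not change the substance of the argument.
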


\begin{proof}
Let $\Pi$ be a collection of trails. We let paths set $\vec{\pi}$ and cycles set $\mathring{\pi}$ have $\vec{\pi},\mathring{\pi}\in\Pi$. Consider each component $\mathring{\pi}_k\in \mathring{\pi}$. For each vertex $u\in \mathring{\pi}_k$ having $u\in\bigcap_{N=1}^2(\tau_i)_N~(\forall\tau_i\in \mathring{\pi}_k)$, with the definition of cycle. We can transform this term to $u\in\bigcap_{1\leq N\leq 2}(\tau_i)_{N}$(for $N=2$), satisfies the definition of path, i.e. cycle is an instance of path, $\mathring{\pi}\subseteq \vec{\pi}$.\\
\end{proof}

\begin{theorem}\label{c6}
Let $\mathring{\pi}_k$ be a cycle and $\mathring{\pi}_k=(\tau_i)_{i=1}^ N$. Then $N\geq 2$ and $\tau_1(1)=\tau_N(2)$.
\end{theorem}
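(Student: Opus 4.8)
The statement has two halves, and for each I would just hold the definition of \emph{cycle} against those of \emph{trail} and \emph{path}, in the same bookkeeping style used for Theorem~\ref{c2}. The bound $N\geq 2$ is almost free: by Theorem~\ref{c5} a cycle is a path, a path is a trail, and the definition of trail already requires $\vert\pi\vert>1$, so $N=\vert\mathring{\pi}_k\vert\geq 2$. If one wants a self-contained version, suppose $N=1$, so $\mathring{\pi}_k=(\tau_1)$; the trail clause forces $\tau_1(1)\neq\tau_1(2)$, whereas the cycle condition applied to the vertex $\tau_1(1)\in\mathring{\pi}_k$ (there being only the one pair $\tau_1$ to witness its membership as both a first and a second coordinate) forces $\tau_1(1)=\tau_1(2)$ — a contradiction, hence $N\geq 2$.

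For the equality $\tau_1(1)=\tau_N(2)$, I would unfold the vertex list of the trail as $\tau_1(1),\,x_1,\,x_2,\,\dots,\,x_{N-1},\,\tau_N(2)$, where $x_i=\tau_i(2)=\tau_{i+1}(1)$ are the members of the medium vertices set $A$ of $\mathring{\pi}_k$ (well defined since $N\geq 2$, and pairwise distinct by Theorem~\ref{c1}). Chaining the trail relation $\tau_i(2)=\tau_{i+1}(1)$ over $1\leq i<N$, the collection of all first coordinates occurring in $\mathring{\pi}_k$ is exactly $\{\tau_1(1),x_1,\dots,x_{N-1}\}$ and the collection of all second coordinates is exactly $\{x_1,\dots,x_{N-1},\tau_N(2)\}$. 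The cycle condition requires every vertex of $\mathring{\pi}_k$ — in particular $\tau_1(1)$ — to belong to both collections, so $\tau_1(1)$ must occur as a second coordinate; therefore either $\tau_1(1)=\tau_N(2)$ (the desired conclusion) or $\tau_1(1)=x_a$ for some $a\in\{1,\dots,N-1\}$.

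It remains to kill the second alternative. Assuming $\tau_1(1)=x_a$ with $1\leq a\leq N-1$ and $\tau_1(1)\neq\tau_N(2)$, the prefix $(\tau_1,\dots,\tau_a)$ is a trail whose first and last vertices both equal $\tau_1(1)$, and whose vertex set $\{\tau_1(1),x_1,\dots,x_{a-1}\}$ has exactly $a$ elements ($x_1,\dots,x_{a-1}$ distinct and $\tau_1(1)=x_a$ distinct from them by Theorem~\ref{c1}); this makes it a cycle of length $a<N$ inside $\mathring{\pi}_k$ in which the vertex $\tau_1(1)$ is revisited strictly between the endpoints. I would then push this down — by descending to a shortest such sub-cycle, or by feeding the repeated interior vertex into Theorem~\ref{c1} together with the length bound of Theorem~\ref{c3} — to contradict the distinctness/bound constraints on a path, leaving only the surviving case $\tau_1(1)=\tau_N(2)$. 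The main obstacle is exactly this last step: extracting a clean contradiction from ``$\tau_1(1)$ lies strictly among the medium vertices,'' since the paper's path/cycle membership conditions are stated tersely; I would nail it by observing that a vertex revisited strictly between the two endpoints makes the relevant intersection condition fail, mirroring the contradiction-by-intersection arguments used for Theorems~\ref{c1} and~\ref{c2}.
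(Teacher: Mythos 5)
Your plan is correct and follows essentially the same route as the paper: rule out $N=1$ by pitting the trail clause $\tau_1(1)\neq\tau_1(2)$ against the cycle membership condition, then derive $\tau_1(1)=\tau_N(2)$ by observing that otherwise $\tau_1(1)$ never occurs as a second coordinate. If anything you are more careful than the paper, whose proof jumps from $u\neq\tau_1(2)$ to $u\notin A$ without addressing your second alternative $\tau_1(1)=x_a$; your proposed disposal of that case via the path's intersection condition (one vertex lying in three distinct pairs $\tau_1$, $\tau_a$, $\tau_{a+1}$) is exactly the right patch and mirrors the argument of Theorem~\ref{c1}.
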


\begin{proof}
Given a cycle $\mathring{\pi}_k$ and $\mathring{\pi}_k=(\tau_i)_{i=1}^ N$. If $N=1$, then for each vertex $u\in \mathring{\pi}_k$, it is obviously for only a pair in cycle not to satisfy the definition of cycle for each vertex in an intersection of two different ordered pairs. Then there must be $N>1$ for any cycle.%

With Theorem\ref{c5}, let $A$ be the medium vertices set of cycle $\mathring{\pi}_k$. With defintion, we have $A\subseteq\mathring{\pi}_k$. If set $A$ is no empty, then $\vert A\vert\geq 1$ and the minimum potential of set $A$ is 1. We set  $\vert A\vert=1$ and  $u\in A$ such that $u=\tau_1(2)=\tau_2(1)$. It is abviously that if exist a vertex $v=\tau_1(1)=\tau_2(2)$ and $v\neq u$, then have $v\in(\tau_1\cap\tau_2)$, satisfies the definition of cycle. Hence we understand there may be $\vert \mathring{\pi}_k\vert\geq 2$.%

Consider $N\geq 2$. We assume to $\tau_1(1)\neq \tau_N(2)$ and set $u=\tau_1(1)$ and $v= \tau_N(2)$. With definition of trail, there may be $u\neq\tau_1(2)$ and $v\neq \tau_N(1)$ such that pair $u,v\notin A$. We have $u\in\bigcap \tau_1$, a contradiction to definition of cycle, similar to vertex $v$. Hence, for cycle $\mathring{\pi}_k$, there is $\tau_1(1)=\tau_N(2)$ and  the length of minimum cycle is 2.\\
\end{proof}

\begin{definition}
Let $\mathring{\pi}_k$ be a cycle on a no-empty instance $G=(V,~\tau)$. For each vertex $u\in V$ such that $u\in\mathring{\pi}_k$, we call cycle $\mathring{\pi}_k$ \emph{Hamiltonian cycle}. The collection of  Hamiltonian cycles denote by $\mathring{\pi}_h$.\\
\end{definition}

\begin{theorem}\label{c8}
Let $\vec{\pi}$ be a collection of paths and $\mathring{\pi}_h$ be a collection of Hamiltonian cycles on graph $G=(V,~\tau)$. Set $\vec{\pi}_{max}$ is the subset of $\vec{\pi}$, in which the length of each component is $n$, then $\vec{\pi}_{max}=\mathring{\pi}_h$.
\end{theorem}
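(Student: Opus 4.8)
The plan is to establish the set equality $\vec{\pi}_{max}=\mathring{\pi}_h$ by showing mutual inclusion. For the inclusion $\mathring{\pi}_h\subseteq\vec{\pi}_{max}$, I would start with an arbitrary Hamiltonian cycle $\mathring{\pi}_k\in\mathring{\pi}_h$. By Theorem~\ref{c5} a cycle is a path, so $\mathring{\pi}_k\in\vec{\pi}$; it remains to show its length is exactly $n$. Writing $\mathring{\pi}_k=(\tau_i)_{i=1}^N$ and letting $A$ be its medium vertices set, Theorem~\ref{c6} gives $\tau_1(1)=\tau_N(2)$, so the endpoint of the cycle coincides with its start vertex. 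Combining the definition of Hamiltonian cycle (every $u\in V$ lies on $\mathring{\pi}_k$) with Theorem~\ref{c1} (the medium vertices set is not a multiset) and Lemma~\ref{c1_1} (so $\vert A\vert=N-1$ when $\tau_1(1)=\tau_N(2)$), I would argue that the $N-1$ distinct medium vertices together with the single shared endpoint vertex exhaust $V$, forcing $N-1=n-1$, i.e. $N=n$. Hence $\mathring{\pi}_k\in\vec{\pi}_{max}$.

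For the reverse inclusion $\vec{\pi}_{max}\subseteq\mathring{\pi}_h$, I would take a path $\vec{\pi}_k\in\vec{\pi}$ with $\vert\vec{\pi}_k\vert=n$ and show it is a Hamiltonian cycle. By Theorem~\ref{c3} the proof of the bound $\vert\vec{\pi}_k\vert\leq n$ already analyzed the boundary case $N=n$: there it was shown that the medium vertices set $A$ has $\vert A\vert=n-1$, that the two remaining vertices $u=\tau_1(1)$ and $v=\tau_N(2)$ must satisfy $u=v$ (otherwise $V\subset\vec{\pi}_k$ strictly, which is impossible by a counting contradiction), and that this common vertex lies in $\tau_1\cap\tau_N$ in accordance with the definition of path. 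I would reuse exactly this: $u=v$ means $\tau_1(1)=\tau_N(2)$, so by Theorem~\ref{c6}'s criterion $\vec{\pi}_k$ is a cycle, and since $A\cup\{u\}=V$ every vertex of $V$ lies on it, making it a Hamiltonian cycle. Therefore $\vec{\pi}_k\in\mathring{\pi}_h$.

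Putting the two inclusions together yields $\vec{\pi}_{max}=\mathring{\pi}_h$. The main obstacle I anticipate is the bookkeeping at the endpoints: Lemma~\ref{c1_1} is stated with the caveat that $\vert A\vert=N-1$ holds precisely "when $\tau_1(1)=\tau_N(2)$ or otherwise," and one must be careful to separate the case where the path is genuinely open (endpoints distinct, so $A$ plus both endpoints has $n+1$ elements — contradicting $\vert V\vert=n$ once the length is $n$) from the closed case, and to confirm that the intersection condition $u\in\tau_1\cap\tau_N$ is exactly what the definition of cycle demands for that vertex. Everything else reduces to the cardinality arguments already carried out in Theorems~\ref{c1}, \ref{c3}, and \ref{c6}, so the real work is assembling these consistently rather than proving anything new.
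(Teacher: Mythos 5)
Your proposal is correct and follows essentially the same route as the paper: both directions are handled by the same cardinality bookkeeping on the medium vertices set (Lemma~\ref{c1_1} giving $\vert A\vert=n-1$, the endpoints $u=\tau_1(1)$, $v=\tau_N(2)$ forced to coincide lest $A\cup\{u,v\}$ exceed $\vert V\vert$, and Theorem~\ref{c6} identifying the closed case as a cycle). The only cosmetic difference is that the paper phrases the inclusion $\mathring{\pi}_h\subseteq\vec{\pi}_{max}$ as a contradiction from a hypothetical Hamiltonian cycle of length at most $n-1$, whereas you count the distinct vertices directly; the substance is the same.
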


\begin{proof}
Given a no-empty graph $G=(V,~\tau)$. Let $\vec{\pi}$ be a collection of paths and $\mathring{\pi}_h$ be a collection of Hamiltonian cycles on graph $G$. Set $\vec{\pi}_{max}$ is the subset of $\vec{\pi}$, in which the length  is $n$ for each component. 
If $\vert V\vert =2$, we set each component $\vec{\pi}_k\in\vec{\pi}_{max}$ to $\vec{\pi}_k=(\tau_1,\tau_2)$,  then it is certainly that there can be at most two vertices in $\vec{\pi}_k$ and $\vert \vec{\pi}_k(i)\vert\leq 2$. Hence $\vec{\pi}_{max}=\mathring{\pi}_h$ it holds if $n=2$.  %

Consider $n>2$ and a component $\vec{\pi}_k\in\vec{\pi}_{max}$ with a medium vertices set $A$. With Lemma\ref{c1_1}, there is pair $\tau_1(1),\tau_n(2)\in (\vec{\pi}_k \setminus A)$ and $\vert A\vert=n-1$. We set $u=\tau_1(1)$ and vertex $v=\tau_n(2)$ such that having $u,v\in (\vec{\pi}_k \setminus A)$. Let set $B=A\cup\{u,v\}$. If $u\neq v$ then $\hash B=n+1$, we have $B\setminus V\neq\varnothing$ such that $\vec{\pi}_k\setminus V\neq\varnothing$, a contradiction. Otherwise, for $u=v$, such that $\hash B=n$ and $B=V$ having $\vec{\pi}_k\in\mathring{\pi}$. Namely, there may be $\vec{\pi}_{max}\subseteq \mathring{\pi}_h$ with definition of Hamiltonian cycle.%

Assume having a component $\mathring{\pi}_h'\in \mathring{\pi}_h$ with $\vert \mathring{\pi}_h'\vert=m$ and $m\leq n-1$. Let $A'$  be the medium vertices set of cycle $\vec{\pi}_h'$, then $\vert A'\vert\leq n-2$. Further we can set $B=A'\cup\{\tau_1(1),~\tau_m(2)\}$. With Theorem\ref{c6} for $\tau_1(1)=\tau_m(2)$, there is $\hash B\leq n-1$. It is obviously that $\mathring{\pi}_h'\notin \mathring{\pi}_h$ as definition Hamiltonian cycle. Hence for each component in Hamiltonian cycles set $\mathring{\pi}_h$, its length is impossibly less than $n$. Consequently, $\vec{\pi}_{max}=\mathring{\pi}_h$ holds. \\
\end{proof}

\subsection{Invariant in Graph Traversal}
\begin{definition}
Given a no-empty cycle $\mathring{\pi}_k$. There may be two sub-sequences $\alpha_1,~\alpha_2\in\mathring{\pi}_k$ and $\mathring{\pi}_k=(\alpha_1,~\alpha_2)$. If there is a permutation $\rho$ on cycle $\mathring{\pi}_k$, for $\mathring{\pi}_k'=\rho(\mathring{\pi}_k)$ having $\mathring{\pi}_k'=(\alpha_2,~\alpha_1)$, then we call it \emph{cycle permutation}. We set $\vert\alpha_1\vert = m$, and $M$ is count of permutation, cycle permutation denote by $\rho_m^M(\mathring{\pi}_i)$.%

 We call $M$ \emph{power} of $\rho$ and $m$ is called \emph{index} of permutation.\\
\end{definition}

\begin{theorem}\label{c4}
Let $\rho$ be a cycle permutation on a no-empty cycle $\mathring{\pi}_k$ having $\mathring{\pi}_k'=\rho_m^M(\mathring{\pi}_k)$, then $\mathring{\pi}_k'\in \mathring{\pi}$.
\end{theorem}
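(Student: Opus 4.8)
The plan is to show directly that $\mathring{\pi}_k'$ meets every clause in the definition of a cycle, exploiting the key structural observation that a cycle permutation does not change the (multi)collection of ordered pairs of $\mathring{\pi}_k$ at all — it only rotates their order. First I would dispose of the trivial situation $m\equiv 0$ (or $m\equiv N$), in which $\mathring{\pi}_k'=\mathring{\pi}_k$ and there is nothing to prove, and then reduce the power $M$ to the case $M=1$: since each application of $\rho_m$ carries a cycle to some sequence, once I know that sequence is again a cycle the general case follows by induction on $M$, applying the $M=1$ result to $\rho_m^{M-1}(\mathring{\pi}_k)$, which is a cycle by the inductive hypothesis.

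For the single-permutation core, write $\mathring{\pi}_k=(\tau_1,\ldots,\tau_N)$ with $\alpha_1=(\tau_1,\ldots,\tau_m)$ and $\alpha_2=(\tau_{m+1},\ldots,\tau_N)$, so that $\mathring{\pi}_k'=(\tau_{m+1},\ldots,\tau_N,\tau_1,\ldots,\tau_m)$. I would first verify $\mathring{\pi}_k'$ is a trail: no pair is altered, so $\tau_i(1)\neq\tau_i(2)$ still holds for every pair, and $\vert\mathring{\pi}_k'\vert=N\geq 2$ by Theorem~\ref{c6}; the consecutiveness condition $\tau_i(2)=\tau_{i+1}(1)$ is inherited inside $\alpha_1$ and inside $\alpha_2$, while at the single new seam $\tau_N\to\tau_1$ it is exactly the closure property $\tau_N(2)=\tau_1(1)$ supplied by Theorem~\ref{c6} for the original cycle.

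Next I would check the cycle condition itself. The vertices occurring in $\mathring{\pi}_k'$ are the same vertices occurring in $\mathring{\pi}_k$, and since the collection of ordered pairs is unchanged, every such vertex still appears both as a first coordinate and as a second coordinate among the pairs of the sequence; hence the intersection clause in the definition of cycle is preserved verbatim, and $\mathring{\pi}_k'$ is in particular a path by Theorem~\ref{c5}. To confirm it is genuinely a cycle and not merely a path, I would invoke Theorem~\ref{c6} once more in the forward direction: the first pair of $\mathring{\pi}_k'$ is $\tau_{m+1}$ and its last pair is $\tau_m$, and these were adjacent in $\mathring{\pi}_k$, so $\tau_m(2)=\tau_{m+1}(1)$, which is precisely $\tau_1'(1)=\tau_N'(2)$, the signature of a cycle. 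Combining, $\mathring{\pi}_k'\in\mathring{\pi}$.

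The main obstacle I expect is bookkeeping rather than conceptual: keeping the index arithmetic straight across the rotation (tracking which original adjacency becomes the new closure edge and which becomes an ordinary internal edge), and reading the somewhat terse vertex-intersection clause of the cycle definition correctly so that ``same pairs, cyclically reordered'' really does force it to persist. It is also worth noting that no appeal to Theorem~\ref{c3} is needed here, since $\vert\mathring{\pi}_k'\vert=\vert\mathring{\pi}_k\vert$ is literally unchanged by the permutation.
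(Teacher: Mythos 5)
Your proposal is correct and follows essentially the same route as the paper: dispose of the trivial $m=0$ (or $m=N$) case, reduce to $M=1$ by induction on the power, and for a single rotation observe that the original closure $\tau_N(2)=\tau_1(1)$ becomes the new internal seam while the original internal adjacency $\tau_m(2)=\tau_{m+1}(1)$ becomes the new closure $\tau_1'(1)=\tau_N'(2)$. Your version is somewhat more explicit than the paper's in verifying the trail and vertex-intersection clauses, but the decomposition and the two key adjacency identifications are the same.
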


\begin{proof}
Given a cycle $\mathring{\pi}_k$ and $\vert\mathring{\pi}_k\vert\geq 2$. Consider $\vert \mathring{\pi}_k\vert = 2$ and set $\mathring{\pi}_k =(\tau_1,\tau_2)$. Then there is $\tau_1(1)=\tau_2(2)$ and $\tau_1(2)=\tau_2(1)$. It is obvious that no matter how much $M$ is, there is $\rho^M_{1\leq m\leq 2}(\mathring{\pi}_k)\in\mathring{\pi}$. It $\mathring{\pi}_k'\in \mathring{\pi}$ holds if $\vert \mathring{\pi}_k\vert = 2$.\\%

Consider $\vert\mathring{\pi}_k\vert =N$ and $2< N\leq n$. Let $\mathring{\pi}_k$ contain two sub-sequences $\alpha_1,\alpha_2$ and $\mathring{\pi}_k =(\alpha_1,\alpha_2)$ with $\vert \alpha_1\vert =m$. Set $\vert\alpha_2\vert =p$ such that $p=N-m$. When $M=0$ or $m=0$, we understand no changing on cycle $\mathring{\pi}_k$.%

Then first consider $M=1$. We may have $\mathring{\pi}_k'=\rho_m^1(\mathring{\pi}_k)=(\alpha_2,\alpha_1)$. Given each pair $\tau_s\in\mathring{\pi}_k$. We let the subscript $s$ represent the $s$th of position on sequence $\mathring{\pi}_k$. Then, after a cycle permutation, the pair $\tau_s$ is on the $j$th position of sequence $\mathring{\pi}_k'$. For $\tau_s\in\alpha_1$ having $1\leq s\leq m$. We can have the equation $j=s+p$ to calculate the position changing for each pair $\tau_s\in\alpha_1$ between two sequences. Similarly, when $\tau_s\in\alpha_2$, equation is $j=s-m$.%

Consider pair $\tau_m$ on sub-sequence $\alpha_1$ as end-pair, pair $\tau_m$ may lie on the $N$th position of sequence $\mathring{\pi}_k'$ by $j=m+p$. Similarly, pair $\tau_{m+1}\in \alpha_2$ is on the 1th position of sequence $\mathring{\pi}_k'$ by $j= m+1-m$. It is easy to have derivation $\tau_m(2)=\tau_{m+1}(1)\Longrightarrow\tau_1'(1)=\tau_N'(2)$(for $\tau_1',\tau_N'\in\mathring{\pi}_k'$). In the similar fashion, we can have $\tau_{p}'(2)=\tau_{p+1}'(1)$. Hence $\mathring{\pi}_k'\in\mathring{\pi}$.\\%

Let $M=h$. We set $\mathring{\pi}_{k_{h}}=\rho_m^h(\mathring{\pi}_k)$ and $ \mathring{\pi}_{k_{h}}\in\mathring{\pi}$ holds. If $M=h+1$, we can have $\mathring{\pi}_{k_{(h+1)}}=\rho_m^{1}(\mathring{\pi}_{k_h})$. Summarizing above, we can similarly prove $\mathring{\pi}_{k_{(h+1)}}\in \mathring{\pi}$. Hence, we understand $\rho_m^M(\mathring{\pi}_i)\subseteq\mathring{\pi}$.\\
\end{proof}

\begin{theorem}\label{c9}
If $\mathring{\pi}_i=\rho_m^M(\mathring{\pi}_i)$, then $M\leq\vert\mathring{\pi}_i \vert$.
\end{theorem}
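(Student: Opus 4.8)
The plan is to reduce the claim to a statement about cyclic rotations of the length-$N$ sequence $\mathring{\pi}_i$, where $N=\vert\mathring{\pi}_i\vert$, and then to observe that one such rotation — the $N$-fold iterate — always fixes the cycle, so the power $M$ for which $\mathring{\pi}_i=\rho_m^M(\mathring{\pi}_i)$ cannot exceed $N$. First I would recall, from the proof of Theorem~\ref{c4}, the bookkeeping on positions: applying $\rho_m^1$ once sends the pair sitting at position $s$ of $\mathring{\pi}_i$ to position $s+p$ when $\tau_s\in\alpha_1$ and to position $s-m$ when $\tau_s\in\alpha_2$, where $p=N-m$. Since $s+p\equiv s-m \pmod N$, a single cycle permutation $\rho_m$ is exactly a cyclic shift of the sequence by $m$ places, and by Theorem~\ref{c4} the result is again a cycle in $\mathring{\pi}$.

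Second, I would iterate: composing $\rho_m$ with itself $M$ times shifts the sequence by $mM$ places modulo $N$. Because the index satisfies $m=\vert\alpha_1\vert\le\vert\mathring{\pi}_i\vert=N$, the $N$-th power $\rho_m^N$ shifts by $mN$, a multiple of $N$, so that $\rho_m^N(\mathring{\pi}_i)=\mathring{\pi}_i$; here I would also dispose of the degenerate indices $m=0$ and $m=N$, for which $\rho_m$ is already the identity (recall $N\ge 2$ by Theorem~\ref{c6}). Hence the set of powers at which $\mathring{\pi}_i$ returns to itself is nonempty and contains $N$, so the power $M$ named in the hypothesis — the count of permutations needed to come back to $\mathring{\pi}_i$ — is bounded by $N=\vert\mathring{\pi}_i\vert$, which is the assertion. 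Equivalently, since there are at most $N$ pairwise distinct cyclic shifts of an $N$-term sequence, the first return occurs within $N$ steps.

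The step I expect to be the real obstacle is making precise why $\mathring{\pi}_i=\rho_m^M(\mathring{\pi}_i)$ forces the accumulated shift $mM$ to be congruent to $0$ modulo $N$ — that is, why the least return time $M$ is well defined in the first place and is not spuriously small. This rests on the fact that the $N$ ordered pairs listed in a cycle are pairwise distinct and occupy distinct positions, so that a nontrivial rotation cannot accidentally reproduce the same sequence. I would establish this in the spirit of Theorems~\ref{c1} and~\ref{c2} on medium vertices — by induction on the length, turning a coincidence $\tau_s=\tau_{s+k}$ with $0<k<N$ into a vertex lying in an intersection of three or more pairs of the trail, which contradicts the definition of cycle (and path) — after which the bound $M\le N$ drops out of elementary divisibility. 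An alternative, mirroring the structure of the proof of Theorem~\ref{c4}, would be a direct induction on $\vert\mathring{\pi}_i\vert$ with base case $\vert\mathring{\pi}_i\vert=2$; but the rotation argument above is shorter and I would prefer it.
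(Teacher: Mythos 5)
Your proposal is correct and follows the same skeleton as the paper's proof: both reduce the claim to the position bookkeeping set up in the proof of Theorem~\ref{c4}, under which one application of $\rho_m$ moves the pair at position $s$ to $s+m_2$ or $s-m_1$, i.e.\ performs a cyclic shift by a fixed amount modulo $N=\vert\mathring{\pi}_i\vert$. Where you diverge is the endgame. The paper introduces counters $k_1,k_2$ for the number of additions and subtractions, derives the system $k_1+k_2=M$, $m_1+m_2=N$, $k_1m_2-k_2m_1=0$, solves it to get $Mm=k_1N$, and then case-splits on whether $m$ and $N$ are coprime, effectively computing the exact return time $N/\gcd(N,m)$. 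You instead observe that $\rho_m^N$ shifts by $mN\equiv 0\pmod N$ and is therefore always the identity, so the first return time is trivially at most $N$; equivalently, among the $N+1$ iterates $\rho_m^0,\dots,\rho_m^N$ two must coincide. This is shorter and avoids the Diophantine analysis, at the cost of not exhibiting the exact value of $M$ (which the paper later reuses in the BOCPS algorithm, where $M=N/\gcd(N,m)$ is the whole point). One remark: the distinctness issue you flag --- that a nontrivial rotation might accidentally fix the sequence --- is only relevant to a lower bound on $M$ or to the well-definedness of a \emph{least} return time; it is not needed for the stated upper bound, and the paper does not address it either. Both arguments implicitly read $M$ as the first power at which the cycle returns to itself, since the literal statement of Theorem~\ref{c9} would fail for $M=2N$.
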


\begin{proof}
Given a cycle $\mathring{\pi}_i$ and $\vert \mathring{\pi}_i\vert = N$. Let $\rho$ be a cycle permutation on cycle $\mathring{\pi}_k$ with index $m$ and power $M$ on it. It is obvious that if $N=2$, however $m=1$ or $m=2$, two cases both make $M\equiv 2$.%

Consider $N>2$. There are two sub-sequences $\alpha_1,\alpha_2$ on cycle $\mathring{\pi}_i$ such that 
\[\mathring{\pi}_i=(\alpha_1,\alpha_2):\vert \alpha_1\vert = m_1, \vert \alpha_1\vert = m_2.\]
These numbers $m, m_1, m_2, N$ have
\[  m_1=m,~ N=m_1+m_2.\]

As the described in Theorem\ref{c4} proof, for the positions changing of each pair between sequence $\mathring{\pi}_i$ and $\mathring{\pi}_i'$, we set number $a$ represent each subscript of pair on sequence $\mathring{\pi}_i$, which is the $a$th position on sequences. If $1\leq a \leq m$, then $a\coloneq a+m_2$. Otherwise, for $m< a\leq N$, there may be $a\coloneq a-m_1$. After  cycle permutating for $M$ times, if $a(M)=a$, we say $\mathring{\pi}_i=\rho_m^M(\mathring{\pi}_i)$, which all pairs come back to the initial positions. We assume the number $a$ pluses $m_2$ for $k_1$ times and subtracts $m_1$ for $k_2$ times, such that $a(M)\coloneq a+k_1m_2-k_2m_1$. Then while there is $k_1m_2-k_2m_1=0$, each pair comes back to initial position. Hence, we have an equation groups as follow: 
\begin{align*}
m_1&=m\\
k_1+k_2&=M\\
m_1+m_2&=N\\
k_1m_2-k_2m_1&=0
\end{align*}

Solving the equation groups, we obtain equation $Mm=k_1N$. If $m=1$, then $m_2=N-1$ such that $k_1=1,~k_2=N-1,~M=N$. There need $N$ times of cycle permutation, each pair can come back to the initial position.%

If $m=N$, then $m_1=N$ and $m_2=0$ such that $k_2=0$ and $k_1=M$. The value of $M$ can be any natural number, cycle $ \mathring{\pi}_i$ has not any changing. Similarly for $m=0$.%

If $1<m<N$ and $N$, $m$ are prime number to each other, then for $ N/m=M/k_1$ such that $M=N$ and $m=k_1$. There at least need $N$ times of cycle permutation.%

Consider $N,m$ have the greatest common factor, let $m=s\mu;~N=t\mu$ and $s < t$ such that $N/m=t\mu/s\mu=M/k_1$. Then there is $M=t$ and $M<N$, hence There need less than $N$ times of cycle permutation. Namely,  $M\leq\vert\mathring{\pi}_i \vert$ holds. We finish this proof.\\
\end{proof}

\begin{definition}
 Let $\xi_i$ be a sub-sequence on cycle $\mathring{\pi}_i$. If $\vert\xi_i\vert=\vert\mathring{\pi}_i\vert - 1$, then we call $\xi_i$ \emph{chain}, the collection of chains denote by $\xi$. If $\mathring{\pi}_i\in \mathring{\pi}_h$, then call it \emph{Hamiltonian chain}. The collection of Hamiltonian chains is denoted by $\xi_{h}$.\\
\end{definition}

\begin{theorem}\label{c10}
Let $\xi_i$ is a chain on a cycle $\mathring{\pi}_i$. For each vertex $u\in\mathring{\pi}_i$,
 there is $u\in\xi_i$.
\end{theorem}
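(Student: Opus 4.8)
The plan is to work directly from the combinatorial description of a chain as the cycle with exactly one ordered pair deleted, and then to show that the two vertices carried by the deleted pair still survive inside the chain because the cycle ``wraps around''.

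First I would fix notation: write $\mathring{\pi}_i=(\tau_1,\tau_2,\ldots,\tau_N)$ with $N=\vert\mathring{\pi}_i\vert$, noting $N\geq 2$ and $\tau_1(1)=\tau_N(2)$ by Theorem~\ref{c6}. Since $\xi_i$ is a sub-sequence of $\mathring{\pi}_i$ with $\vert\xi_i\vert=N-1$, it is obtained by deleting exactly one ordered pair; call it $\tau_j$, so that $\xi_i=(\tau_k)_{k\in\{1,\ldots,N\}\setminus\{j\}}$. Every vertex that occurs in some $\tau_k$ with $k\neq j$ automatically occurs in $\xi_i$, so it suffices to show that the two vertices $\tau_j(1)$ and $\tau_j(2)$ (which are distinct, by the trail condition $\tau_j(1)\neq\tau_j(2)$) each occur in $\xi_i$.

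For $\tau_j(1)$: if $j>1$ then the trail condition gives $\tau_j(1)=\tau_{j-1}(2)$, and $\tau_{j-1}$ lies in $\xi_i$; if $j=1$ then Theorem~\ref{c6} gives $\tau_1(1)=\tau_N(2)$, and $\tau_N$ lies in $\xi_i$ (it is not the deleted pair, since the deleted pair is $\tau_1$ and $N\geq 2$). Symmetrically, for $\tau_j(2)$: if $j<N$ then $\tau_j(2)=\tau_{j+1}(1)$ with $\tau_{j+1}\in\xi_i$; if $j=N$ then $\tau_N(2)=\tau_1(1)$ with $\tau_1\in\xi_i$. In every case both endpoints of the deleted pair reappear in the chain, so every vertex occurring in $\mathring{\pi}_i$ occurs in $\xi_i$, which is the claim. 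Alternatively, one may first apply the cycle permutation of Theorem~\ref{c4} to rotate $\mathring{\pi}_i$ so that the deleted pair becomes $\tau_N$, reducing everything to the single case $j=N$; I would mention this shortcut but the case split above is short enough not to need it.

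The main obstacle is purely definitional: one must be careful that ``sub-sequence of length $N-1$'' really means ``$\mathring{\pi}_i$ with one pair removed, keeping all the others,'' so that the surrounding pairs $\tau_{j-1},\tau_{j+1}$ (or the wrap-around pairs $\tau_N,\tau_1$) are genuinely still present in $\xi_i$; and one must not forget the boundary cases $j=1$ and $j=N$, where the missing neighbouring pair is supplied only by the cyclic closure $\tau_1(1)=\tau_N(2)$ of Theorem~\ref{c6}. Once those points are pinned down, the argument is immediate.
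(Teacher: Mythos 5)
Your proposal is correct and follows essentially the same route as the paper: both arguments isolate the single deleted pair $\tau_j=(v_j,v_k)$ and observe that, by the definition of cycle, each of its two vertices also lies in a second, distinct pair of $\mathring{\pi}_i$, which necessarily survives in $\xi_i$. The only difference is that you make the identity of that second pair explicit (the adjacent pair $\tau_{j\pm 1}$, or the wrap-around pair supplied by $\tau_1(1)=\tau_N(2)$ from Theorem~\ref{c6}), where the paper simply cites the cycle definition to assert the existence of some $\tau_s$ with $v_j\in\tau_j\cap\tau_s$.
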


\begin{proof}
Let $\xi_i$ be a chain on a cycle $\mathring{\pi}_i$. With the definition of chain, there is $\vert\xi_i\vert=\vert \mathring{\pi}_i\vert -1$. We set a pair $\tau_j\in (\mathring{\pi}_i\setminus \xi_i)$ and $\tau_j=(v_j,v_k)$. Moreover, we can set vertex $v_j\in(\tau_j\cap\tau_s)$ as the described in definition of cycle. Then, for $\tau_s\in( \mathring{\pi}_i\cap\xi_i)$ such that having vertex $v_j\in\xi_i$. It is Similar to vertex $v_k$. Hence, for each $u\in \mathring{\pi}_i$ such that $u\in \xi_i$.\\ 
\end{proof}

\begin{lemma}\label{c11}
Let $\xi_{h_i}$ be a Hamiltonian chain on an instance $G=(V,~\tau)$. Then for each vertex $u\in V$, there is  $u\in \xi_{h_i}$.
\end{lemma}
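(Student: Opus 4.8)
The statement is a direct corollary of Theorem~\ref{c10} together with the definition of a Hamiltonian cycle, so the plan is simply to chain the two facts. First I would unfold the hypothesis: by the definition of a Hamiltonian chain, $\xi_{h_i}$ is a chain sitting on some Hamiltonian cycle $\mathring{\pi}_{h_i}\in\mathring{\pi}_h$, i.e.\ $\xi_{h_i}\subseteq\mathring{\pi}_{h_i}$ with $\vert\xi_{h_i}\vert=\vert\mathring{\pi}_{h_i}\vert-1$. Since $\mathring{\pi}_{h_i}$ is a Hamiltonian cycle on $G=(V,\tau)$, the definition of Hamiltonian cycle gives that for every vertex $u\in V$ we have $u\in\mathring{\pi}_{h_i}$.

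Next I would invoke Theorem~\ref{c10} with the cycle $\mathring{\pi}_{h_i}$ and its chain $\xi_{h_i}$: every vertex lying on the cycle also lies on the chain, because the single pair $\tau_j=(v_j,v_k)$ removed from $\mathring{\pi}_{h_i}$ to form $\xi_{h_i}$ has, by the definition of cycle, each of its endpoints appearing in another ordered pair $\tau_s\in\mathring{\pi}_{h_i}\cap\xi_{h_i}$. Hence $v_j,v_k\in\xi_{h_i}$ as well, and no vertex of $\mathring{\pi}_{h_i}$ is lost. Composing the two implications, for each $u\in V$ we get $u\in\mathring{\pi}_{h_i}$ and therefore $u\in\xi_{h_i}$, which is exactly the claim.

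There is essentially no hard step here; the only point that needs care is making explicit that ``chain'' deletes exactly one pair and that Theorem~\ref{c10} already absorbs the worry about the deleted pair's endpoints, so the argument does not need to re-examine the internal structure of the cycle. I would keep the write-up to two or three sentences, citing the definition of Hamiltonian cycle for the inclusion $V\subseteq\mathring{\pi}_{h_i}$ and Theorem~\ref{c10} for the inclusion of cycle-vertices into the chain.
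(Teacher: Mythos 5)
Your proposal is correct and follows exactly the paper's own route: unfold the definition of Hamiltonian chain to obtain the underlying Hamiltonian cycle $\mathring{\pi}_{h_i}$ with $\xi_{h_i}\subseteq\mathring{\pi}_{h_i}$, apply Theorem~\ref{c10} to pass vertices from the cycle to the chain, and use the definition of Hamiltonian cycle to cover all of $V$. The extra remark re-deriving why the deleted pair's endpoints survive is already contained in Theorem~\ref{c10} and is harmless.
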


\begin{proof}
Given a no-empty graph $G=(V,~\tau)$. Let $\xi_{h_i}$ be a Hamiltonian chain on instance. With the definition of chain, there exists a Hamiltonian cycle $\mathring{\pi}_{h_{i}}$ such that $\xi_{h_i}\subseteq \mathring{\pi}_{h_{i}}$. With the Theorem\ref{c10}, for each vertex $u\in \mathring{\pi}_{h_{i}}$ have $u\in \xi_{h_i}$. Then as the definition of Hamiltonian cycle, we have that for each vertex $u\in V$ such that $u\in \xi_{h_i}$.\\
\end{proof}

\begin{theorem}\label{c12}
Let $\xi$ be a collection of chains on cycle $\mathring{\pi}_i$. If there is  $\xi\subseteq \rho_m^M(\mathring{\pi}_i )$, then $\vert\xi\vert \leq \vert\mathring{\pi}_i\vert$.
\end{theorem}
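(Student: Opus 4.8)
The plan is to reduce counting chains to counting the pairs of the cycle, in the same spirit that Theorem~\ref{c9} reduces counting distinct rotations. First I would unwind the definition of chain together with the argument already used in the proof of Theorem~\ref{c10}: a component $\xi_k\in\xi$ is a sub-sequence of a cycle of length $\vert\mathring{\pi}_i\vert-1$, so it is obtained from that cycle by deleting exactly one pair $\tau_j$, and conversely deleting any single pair of a length-$N$ cycle leaves a chain (Theorem~\ref{c10} guarantees every vertex survives the deletion). Hence, for a fixed cycle, the chains living inside it are in bijection with a subset of its $N=\vert\mathring{\pi}_i\vert$ pairs $\tau_1,\dots,\tau_N$, so there are at most $N$ of them.

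Next I would bring in the cycle permutation. By Theorem~\ref{c4} we have $\rho_m^M(\mathring{\pi}_i)\subseteq\mathring{\pi}$, and inspecting that proof the effect of $\rho_m^M$ is merely to relocate the pairs $\tau_1,\dots,\tau_N$ to new positions on the sequence (via the index shifts $a\mapsto a+m_2$ and $a\mapsto a-m_1$ used in the proof of Theorem~\ref{c9}); it never introduces a pair that was not already in $\mathring{\pi}_i$, and it does not merge or split pairs. Consequently any chain $\xi_k$ that sits as a length-$(N-1)$ sub-sequence inside some $\rho_m^{M'}(\mathring{\pi}_i)$ is again ``$\mathring{\pi}_i$ with one of its $N$ pairs removed'', so it coincides with one of the $N$ chains counted in the first step. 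Assembling the two observations, $\xi$ injects into $\{\tau_1,\dots,\tau_N\}$, whence $\vert\xi\vert\leq N=\vert\mathring{\pi}_i\vert$.

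For completeness I would dispose of the small and degenerate cases directly: by Theorem~\ref{c6} a cycle has $\vert\mathring{\pi}_i\vert\geq 2$, and when $\vert\mathring{\pi}_i\vert=2$ with $\mathring{\pi}_i=(\tau_1,\tau_2)$ the only chains are $(\tau_1)$ and $(\tau_2)$, giving $\vert\xi\vert\leq 2$; and when $m=0$ or $m=\vert\mathring{\pi}_i\vert$ the permutation fixes the cycle, as noted in the proof of Theorem~\ref{c9}, so no new chains can appear. The step I expect to be the real obstacle is making the middle paragraph rigorous — precisely, showing that rotating the cycle cannot manufacture a genuinely new chain, i.e.\ that the set of length-$(N-1)$ sub-sequences of $\rho_m^{M'}(\mathring{\pi}_i)$ is the same collection of ``arcs'' as that of $\mathring{\pi}_i$, only read from a shifted starting point. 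This is where the position-shift bookkeeping from Theorem~\ref{c9} has to be combined with the fact that ``delete one pair'' commutes with that relabelling.
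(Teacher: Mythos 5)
Your proposal is correct and follows essentially the same route as the paper: both arguments identify each chain with the single pair of $\mathring{\pi}_i$ it omits (equivalently, with the cycle permutation of Theorem~\ref{c4} that places that pair last), observe via the position-shift bookkeeping of Theorem~\ref{c9} that permutation only relocates pairs and never creates or merges them, and conclude $\vert\xi\vert\leq N=\vert\mathring{\pi}_i\vert$. The only cosmetic difference is that you bound the count directly by the $N$ deletable pairs, whereas the paper routes the same bound through Theorem~\ref{c9}'s conclusion $M\leq\vert\mathring{\pi}_i\vert$ on the number of distinct rotations.
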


\begin{proof}
Given a cycle $\mathring{\pi}_i$. Let $\xi$ be a collection of chains on cycle $\mathring{\pi}_i$. If $\vert\mathring{\pi}_i\vert=2$, it is obviously that there are two chains on cycle $\mathring{\pi}_i$. With the Theorem\ref{c3}, $\vert\xi\vert = \vert\mathring{\pi}_i\vert$ holds if $\vert\mathring{\pi}_i\vert=2$.\\%

Consider $\vert\mathring{\pi}_i\vert=N$ and $N>2$, there is a cycle permutation $\rho$ on cycle $\mathring{\pi}_i$. The cycle $\mathring{\pi}_i$ can be such sequence $\mathring{\pi}_i=(\alpha_1,\tau_m,\alpha_2)$. Set sequence $B=(\alpha_1,\tau_m)$. When the power $M$ of cycle permutation $\rho$ equals to 1. There may be a new cycle $\mathring{\pi}_i'=(\alpha_2,B)=(\alpha_2,\alpha_1,\tau_m)$ as Theorem\ref{c4}. If there is a sub-sequence $\alpha=\mathring{\pi}_i'\setminus \tau_m$, it is naturally that sequence $\alpha$ is a chain on cycle $\mathring{\pi}_i'$ with definition of chain. With Theorem\ref{c9}, if $\mathring{\pi}_i=\rho_m^{M}(\mathring{\pi}_i)$ then $M\leq N$, hence $\vert\xi\vert \leq \vert\mathring{\pi}_i\vert$ holds.\\
\end{proof}

\begin{theorem}\label{c13}
Let $\mathring{\pi}_h$ be a no-empty collection of  Hamiltonian cycles on instance $G=(V,~\tau)$. Given each vertex $v_i\in V$. There is at least a component $\mathring{\pi}_{h_i}\in\mathring{\pi}_h$, which start-node is $v_i$.
\end{theorem}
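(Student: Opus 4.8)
The plan is to take any Hamiltonian cycle guaranteed by the hypothesis and rotate it by a single cycle permutation so that the prescribed vertex becomes the left member of the first ordered pair. Since $\mathring{\pi}_h\neq\varnothing$, fix some $\mathring{\pi}_{h_j}\in\mathring{\pi}_h$ and write it as $\mathring{\pi}_{h_j}=(\tau_1,\tau_2,\cdots,\tau_N)$; by Theorem~\ref{c6} we have $N\geq 2$ and $\tau_1(1)=\tau_N(2)$. By the definition of Hamiltonian cycle the given vertex $v_i$ lies on $\mathring{\pi}_{h_j}$, so either $v_i=\tau_1(1)$, in which case $\mathring{\pi}_{h_j}$ itself already has start-node $v_i$ and we are done, or $v_i$ occurs as a medium vertex of $\mathring{\pi}_{h_j}$, i.e. $v_i=\tau_k(2)=\tau_{k+1}(1)$ for some $1\leq k<N$; this uses the medium-vertices-set description from Lemma~\ref{c1_1} together with Theorem~\ref{c6} to cover the boundary index.

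In the non-trivial case I would split $\mathring{\pi}_{h_j}=(\alpha_1,\alpha_2)$ with $\alpha_1=(\tau_1,\cdots,\tau_k)$ and $\alpha_2=(\tau_{k+1},\cdots,\tau_N)$, so $\vert\alpha_1\vert=k$, and apply the cycle permutation $\rho^1_k$ to obtain $\mathring{\pi}_{h_j}'=\rho^1_k(\mathring{\pi}_{h_j})=(\alpha_2,\alpha_1)$. Theorem~\ref{c4} gives $\mathring{\pi}_{h_j}'\in\mathring{\pi}$, i.e. $\mathring{\pi}_{h_j}'$ is again a cycle, and its first ordered pair is $\tau_{k+1}$, whose left member is $\tau_{k+1}(1)=v_i$; hence $\mathring{\pi}_{h_j}'$ has start-node $v_i$. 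It then remains to argue $\mathring{\pi}_{h_j}'\in\mathring{\pi}_h$, which follows from the observation that a cycle permutation merely reorders the ordered pairs of $\mathring{\pi}_{h_j}$ without adding or deleting any; therefore the set of vertices appearing on $\mathring{\pi}_{h_j}'$ equals that of $\mathring{\pi}_{h_j}$, which is all of $V$, and the definition of Hamiltonian cycle puts $\mathring{\pi}_{h_j}'$ in $\mathring{\pi}_h$. Setting $\mathring{\pi}_{h_i}:=\mathring{\pi}_{h_j}'$ (or $\mathring{\pi}_{h_i}:=\mathring{\pi}_{h_j}$ in the trivial case) completes the proof.

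The only genuinely non-routine point, and the one I would treat most carefully, is establishing that a suitable index $k$ exists with $\tau_{k+1}(1)=v_i$ exactly, and handling the smallest case $N=2$ separately, where $\mathring{\pi}_{h_j}=(\tau_1,\tau_2)$ and $v_i$ is either $\tau_1(1)$ or $\tau_2(1)=\tau_1(2)$; once these are pinned down, the remainder is just an invocation of Theorem~\ref{c4} plus the remark that cycle permutation preserves the underlying vertex set.
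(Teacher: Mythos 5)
Your proposal is correct and follows essentially the same route as the paper: both arguments rotate a given Hamiltonian cycle by a cycle permutation so that the prescribed vertex $v_i$ becomes the left member of the first ordered pair, relying on Theorem~\ref{c4} to guarantee the rotated sequence is still a cycle and on the fact that each vertex occurs exactly once as a first element of a pair. The only difference is presentational — you pick the single rotation index $k$ directly, whereas the paper considers all $n$ rotations $\rho_m^n(\mathring{\pi}_{h_k})$ and selects the one whose first pair is $\tau_i$ — and your explicit remark that cycle permutation preserves the vertex set (hence Hamiltonicity) is a small point the paper leaves implicit.
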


\begin{proof}
Given a graph $G=(V,~\tau)$. Let $\mathring{\pi}_h$ be a no-empty collection of  Hamiltonian cycles on instance. Consider a component $\mathring{\pi}_{h_{k}}\in\mathring{\pi}_h$ with start-node is vertex $v_k$. With Theorem\ref{c8}, the length of $\mathring{\pi}_{h_{k}}$ is $n$. Then as Theorem\ref{c9}, if there is a cycle permutation $\rho$ with the power $M\leq n$ on cycle $\mathring{\pi}_{h_{k}}$. Hence, there may exist at most $n$ difference cycles in $\rho_m^n(\mathring{\pi}_{h_{k}})$, i.e. there are $n$ different pairs on the first position of those sequences.%

As definition of cycle, for any vertex $u$ in cycle, there can be vertex $u=\tau_t(1)$ and $u=\tau_s(2)$. As the definition of trail and path, for sequence $\mathring{\pi}_{h_{k}}$, there only exists a pair $\tau_t$ in which $u=\tau_t(1)$. 

Consider a vertex $v_i\in V$ with $v_i\neq v_k$. We have there exists a pair $\tau_i\in \mathring{\pi}_{h_{k}}$ and $v_i=\tau_i(1)$. Summarizing above, there exists a Hamiltonian cycle $\mathring{\pi}_{h_{i}}\in \rho_m^n(\mathring{\pi}_{h_{k}})$, on which first pair is $\tau_i$. Namely, there may exist at least a Hamiltonian cycle with start-node $v_i$ on instance.\\
\end{proof}

\begin{theorem}\label{c14}
Let $\mathring{\pi}_h$ be a collection of  Hamiltonian cycles. For each component $\mathring{\pi}_{h_j}\in\mathring{\pi}_{h}$, then for each cycle $\mathring{\pi}_{h_j}(s)\in\mathring{\pi}_{h_j}$ such that their start-nodes are same. Consider two components $\mathring{\pi}_{h_i},~\mathring{\pi}_{h_k}\in \mathring{\pi}_h$ with $\mathring{\pi}_{h_i}\neq\mathring{\pi}_{h_k}$.We have $\hash\mathring{\pi}_{h_i}=\hash\mathring{\pi}_{h_k}$.
\end{theorem}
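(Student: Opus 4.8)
The plan is to construct an explicit bijection between $\mathring{\pi}_{h_i}$ and $\mathring{\pi}_{h_k}$ by means of cycle permutations, which at once forces $\hash\mathring{\pi}_{h_i}=\hash\mathring{\pi}_{h_k}$. First I would pin down the setup: by the hypothesis each component $\mathring{\pi}_{h_i}$ is precisely the collection of Hamiltonian cycles of $\mathring{\pi}_h$ whose start-node is $v_i$, and by Theorem \ref{c13} this collection is non-empty. By Theorem \ref{c8} every member has length $n$, and then by the definition of Hamiltonian cycle together with the no-repeated-medium-vertex fact of Theorem \ref{c1}, any fixed vertex of $V$ — in particular $v_k$ — occurs as the first coordinate of exactly one pair along any $C\in\mathring{\pi}_{h_i}$.

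Next I would define $f:\mathring{\pi}_{h_i}\to\mathring{\pi}_{h_k}$ as follows. Given $C=(\tau_1,\dots,\tau_n)\in\mathring{\pi}_{h_i}$, let $t$ be the unique index with $\tau_t(1)=v_k$, write $C=(\alpha_1,\alpha_2)$ with $\alpha_1=(\tau_1,\dots,\tau_{t-1})$, and put $f(C):=\rho_{t-1}^{1}(C)=(\alpha_2,\alpha_1)$. By Theorem \ref{c4} the image $f(C)$ is again a cycle; a cyclic rotation neither adds nor removes vertices, so $f(C)$ still visits every vertex of $V$ and hence lies in $\mathring{\pi}_h$; and its start-node is $\tau_t(1)=v_k$, so $f(C)\in\mathring{\pi}_{h_k}$. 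Symmetrically I would define $g:\mathring{\pi}_{h_k}\to\mathring{\pi}_{h_i}$ by left-rotating a cycle until the pair whose first coordinate is $v_i$ stands at the front.

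Then I would verify $g\circ f=\mathrm{id}_{\mathring{\pi}_{h_i}}$ and $f\circ g=\mathrm{id}_{\mathring{\pi}_{h_k}}$. Using the position-shift formulas from the proof of Theorem \ref{c4}, $f$ is a left-rotation by $t-1$ places and $g$ is a left-rotation by the complementary amount, so their composite is a left-rotation by $n$ places; as analysed in the proof of Theorem \ref{c9} (the case $a(M)=a$ for every position $a$), this returns every pair to its original position, whence $g(f(C))=C$, and symmetrically $f(g(C'))=C'$. Consequently $f$ is a bijection and $\hash\mathring{\pi}_{h_i}=\hash\mathring{\pi}_{h_k}$. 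The hard part will be the bijectivity argument: since $f$ rotates different cycles by different amounts, injectivity is not immediate, and it is only rescued by the uniqueness of the rotation bringing a prescribed vertex to the front (Theorem \ref{c1}), which is exactly what lets $g$ serve as a genuine two-sided inverse; a smaller point to handle with care is confirming that $f(C)$ really lands in $\mathring{\pi}_h$ rather than merely in $\mathring{\pi}$.
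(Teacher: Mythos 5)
Your proposal is correct and follows essentially the same route as the paper: both proofs use cycle permutation to rotate the unique pair with first coordinate $v_k$ to the front, relying on Theorem \ref{c13}, the length-$n$ property from Theorem \ref{c8}, and the uniqueness of that pair's position to get a one-to-one correspondence between the two components. Your packaging as an explicit map with a two-sided inverse (composite rotation by $n$ places being the identity) is a cleaner presentation than the paper's separate surjectivity and injectivity arguments by contradiction, but the underlying idea is identical.
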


\begin{proof}
Let $\mathring{\pi}_h$ be a collection of Hamiltonian cycles on instance $G=(V,~\tau)$. For each component $\mathring{\pi}_{h_j}\in\mathring{\pi}_{h}$, such that each cycle in component $\mathring{\pi}_{h_j}$, the start-node is same. Then for two components $\mathring{\pi}_{h_i},\mathring{\pi}_{h_k}\in \mathring{\pi}_h$ and having $\mathring{\pi}_{h_i}\cap\mathring{\pi}_{h_k}=\varnothing$. Set vertex $v_i\in V$ is the start-node of cycle $\mathring{\pi}_{h_i}$, similarly we can set $v_k$ is the start-node of cycle $\mathring{\pi}_{h_k}$.\\%

Consider having a cycle permutation $\rho$ with the power $M\leq n$ on cycle $\mathring{\pi}_{h_{i}}$. As Theorem\ref{c13}, for a cycle $\mathring{\pi}_{h_i}(s)\in \mathring{\pi}_{h_i}$, there may be a Hamiltonian cycles $\mathring{\pi}_{h_i}(s)_k\in \rho_m^M(\mathring{\pi}_{h_i}(s))$, on which the start-node is $v_k$. Then we have cycle $\mathring{\pi}_{h_i}(s)_k\in\mathring{\pi}_{h_k}$. Consequently, if $\hash \mathring{\pi}_{h_i}= K$, then there at least are $K$ Hamiltonian cycles in set $\mathring{\pi}_{h_k}$, with $\rho_m^M(\mathring{\pi}_{h_i}(j))\cap\mathring{\pi}_{h_k}\neq\varnothing$(for $1\leq j\leq K$).\\

Now we need prove the intersection above is $\mathring{\pi}_{h_k}$ self. Then We can assume a cycle $H_k\in\mathring{\pi}_{h_k}\setminus \rho_m^M(\mathring{\pi}_{h_i}(j))$. Similarly, there may be a cycle permutation $\rho$ with the power $M$ on cycle $H_k$, and there can be a Hamiltonian cycles $H_i\in\rho_m^M(H_k)$ and $H_i\in \mathring{\pi}_{h_i}$. For each pair $\tau_j\in H_k$ such that having $\tau_j\in H_i$. Hence, there may be $H_k\in\rho_m^M(\mathring{\pi}_{h_i}(j))$, a contradiction to assumption. Hence $\mathring{\pi}_{h_k}\subseteq\rho_m^M(\mathring{\pi}_{h_i}(j))$. \\%

Consider there may be two cycle $H_i(s), H_i(t)\in\mathring{\pi}_{h_i}$ with $s\neq t$ such that $H_i(s)\neq H_i(t)$. We assume there exists a cycle $H_k\in\mathring{\pi}_{h_k}$, and there is $H_k=\rho_{m_s}^{M_s}(H_i(s))$ and $H_k=\rho_{m_t}^{M_t}(H_i(t))$. Then for each pair $\tau_o\in H_k$ such that $\tau_o\in (H_i(s)\cap H_i(t))$. As the constraint from definition of trail, we may have $H_i(s)=H_i(t)$, a contradiction to given condition $H_i(s)\neq H_i(t)$. Hence there is no such cycle in component $\mathring{\pi}_{h_k}$, $\hash\mathring{\pi}_{h_i}=\hash\mathring{\pi}_{h_k}$ holds. We call the $K$ \emph{Invariant of Graph Traversal}.\\
\end{proof}

\begin{descussion}
As Theorem\ref {c14}, we understand that however you randomly choose a vertex on an instance as the start-node for graph traversal: the number of Hamiltonian cycles in result is equivalent to a constant. Certainly, if we find out those Hamiltonian cycles with a start-node, then those other Hamiltonian cycles with other start-node can be produced by cycle permutation. Theorem\ref{c12} shows that although there justly are Hamiltonian chains in results, but you can check the results is correct or wrong by this constant. And we know the real quantity of Hamiltonian cycles in result of BOTS: it is a half of number of results. Actually, cycles both are the invariant of graph traversal, but justly the Hamiltonian cycle is importance at present.\\%

Moreover, you can find the fact that cycle actually is the structure of \emph{Circular Linked List} in memory of machine. The chain actually is a sequence of data extracting from cyclic list, which keeps the order on a cycle.\\
\end{descussion}

~\newline
\textbf{Section Summary. }In this subsection, we introduce how to abstract the simple pairs relation to define trail, path and cycle, in which these logical relation is $\mathring{\pi}\subseteq\vec{\pi}\subseteq\pi\subseteq \hat{\tau}$. It leads to cycle not be a vertices multi-set again. The keypoint is that these definitions have been quantified. Underlaying the model, we can look for these things what we want. Hence you can see that some axioms in current theory can be proved in the new system. For the self-cycle, author has refused it in definition of trail at beginning. Because it can destroy the whole quantified structure, and for other reason, it is an invalid traversal visiting for equivalent visiting. \\%

If you insist to against this new logic structure, we show an instance to support this system. Let a sequence $A$ be a medium vertices set of cycle $B$. With Theorem\ref{c1}, we can write the vertices sequence set $C$ of cycle $B$ as follow $C=(u,~A,~u)$. This form is always to represent a cycle in current graph theory.\\%

For Theorem\ref{c6}, which claims the minimum length of cycle is 2. That is a truth. After you read the section of graph coloring, author believes you must agree this viewpoint.\\ 

\subsection{BOCPS Algorithm}
In this sub-section, author introduces a speed-up linear algorithm, which is searching the greatest common factor and the least common multiple for given two natural numbers. Recall the description in the block of Theorem\ref{c9} proof. Recall the equations group as follow%

\begin{align*}
a\coloneq a+k_1m_2-&k_2m_1\\
k_1m_2-k_2m_1 &= 0\\
m_1+m_2&=N\\
k_1+k_2&=M
\end{align*}

Consider two natural numbers $m_1,~m_2$. Let a cycle permutation $\rho$ be on a cycle $\mathring{\pi}$ with $\vert\mathring{\pi}\vert = m_1+m_2$. It can always make the $a$th pair come back to the initial position for less than and equal to $N$ times. \\%

It is obviously that $k_1,~k_2$ both record the process and they are the coefficients of $m_2,~m_1$ respectively and $k_1\cdot k_2\neq 0$. When the number $a$ in the block $\alpha_1$, there is $a\coloneq a+m_2$. Otherwise, if $a>m_1$, then $a\coloneq a-m_1$. Approach only controls the value of $a$ with $a<=m_1 \vee a>m_1$. While the $a$ equals to the initial value, we understand $k_1m_2-k_2m_1 =0$. For existing the greatest common factor on $m_1,~m_2$, then we know the $k_2,~k_1$ is the least ratio of integers on $m_1,~m_2$ as the form $N/m=M/k_1=(t\mu/s\mu)$. Hence, author call this algorithm \emph{Based On Cycle Permutation Search}, abbr. by BOCPS. The pseudocode is in following:\\

\renewcommand\arraystretch{1.0}
\begin{longtable}{ p{120mm}}
\toprule  
\textbf{Algorithm 3: BOCPS}\\
\toprule  
\textbf{input} $m_1,~m_2;$\\
\quad\quad \quad   $s=1;~ k_1=k_2=0;$\\
\textbf{output}  $k_1,k_2$\\
\textbf{00}\quad                         \textbf{For }$1\rightarrow m_1+m_2$ \textbf{ do}\\
\textbf{01}\qquad 			\textbf{if} $(s>m_1)  $\textbf{Than }$ s-m_1;~ k_2++;$\\
\textbf{02}\qquad \quad              \textbf{Else} $s+ m_2;~ k_1++;$\\       
\textbf{03}\qquad                       \textbf{if} $(s=1)$ \textbf{break;}\\                  
\textbf{04}\quad       		\textbf{output }$k_1,k_2$;\\                           			  
\bottomrule
\end{longtable}
~\newline
\textbf{Algorithmic Complexity. }As Theorem\ref{c9}, we understand the greatest loop times are $M$ and $M\leq N$. In the worst-case, if there is no common factor on $m_1,~m_2$. The program need to calculate for $4N$ times for plus or subtraction and for $2N$ times for logical comparison. When having a greatest common factor on $m_1,~m_2$, the loop times are $k_1+k_2$. Further if given two number $s_1, s_2$ having $s_1/s_2=k(m_1/m_2)$, then the loop times on $s_1,~s_2$ will not change with the $k_1,~k_2$ as the coefficients of $s_1, s_2$. Hence the runtime complexity is $\Theta(c) \leq T \leq O(n)$. In fact, the $n$ can be optimized equal to $max(m_1, m_2)/2$ if $max(m_1, m_2)/2>min(m_1,m_2)$.%

Summarizing above, this algorithm can solve four problems:
\begin{flushleft}
\begin{enumerate} 
\item Find the greatest common factor on two nature numbers $m_1,~m_2$: $F=m_1/k_1$ or $F=m_2/k_2$.
\item Find the least common multiple on two natural numbers $m_1,~m_2$: $L=m_1\cdot k_2$ or $L=m_2\cdot k_1$.
\item Find the minimumal ratio of integers for two natural number pair: $m_1/m_2= k_2/ k_1$.
\item Solving a natural number equations group:
\begin{align*}
k_1m_2-k_2m_1=0\\
k_1+k_2\leq m_1+m_2\\
k_1\cdot k_2\neq 0
\end{align*}
\end{enumerate}
\end{flushleft}

\section{Graph Partition}
Author aims to find out an approach to obtain an ordered distribution of vertices as some laws. The graph partition is the approach that those vertices are partitioned into a sequence of regions separately, in which they possess some properties like on coordinate. Consequently we can solve some problems underlaying those properties, e.g. the shortest path.

\subsection{Definition and Features}
\begin{definition}
Given a no-empty instance $G=(V,\tau)$. There is a family of vertices $\Sigma$ on set $V$. Let set $\Sigma$ be a sequence of components, we abbr. it by $\Sigma=(\sigma_i)_{i=1}^ t$. For each vertex $v\in \sigma_{i+1}$ with $\sigma_i,\sigma_{i+1}\in\Sigma$ and $1\leq i < t $, there may be at least a vertex $u\in\sigma_i$ such that $(u,v)\in\tau$. We call $\Sigma$ \emph{partition of graph}. Each components in set $\Sigma$ is called \emph{region}, denote by $\sigma$.%

If there are two regions $\sigma_i,\sigma_j\in\Sigma$, for $i<j$, we call $\sigma_i$ \emph{low region}, to contrary $\sigma_j$ is \emph{high region}.
\end{definition}

\begin{lemma}\label{gp2}
If we define the distance of each pair $u\tau v$ to one unit, then for two adjacent components on sequence $\Sigma$, the distance between them is one unit.
\end{lemma}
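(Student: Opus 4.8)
The plan is to unwind the definition of \emph{partition of graph} and combine it with the stipulated unit cost on each pair $u\tau v$. First I would fix two adjacent regions $\sigma_i,\sigma_{i+1}\in\Sigma$ with $1\le i<t$, and make precise what ``distance between two components'' means: take it to be the minimum, over vertices $u\in\sigma_i$ and $v\in\sigma_{i+1}$, of the cost of a shortest trail from $u$ to $v$, where each pair $u\tau v$ contributes one unit. Then it suffices to bound this minimum from above and from below by one unit.

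For the upper bound, I would pick an arbitrary vertex $v\in\sigma_{i+1}$. By the defining property of a partition of graph, there is at least one $u\in\sigma_i$ with $(u,v)\in\tau$, i.e. the pair $u\tau v$ is present; since each such pair has been assigned cost one unit, the vertex $v$ is reachable from $\sigma_i$ at cost one unit. As $v$ was arbitrary, every vertex of $\sigma_{i+1}$ is reachable from $\sigma_i$ in one unit, so the distance from $\sigma_i$ to $\sigma_{i+1}$ is at most one unit.

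For the lower bound, I would use that $\Sigma$ is a partition of $V$ in the sense of the \emph{Partition of a set} recalled in Section~2: distinct components are disjoint, so $\sigma_i\cap\sigma_{i+1}=\varnothing$. Hence no vertex lies in both regions, a zero-cost (empty) connection between them is impossible, and the distance is at least one unit. Combining the two bounds yields exactly one unit.

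I do not expect a genuine obstacle here; the statement is essentially a restatement of the adjacency condition built into the definition of partition of graph, together with the unit-cost convention. The only point that needs a little care is the one flagged above, namely pinning down the notion of distance between two components and observing that the direct pair $u\tau v$ realises the minimum while disjointness of regions rules out anything smaller.
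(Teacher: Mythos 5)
Your proposal is correct, and its first half coincides with the paper's own argument: both of you extract, from the defining property of a partition of graph, a direct pair $u\tau v$ with $u\in\sigma_i$, $v\in\sigma_{i+1}$, and use the unit-cost convention to conclude the distance is at most one. Where you diverge is in how the argument is closed. The paper spends its second paragraph ruling out the possibility that the distance exceeds one unit, by assuming a connecting sub-path $p$ with $\vert p\vert>1$ and deriving a contradiction: the middle vertices of $p$ would have to populate an intermediate region $\sigma_k$ strictly between $\sigma_i$ and $\sigma_{i+1}$, contradicting their adjacency in the sequence $\Sigma$. Under your (sensible) reading of ``distance'' as a minimum over connecting trails, that whole step is subsumed by the existence of the direct pair, so you can drop it. Conversely, you supply a lower bound the paper leaves implicit: since $\Sigma$ is a partition of $V$ (as the paper itself asserts in Lemma 4.1.3), $\sigma_i\cap\sigma_{i+1}=\varnothing$, so no zero-cost identification of the two regions is possible and the distance is at least one. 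Your version is therefore the logically tighter decomposition --- it pins down what ``distance between components'' means and then proves exactly the two inequalities that the definition demands --- while the paper's version carries a redundant contradiction argument and an unexamined assumption that the distance is nonzero. Nothing in your proposal would fail against the paper's definitions.
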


\begin{proof}
Let $\Sigma$ be a partition of graph with $\vert \Sigma\vert =N$ and $ N\geq 2$. Consider two neighbors $\sigma_i, \sigma_{i+1}\in\Sigma$. For each vertex $v\in\sigma_{i+1}$, there is always a vertex $u\in \sigma_i$ such that pair $(u, v)\in\tau$ with the definition of graph partition. When a traversal visiting is from $\sigma_i$ to $\sigma_{i+1}$, then at least there may exist a sub-path $u\tau v$ on the path. Let the distance of each traversal relation is one unit, we have the distance between two neighbors $\sigma_i, \sigma_{i+1}$ is one unit.%

Assume that the distance between pair $\sigma_i, \sigma_{i+1}$ is greater than one unit. Then at least there may be a sub-path $p$ with $\vert p\vert>1$ between $\sigma_i$ and $\sigma_{i+1}$ and from former to latter. Observe there may be more than and equal to 1 vertices on middle segment of path $p$ then, there at least may be a region $\sigma_k$ contains those vertices between regions $\sigma_i$ and $\sigma_{i+1}$ with definition of graph partition. This is a contradiction to given condition of $\sigma_i$ and $\sigma_{i+1}$ are neighbors on sequence $\Sigma$. Namely, there is no redundant vertex between two adjacent components on sequence $\Sigma$.\\
\end{proof}

\begin{lemma}\label{gp3}
Consider a pair $u\tau v$ on an instance with $u\neq v$. Let $\Sigma$ be a partition of graph and $\vert \Sigma \vert >0$. Consider ordered pair $(u,v)\in\tau$. If there is $u\in \sigma_i$ and $v\in\sigma_j$ with $j\geq i$ and $\sigma_i,\sigma_j\in\Sigma$, then $0\leq j-i\leq 1$.
\end{lemma}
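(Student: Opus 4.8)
The plan is to argue by contradiction, after disposing of the trivial part. Since the hypothesis already gives $j\ge i$, the inequality $j-i\ge 0$ is immediate, and the cases $j=i$ and $j=i+1$ trivially satisfy $0\le j-i\le 1$; so the whole content is to show that $j\ge i+2$ is impossible. I would therefore assume, for contradiction, that $j\ge i+2$, and derive a clash with the definition of \emph{partition of graph} together with Lemma~\ref{gp2}.

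First I would set up the distance picture. Adopting the convention of Lemma~\ref{gp2} that each pair in $\tau$ has length one unit, that lemma tells us consecutive regions $\sigma_k,\sigma_{k+1}$ are one unit apart; chaining this along the regions $\sigma_i,\sigma_{i+1},\dots,\sigma_j$ — and using that any traversal from a vertex of $\sigma_i$ into a vertex of $\sigma_j$ with $j\ge i+2$ must leave $\sigma_i$, enter the intermediate region $\sigma_{i+1}$, and only then reach $\sigma_j$, because $\Sigma$ is a partition of $V$ and a vertex lies in exactly one region — shows that any traversal realizing a connection from $\sigma_i$ to $v$ has length at least $j-i\ge 2$. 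On the other hand, the hypothesis $(u,v)\in\tau$ with $u\in\sigma_i$ exhibits a single pair, hence a traversal of length exactly one unit, from $\sigma_i$ to $v$. These two statements contradict each other, so $j\le i+1$, and combined with $j\ge i$ we conclude $0\le j-i\le 1$. Along the way I would also invoke the backward-predecessor clause of the definition (each vertex of $\sigma_{k+1}$ has a predecessor in $\sigma_k$) to make the ``must pass through $\sigma_{i+1}$'' step precise, reading the region index of a vertex as the smallest feasible one, which is how $\Sigma$ is constructed.

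I expect the main obstacle to be exactly that last point: the stated definition of \emph{partition of graph} only guarantees that a vertex of $\sigma_{k+1}$ has \emph{some} predecessor in $\sigma_k$, not that the vertex sits in the lowest region reachable from the first region; without that minimality a ``shortcut'' pair could in principle jump several regions. So the delicate part of the write-up is to justify — either by appealing to the construction of $\Sigma$ or by reworking the additivity-of-distance argument from Lemma~\ref{gp2} so that an edge cannot skip a region — that $v$ cannot lie in $\sigma_j$ with $j\ge i+2$ when $u\in\sigma_i$ and $(u,v)\in\tau$. Once that is pinned down, the remainder is routine case bookkeeping.
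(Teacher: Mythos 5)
You have put your finger on the real obstruction, and unfortunately it is not one you can write around: under the definition of \emph{partition of graph} as literally stated, the lemma is false. The definition only requires that every vertex of $\sigma_{k+1}$ have \emph{some} predecessor in $\sigma_k$; it does not require a vertex to sit in the lowest region reachable from $\sigma_1$. Concretely, take the undirected $4$-cycle $V=\{a,b,c,d\}$ with $\tau$ the symmetric closure of $\{(a,b),(b,c),(c,d),(d,a)\}$, and put $\Sigma=(\{a\},\{b\},\{c\},\{d\})$. Each vertex of $\sigma_{k+1}$ has a predecessor in $\sigma_k$, so $\Sigma$ satisfies the definition, yet $(a,d)\in\tau$ with $a\in\sigma_1$ and $d\in\sigma_4$ gives $j-i=3$. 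So the ``delicate part'' you defer cannot be extracted from the definition, and your distance-chaining route through Lemma~\ref{gp2} inherits the same defect, since that lemma's proof also silently assumes minimality of region indices.

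For comparison, the paper's own proof is a compressed version of exactly the fallacy you anticipated: it assumes $j-i>1$ and asserts that $v\in\sigma_j$ forces $u\in\sigma_{j-1}$ ``with definition of graph partition.'' The definition only yields \emph{some} vertex $u'\in\sigma_{j-1}$ with $(u',v)\in\tau$; identifying that $u'$ with the particular $u$ of the hypothesis is unjustified, and disjointness of regions only tells you $u\neq u'$, not that the configuration is impossible. The statement becomes true --- and both your argument and the paper's go through essentially as written --- once the definition is strengthened to require that each vertex lie in the \emph{smallest} region compatible with the predecessor condition, i.e.\ that $\Sigma$ is the BFS-style layering actually produced by the partition algorithm of Section~4.2. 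The honest fix is to add that minimality as an explicit hypothesis (or to the definition) rather than to hunt for a derivation of it that the stated axioms do not support.
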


\begin{proof}
Given a partition of graph $\Sigma$ and $\vert \Sigma \vert >0$. We set for a pair $u\tau v$, there are two regions $\sigma_i,\sigma_j\in\Sigma$ such that $ u\in\sigma_i$ and $v\in\sigma_j$ with $j\geq i$. Assume to $ j-i>1$, then for vertex $v\in\sigma_j$ such that having $u\in\sigma_{j-1}$ with definition of graph partition, a contradiction to assumption $j-i>1$. Hence, we have $0\leq j-i\leq 1$. Namely, there is no such a shortcut on sequence $\Sigma$ from $i$th low region to $j$th high region with $j-i>1$.\\
\end{proof}

\begin{lemma}\label{gp4}
Let $\Sigma$ be a partition of graph on instance $G=(V,\tau)$, then $\vert \Sigma \vert \leq \vert V\vert$.
\end{lemma}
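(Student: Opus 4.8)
The plan is to reduce the statement to a counting argument: a partition of graph $\Sigma=(\sigma_i)_{i=1}^{t}$ is, first of all, a partition of the vertex set $V$, so by the notion of \emph{partition of a set} recalled in the Preliminaries each region $\sigma_i$ is non-empty, the regions are pairwise disjoint, and their union is $V$. I would open the proof by making this inheritance explicit, since it is the hinge of everything that follows. From it the bound is immediate: pick a representative $w_i\in\sigma_i$ from each of the $t=|\Sigma|$ regions; non-emptiness makes each choice legitimate and pairwise disjointness makes the $w_i$ distinct, so $\{w_1,\dots,w_t\}$ is a set of $t$ distinct vertices contained in $V$, whence $t\leq|V|$.

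To stay in the paper's inductive style one may equivalently argue by induction on $t=|\Sigma|$. Base case $t=1$: $G$ is no-empty, hence $|V|\geq 1=|\Sigma|$. Inductive step with $t\geq 2$: delete the last region $\sigma_t$; the truncated sequence $(\sigma_i)_{i=1}^{t-1}$ still meets the adjacency clause of the definition — that clause, for each index $i$, involves only $\sigma_i$ and $\sigma_{i+1}$ with $i+1\leq t-1$ — and it is a partition of the non-empty set $V\setminus\sigma_t$. By the induction hypothesis $t-1\leq|V\setminus\sigma_t|=|V|-|\sigma_t|\leq|V|-1$, so $t\leq|V|$.

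The only real obstacle is the reading of the definition: one must be willing to assert that ``a family of vertices $\Sigma$ on set $V$'' called a partition genuinely carries the three partition-of-a-set conditions. If one wanted to avoid that reading, there is a graph-theoretic fallback: by Lemma \ref{gp2} consecutive regions are one unit apart, so choosing $u_i\in\sigma_i$ and $v_i\in\sigma_{i+1}$ with $(u_i,v_i)\in\tau$ (available from the definition) and concatenating these edges builds a walk meeting every region; using Lemma \ref{gp3} to see this walk never jumps back over a region, and collapsing any repeated vertices, one gets a path of length at least $t-1$, and Theorem \ref{c3} bounds any path length by $n=|V|$, again forcing $t\leq|V|$. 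I expect the representative-picking argument to be the intended and shortest proof.
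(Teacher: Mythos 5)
Your representative-picking argument is the same argument as the paper's, just phrased directly instead of by contradiction: the paper assumes $\vert\Sigma\vert>n$ and derives that some region must be empty or two regions must intersect, while you pick one vertex per region and use non-emptiness plus disjointness to get $\vert\Sigma\vert$ distinct vertices in $V$; both hinge on reading the definition so that $\Sigma$ inherits the three partition-of-a-set conditions from the Preliminaries. The proposal is correct and matches the paper's approach.
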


\begin{proof}
Given a partition of graph $\Sigma$ on instance $G=(V,\tau)$. As the definition of partition of graph, set $\Sigma$ is the partition of set $V$, then $\vert \Sigma \vert \leq \vert V\vert$. Assume to $\vert \Sigma \vert=N$ and $N>n$. Then those $n$ vertices in set $V$ are partitioned into $N$ regions in set $\Sigma$. We may have two cases that some regions are empty or at least there exists an intersection of two regions is no empty. It is obviously that set $\Sigma$ is not a family of vertices in set $V$, contradicts the definition of partition of graph. Hence $\vert \Sigma \vert \leq \vert V\vert$. Observe in the worst-case, the length of sequence $\Sigma$ is $n$.\\
\end{proof}

\begin{definition}
If there is a pair of vertices in different regions separately, then we call them on $H$ direction. Otherwise, they are on $V$ direction.
\end{definition}

\begin{theorem}\label{gp6}
 Let $\Sigma$ be a partition of graph on instance with $\vert \Sigma \vert >0$. For each pair $u,v\in V$ with $u\neq v$, consider there are $u\in\sigma_i$ and $v\in\sigma_j$ respectively with $j\geq i$ and $\sigma_i,\sigma_j\in \Sigma$. If there exists a connected path $P$ from vertex $u$ to vertex $v$, then $\vert P\vert \geq \vert j- i\vert$. 
\end{theorem}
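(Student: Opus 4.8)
The plan is to reduce the statement to a per-edge bound on how much the region index can change, and then sum this bound along the path. First I would fix notation. Since $\Sigma$ is a partition of the vertex set $V$ (this is exactly what is used in the proof of Lemma~\ref{gp4}), every vertex lies in exactly one region, so there is a well-defined ``level'' $\ell(x)$ for each $x\in V$ determined by $x\in\sigma_{\ell(x)}$. Write the connected path as $P=(x_1,x_2,\dots,x_{L+1})$ with $x_1=u$, $x_{L+1}=v$, $L=\vert P\vert$, and $(x_k,x_{k+1})\in\tau$ for every $1\le k\le L$; then $\ell(x_1)=i$ and $\ell(x_{L+1})=j$.

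The key step is the one-edge estimate $\ell(x_{k+1})-\ell(x_k)\le 1$ for every $k$. I would argue by a case split. If $\ell(x_{k+1})\ge\ell(x_k)$, then Lemma~\ref{gp3} applied to the ordered pair $(x_k,x_{k+1})\in\tau$ gives $\ell(x_{k+1})-\ell(x_k)\le 1$ directly. If instead $\ell(x_{k+1})<\ell(x_k)$, the difference is strictly negative, hence trivially at most $1$. So in all cases $\ell(x_{k+1})-\ell(x_k)\le 1$, which is the whole content of the argument.

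Then I telescope:
\[
j-i=\ell(x_{L+1})-\ell(x_1)=\sum_{k=1}^{L}\bigl(\ell(x_{k+1})-\ell(x_k)\bigr)\le\sum_{k=1}^{L}1=L=\vert P\vert .
\]
Since $j\ge i$ we have $\vert j-i\vert=j-i$, hence $\vert P\vert\ge\vert j-i\vert$, as claimed. The same reasoning could be phrased as an induction on $\vert P\vert$, with the base case $\vert P\vert=1$ handled immediately by Lemma~\ref{gp3} and the inductive step peeling off the last edge; I would probably present the telescoping version since it is cleaner and makes clear that descending edges only help.

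I do not expect a real obstacle. The two points that need care are minor: first, Lemma~\ref{gp3} is stated only for ordered pairs whose endpoints satisfy $j\ge i$, so it must not be invoked on a ``downward'' edge of $P$ — the case split above is precisely what avoids this. Second, one should make explicit that $P$ is read in the orientation of $\tau$ (so that each consecutive pair is a genuine element of $\tau$ to which Lemma~\ref{gp3} applies) and that $\vert P\vert$ counts ordered pairs, i.e. one fewer than the number of listed vertices, so that the index range $k=1,\dots,L$ in the telescoping sum is correct.
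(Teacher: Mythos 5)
Your proof is correct, and it reaches the conclusion by a different decomposition than the paper. You prove a per-edge bound $\ell(x_{k+1})-\ell(x_k)\le 1$ (upward edges via Lemma~\ref{gp3}, downward edges trivially) and telescope it along the path, so $j-i=\sum_k\bigl(\ell(x_{k+1})-\ell(x_k)\bigr)\le\vert P\vert$. The paper instead argues by contradiction with a counting argument: it assumes $t<j-i$, invokes Lemma~\ref{c1_1} to bound the medium vertices set by $t-1$, observes that the $j-i-1$ intermediate regions cannot all be hit, and concludes that the path would have to ``skip'' a region, which Lemma~\ref{gp3} forbids. Both arguments rest on the same key fact (an edge advances the region index by at most one), but your aggregation is cleaner: it needs no machinery about medium vertices sets, it handles non-monotone paths (ones that descend and climb back up) completely transparently, and it makes explicit the intermediate-value step that the paper leaves implicit when it passes from ``some region $\sigma_k$ is missed'' to ``there must be a shortcut.'' The paper's version does buy one thing your write-up omits: it goes on to note (via a cycle example) that the bound is one-directional, i.e.\ the reverse traversal from $v$ to $u$ need not satisfy it on a directed instance; that is commentary rather than part of the claimed inequality, so its absence is not a gap.
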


\begin{proof}
Let $\Sigma$ be a partition of graph on instance and $\vert \Sigma \vert >0$. Consider a pair $u,v\in V$ and $u\neq v$, there is $u\in\sigma_i$ and $v\in\sigma_j$ with $\sigma_i,\sigma_j\in \Sigma$. We set $P$ be a connected path from vertex $u$ to vertex $v$ with $P=(\tau_s)_{s=1}^t$. If $i=j$, it is certainly that $\vert P\vert \geq 0$.\\%

Consider in case $j>i$, we let $A$ be the medium vertices set of $P$, then $\vert A\vert = t-1$ with Lemma\ref{c1_1}. We assume to $t<j-i$, then $\vert A\vert < j- i-1$. Because there exist $j-i-1$ regions on the $H$ direction between $i$th and $j$th region, therefore there may exist such $\sigma_k$ having $A\cap \sigma_k=\varnothing$ with $ i\leq k\leq j$. We understand there is no such shortcut on path $P$ with Lemma\ref{gp3}. Hence, there is $t\geq j-i$.%

Consider each pair $\tau_h$ on path $P$ again. As Lemma\ref{gp2}, there may be such case each pair $\tau_h$ between two adjacent components in set $\Sigma$, then $\vert P\vert = j- i$. Hence, $\vert P\vert \geq j- i$.\\%

Given an instance of a cycle $\mathring{\pi_i}$ and $\vert \mathring{\pi}_i\vert=N$. Let the vertex $\tau_1(1)\in\sigma_1$ and $\vert \sigma_1\vert =1$. As the definition of graph partition, the other vertices can be partitioned into the $i$th region. It is naturally that there exists a connected path from vertex $\tau_1(1)$ to vertex $\tau_N(1)$. As the definition of cycle, there is the ordered pair $\tau_N\in\tau$ having $\tau_N(2)=\tau_1(1)$. The distance from $\tau_{N}(1)$ to $\tau_1(1)$ is only a pair. If $N\geq 3$, then we can understand that $\vert P\vert \geq j- i$ holds on the direction of low region to high region only.\\%

Hence $\vert P\vert \geq j- i$ is not a necessary condition in this Theorem.\\
\end{proof}

\begin{lemma}\label{gp7}
Given a partition of graph $\Sigma$ and $\vert \Sigma \vert >0$. On an instance $G=(V,\tau)$, for each neighbors $u,v\in V$ with $u\neq v$, such that there is a bidirected ordered pairs $(u,v),(v,u)\in\tau$. Consider a pair $u,v\in V$ and $u\neq v$, there are $u\in\sigma_i$ and $v\in\sigma_j$ with $i\leq j$. 
If there exists a shortest connected path $P$ from vertex $u$ to vertex $v$, then the inverted sequence of path $P$ is the shortest connected path from vertex $v$ to vertex $u$.
\end{lemma}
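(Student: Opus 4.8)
The plan is to build an explicit reversal operation on connected paths and to show it carries a shortest $u$-$v$ path to a shortest $v$-$u$ path, the whole argument resting on the hypothesis that neighboring vertices carry both directed pairs. First I would fix a shortest connected path $P=(\tau_1,\tau_2,\ldots,\tau_t)$ from $u$ to $v$ and write it on its vertex sequence as $(x_1,x_2,\ldots,x_{t+1})$ with $x_1=u$, $x_{t+1}=v$ and $\tau_s=(x_s,x_{s+1})\in\tau$ for $1\le s\le t$. For each such $s$ the vertices $x_s,x_{s+1}$ are neighbors, so the bidirected hypothesis gives $(x_{s+1},x_s)\in\tau$ as well. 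Hence the reversed vertex sequence $P^{-1}=(x_{t+1},x_t,\ldots,x_1)$ has every consecutive ordered pair in $\tau$, i.e.\ $P^{-1}$ is a connected path, and it runs from $v$ to $u$; moreover it contains exactly the same number of ordered pairs as $P$, so $\vert P^{-1}\vert=\vert P\vert$.

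Next I would record that this reversal is an involution: reversing $P^{-1}$ returns $P$, and by the same argument the reverse of any connected path from $v$ to $u$ is again a connected path from $u$ to $v$ of the same length. This symmetry is the engine of the proof, so the set of lengths of connected $u$-$v$ paths coincides with the set of lengths of connected $v$-$u$ paths.

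Now suppose, for contradiction, that $P^{-1}$ is not a shortest connected path from $v$ to $u$; then there is a connected path $Q$ from $v$ to $u$ with $\vert Q\vert<\vert P^{-1}\vert=\vert P\vert$. Applying the reversal construction to $Q$ yields a connected path $Q^{-1}$ from $u$ to $v$ with $\vert Q^{-1}\vert=\vert Q\vert<\vert P\vert$, contradicting the assumption that $P$ is a shortest connected path from $u$ to $v$. Hence no such $Q$ exists, and since $\vert P^{-1}\vert=\vert P\vert$ already equals the minimal $u$-$v$ distance, which by the remark above equals the minimal $v$-$u$ distance, $P^{-1}$ is a shortest connected path from $v$ to $u$.

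The only delicate point is the first step: one must check that reversing the vertex sequence genuinely produces an object of the same type, and this reduces entirely to the bidirected-neighbors hypothesis; everything afterward is the routine ``a shorter reverse would reverse back to a shorter original'' argument. Edge cases such as $i=j$, or $\vert P\vert$ being pinned to $j-i$ as in Theorem\ref{gp6}, need no separate treatment, since the reversal argument never refers to the region indices. I do not expect any real obstacle beyond careful bookkeeping of the indices of the reversed sequence.
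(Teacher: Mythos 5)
Your proposal is correct, and it takes a genuinely different route from the paper. You prove the lemma by the standard reversal--involution argument: the bidirected hypothesis makes the reversed vertex sequence $P^{-1}$ a connected path of the same length, reversal is a length-preserving bijection between $u$-$v$ paths and $v$-$u$ paths, and therefore a shorter $v$-$u$ path would reverse back to a shorter $u$-$v$ path, contradicting minimality of $P$. The paper instead routes the argument through the graph-partition machinery: it invokes Lemma~\ref{gp3} to say neither $P$ nor $\bar{P}$ contains a ``shortcut'' spanning two or more regions, then assumes a strictly shorter path $\bar{P'}$ from $v$ to $u$ and asserts that such a path must contain a shortcut, which Lemma~\ref{gp3} forbids. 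That asserted step is the weak point of the paper's proof: a shorter $v$-$u$ path need not contain any ordered pair jumping more than one region, so the paper's contradiction is not actually forced. Your argument avoids this entirely, uses only the symmetry of $\tau$ on neighbors, and, as you note, never needs the partition $\Sigma$ or the region indices $i,j$ at all --- which shows the hypothesis $\vert\Sigma\vert>0$ is inert here. What the paper's approach buys, in principle, is a uniform style with Theorem~\ref{gp6} and Lemma~\ref{gp3} (everything phrased in terms of regions and shortcuts); what yours buys is a complete and more general proof. The only bookkeeping point to make explicit is that the paper defines length as the number of ordered pairs in the sequence, so $\vert P^{-1}\vert=\vert P\vert$ is immediate from your construction.
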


\begin{proof}
Given a partition of graph $\Sigma$ and $\vert \Sigma \vert >0$ on an instance with such case for each neighbors $u,v\in V$ with $u\neq v$, such that there is a bidirected ordered pairs $(u,v),(v,u)\in\tau$. Consider each pair $u,v\in V$ and $u\neq v$, there are $u\in\sigma_i$ and $v\in\sigma_j$ with $i\leq j$ and $\sigma_i, \sigma_j\in\Sigma$. We can set $P$ is a shortest connected path from vertex $u$ to vertex $v$. As the given condition for each ordered pair and its inverse both in set $\tau$ on instance, there is naturally an inverted sequence of path $P$ on instance. Let $\bar{P}$ be this inverted path. %

As described in Lemma\ref{gp3}, there is no shortcut on path $P$ from $i$th low region to $j$th high region. Then there is no shortcut on path $\bar{P}$ from $j$th high region to $i$th low region. Assume to exist a path $\bar{P'}$ form vertex $v$ to vertex $u$ and $\vert \bar{P'}\vert < \vert\bar{P}\vert$. Observe there at least exists a shortcut on path $\bar{P'}$. Let ordered pair $(x,~y)$ be such shortcut, on which there is $t-s\geq 2$ with $i\leq s,t\leq j$ such that vertex $x\in\sigma_t$ and $y\in\sigma_s$. As the given condition, there may be an ordered pair $(y,x)\in\tau$ and from $\sigma_s$ to $\sigma_t$. As Lemma\ref{gp3}, such ordered pair $(y,x)$ is no existing in set $\tau$. Hence, there is no such shortcut $(x,y)$ on path $\bar{P'}$, then $\bar{P'}=\bar{P}$. The path $\bar{P}$ is the shortest path from vertex $v$ to vertex $u$.\\%
\end{proof}

\begin{theorem}\label{gp8}
Let $\Sigma$ be a partition of graph on instance with $\vert \sigma_1\vert=1$. Consider a pair $u,v\in V$ and $u\neq v$ with $u\in\sigma_1$ and $v\in\sigma_i$. If there is a shortest path $P$ from vertex $u$ to vertex $v$, then $\vert P\vert = i-1$.
\end{theorem}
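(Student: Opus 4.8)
The plan is to squeeze $\vert P\vert$ between matching bounds. The lower bound $\vert P\vert\ge i-1$ comes directly from Theorem~\ref{gp6}: taking $u\in\sigma_1$ and $v\in\sigma_i$ there (with $i\ge 1$), every connected path from $u$ to $v$ — in particular any shortest one — has length at least $\vert i-1\vert=i-1$. Note also that $u\ne v$ forces $i\ge 2$, since $\sigma_1=\{u\}$. So the whole content of the theorem reduces to exhibiting one connected path from $u$ to $v$ of length exactly $i-1$, which then pins $\vert P\vert$ at $i-1$.

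For that upper bound I would descend through the regions one step at a time. Since $v\in\sigma_i$, the definition of partition of graph supplies a vertex $w_{i-1}\in\sigma_{i-1}$ with $(w_{i-1},v)\in\tau$; iterating, for $k=i-1,i-2,\dots,2$ it supplies $w_{k-1}\in\sigma_{k-1}$ with $(w_{k-1},w_k)\in\tau$. The chain terminates in $\sigma_1$, and because $\vert\sigma_1\vert=1$ and $u\in\sigma_1$ the final vertex $w_1$ is forced to be $u$ itself. Concatenating the ordered pairs $(u,w_2),(w_2,w_3),\dots,(w_{i-1},v)$ yields a trail of length $i-1$ running from $u$ to $v$.

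It then remains to certify that this trail is genuinely a path. Each $w_k$ lies in the region $\sigma_k$, and the regions of $\Sigma$ are pairwise disjoint since $\Sigma$ is a partition of $V$; hence $u=w_1,w_2,\dots,w_{i-1},v$ are all distinct, so the medium vertices set of the trail consists of $i-1$ distinct vertices and the trail meets the definition of path. Combining this explicit path of length $i-1$ with the lower bound from Theorem~\ref{gp6} gives $\vert P\vert=i-1$.

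The delicate point — and the only place the hypothesis $\vert\sigma_1\vert=1$ is really needed — is the claim that the backward chain lands on $u$ rather than on some other vertex of the first region; with $\sigma_1$ a singleton this is immediate, but without it the constructed path need not reach $u$ at all. A secondary check worth spelling out is that the descent truly visits each intermediate region $\sigma_{i-1},\dots,\sigma_2$ exactly once (it does, by construction, moving from $\sigma_k$ to $\sigma_{k-1}$ at each step), so no shorter path can arise from a "jump" — and this last possibility is in any case already excluded by Lemma~\ref{gp3} and the lower bound above.
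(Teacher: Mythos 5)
Your proposal is correct and takes essentially the same route as the paper: both arguments get the lower bound $\vert P\vert\geq i-1$ from Theorem~\ref{gp6} and then use the definition of partition of graph to produce a witness path of length exactly $i-1$ through consecutive regions. The only difference is organizational --- the paper builds the witness by induction ascending from $u$ (with explicit cases $i=2,3$), whereas you descend from $v$ and use $\vert\sigma_1\vert=1$ to force the chain to land on $u$; your packaging is arguably cleaner and makes the role of that hypothesis more explicit.
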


\begin{proof}
Let $\Sigma$ be a partition of graph on an instance $G=(V,\tau)$ with $\vert \sigma_1\vert=1$. For a pair $u,v\in V$ with $u\neq v$, We set  $u\in\sigma_1$ and $v\in\sigma_i$, and there is a  shortest path $P$ from vertex $u$ to vertex $v$. As Theorem\ref{gp6}, we have $\vert P\vert \geq i-1$. If $i=2$, then for $v\in\sigma_2$ such that having pair $(u,v)\subseteq\tau$ with the definition of graph partition. Observe $\vert P\vert = 1$ if $i=2$.%

Consider the case $i=3$, we may let  $p$ be the shortest path from vertex $u$ to vertex $v$ and $A$ be the medium vertices set of path $p$. Assume to $\vert p\vert >2$, then set $A$ at least contains two vertices $x_1, x_2$, and on path $p$ there are 4 vertices. Hence, there surely is the case $x_1, x_2\in\sigma_2$ or $x_2, v\in \sigma_3$. As definition of graph partition, there certainly is a vertex $x\in\sigma_2$ such that having $(x,v)\in\tau$ and $(u,x)\in\tau$. Obviously, the path $p$ is longer than the path $P=(u\tau x,x\tau v)$. Hence, $\vert P\vert = i-1$ holds while $i=3$.%

If $i>3$, we can set $\vert P\vert = k-1$ holds with $3<k$ and $ k=i-1$ on instance. It is obviously for each vertex $x_j\in\sigma_k$ having a shortest path $P(u,x_j)$ between $u$ and $x_j$, which length is $k-1$. As definition of graph partition, there is at least a vertex $x_i\in\sigma_k$ and the pair $(x_i, ~v)\in\tau$. Hence, there is $\vert P(u,v)\vert=k$ i.e., $\vert P\vert = i-1$ holds.%

With Lemma\ref{gp7}, if the instance is a simple connected graph, then inverted path $\bar{P}$ is the shortest path of vertex $v$ to vertex $u$.\\
\end{proof}
~\newline\textbf{Section Summary. }Recall those proofs above in this subsection, graph partition shows an ordered distribution started from start-region. For the shape of graph partition, the start-node is on the center-block with others wrapping it on the contour lines likes the relation of sun and planets. Each path on $V$ direction is a segment of counter line, and paths on $H$ direction are the connected path among those vertices on different counter lines. You can image the isomorphism of instance is a spider-net. 

Graph partition constructs a relation among these vertices likes Physical fields such that we can setup a model for analyzing the physical flow. But need to be careful of the directed graph. As the properties of traversal relation, there may be no reversing partition on it, such that the case cutting graph would happen on directed bridges.

\subsection{Algorithm of Partition}
The core idea of algorithm is simple. Consider a partitioned root $v_i$. Its leaves would be partitioned to the next region besides partitioned ones. The approach iteratively root-leaves-root partitions those vertices until no vertex to be partitioned. Then approach uses three set. First is set $R$ for storing the regions. $W$ is second set in which there are roots and leaves waiting for partition, and third is set $C$ including all candidates. Then these relations among them are $C=V\setminus R\wedge W \subseteq R$. Hence we define the partition function as follow:
$ \phi_{\tau}:L(s_i)\rightarrow P,~~\text{iff } u\in( L(s_i)\cap C) \text{;~otherwise undefine}$.

\renewcommand\arraystretch{1.0}
\begin{longtable}{ p{120mm}}
\toprule  
\textbf{Algorithm 3: Graph Partition}\\
\toprule  
\textbf{input} graph $G=s_1,s_2\cdots,s_n$\\
\quad\quad \quad  set $R=W~(\text{for }1\leq\vert W\vert<n)$\\
\quad\quad\quad set $C=V\setminus R$\\
\textbf{output}  $R$\\
\textbf{00}\quad                              \textbf{While}$(C\neq\varnothing)$\\
\textbf{01}\qquad                    \textbf{For }$1\rightarrow \vert W\vert$ \textbf{ do}\quad \\
\textbf{02}\qquad\quad  		     $s_i~\Longleftarrow ~v_i \in W $; \\        
\textbf{03}\qquad \quad                  $P\leftarrow\phi_{\tau}(L(s_i))$;\\
\textbf{04}\qquad\quad                   \textbf{If }$P=\varnothing$ \textbf{ Than  break};\\
\textbf{05}\qquad\qquad  		 \textbf{Else } $R(k)\leftarrow W;~ W\leftarrow P; ~C\setminus P;$ \\
\textbf{06}\quad 		\textbf{output }$R=\sigma_1, ~\sigma_2, ~\cdots,~ \sigma_k$;\\              
\bottomrule
\end{longtable} 
~\newline
\textbf{Algorithmic Complexity.} The program need scan whole table for $n$ times and read $m$ nodes per-time in worst-case. It can introduces $m$ vertices to next region such that set $C$ may be removed at most $m$ vertices from it per-time in nice-case. Hence, the set $C$ would be check $c$ times as follow:
\[ c=(n-1)+(n-m-1)+(n-m^2-1)\cdots+1\]
Consider choosing a vertex as start-node and $\vert C\vert = n-1$. If the instance is completed graph, then $m=n-1$. The equation can be written such $c=n-1$. There loop would run for only one time. Hence, the runtime complexity of this graph class is $(n-1)^2=O(n^2)$. %

If the instance is a \emph{r-regular} graph or tree, we understand the loop times can be less than $2\log_{m}n$. Such that the equation above can be transformed to:

\begin{align*}
 c=& 2n\log_{m}n -\sum_{i=1}^N m^i\quad N=\log_{m}n\\
=& 2n\log_{m}n -O(m^N)\\
\leq& 2n\log_{m}n \\
\end{align*}

Then the runtime complexity is $O(Cn^2)$(for $C=2m\log_{m}n$). If the instance is a directed path with $m=1$, then the runtime complexity is $O(n^3)$. We have the runtime complexity of graph partition is $O(n^3)$ in worst-case.

~\newline\textbf{Remark.} The sentence  $R=W~(\text{for }1\leq\vert W\vert<n)$ in pseudocode, it said that there may be $k$ vertices as a start-nodes group. Then it implies there are several sources in a complex physics field. And this graph partition may not associate with weights of edges. 

~\newline\textbf{Experiment of Graph Partition.} Those objects for experiment are on Figure 2 in following.
\begin{center}
\includegraphics[height=40mm]{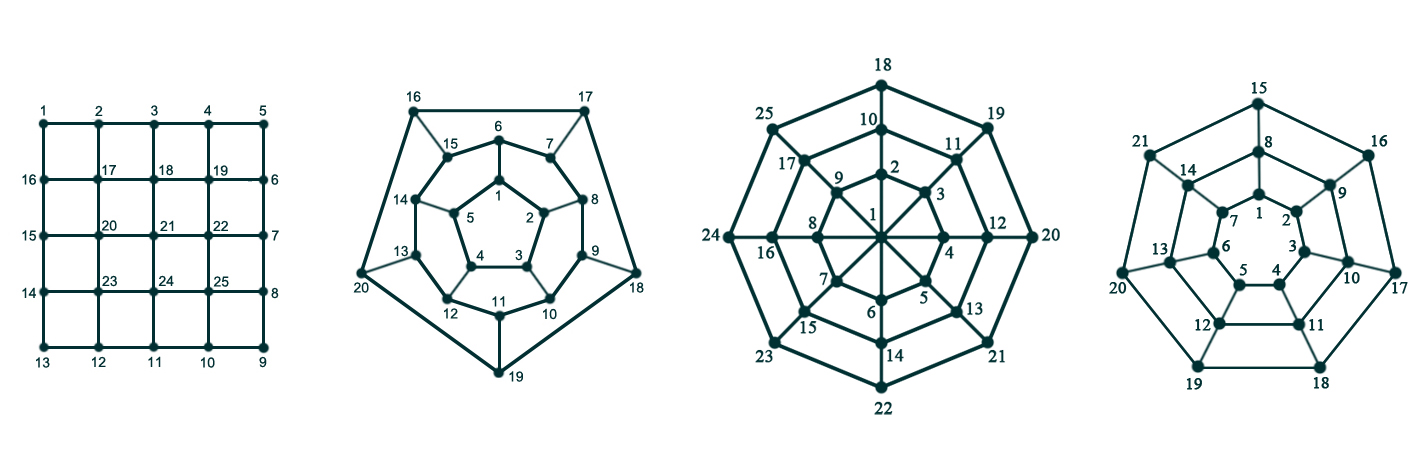}\\
\textbf{Figure. 2}
\end{center}

Author implemented this approach with PHP5.0+Apach2.0, for easy to compare the runtimes among them. We choose the $v_1$ on these figures as the start-node. The results are listed in following Table 10. Columns $n=$ show the number of vertices on each instance. Columns 2th contain two contents: first is the number of the components in set $\Sigma$, second is quantity of vertices in a relevant region respectively. Columns Loops record the times of loops. Columns R.T. are the runtime of program in practices. It is remarkable that all figures are simple graph.

\renewcommand\arraystretch{1.0}
\begin{longtable}{ p{18mm}|| p{53mm}| p{12mm} | p{20mm}}
\caption{\textbf{Exp. of Graph Partition}}\\
\toprule
\textbf{No.(n=)}&$\vert \Sigma\vert;\quad(\vert\sigma_i\vert)_{i=1}^N$&\textbf{Loops}&\textbf{R.T.}\\
\toprule  
1(n=25)& $\vert \Sigma\vert =9;\quad (1,2,3,4,5,4,3,2,1)$&\quad 9&1.3 ms\\ 
\midrule 
2(n=20)& $\vert \Sigma\vert =6;\quad (1,3,6,6,3,1)$&\quad 6&0.932 ms\\ 
\midrule 
3(n=25)&$\vert \Sigma\vert =4;\quad(1,8,8,8)$&\quad 4&2.821 ms\\ 
\midrule 
4(n=21)&$\vert \Sigma\vert =6;\quad (1,3,5,6,4,2)$&\quad 6&1.910 ms\\ 
\bottomrule
\end{longtable}

\section{Graph Coloring}
Graph coloring is that there is an approach to coloring each vertice on a given graph. The chromatic demand is each neighbors can not be labeled with the same colour. Then how much the least colors can label a given instance. Firstly, we give the description of problem with forms as follow 
\begin{flushleft}
\begin{enumerate} 
\item Consider a pair $u,v$ and $u\neq v$. There is a characteristic funtion $f$ to characterize the chromatic relation as follow
\begin{equation}\label{eqc1}
f(u,v)=f(v,u)=\left\{
\begin{array}{ll}
1,\quad&\text{ if  }\{ (u,v), (v,u)\}\cap\tau\neq\varnothing. \\
~&~\\
0,\quad&\text{ otherwise.}\\
\end{array}\right.
\end{equation}

\item  Given a pair $u,v$ and $u\neq v$. There is the relation of their chromatic values as follow
\begin{equation}\label{color1}
g(u,v)\coloneq\left\{
\begin{array}{lll}
C_u&\neq C_v,\quad& \text{if }f(u,v)=1.\\
~&~&~\\
C_u&=C_v \vee C_u,\neq C_v\quad &\text{if }f(u,v)=0.\\
\end{array}\right.
\end{equation}

\end{enumerate} 
\end{flushleft}

We call form\eqref{color1} \emph{coloring relation}. Obviously, there is a symmetrical relation between the pair values of two vertices $u,v$. If we solve this problem with the anti-symmetrical relation of graph traversal, then we can not solve the logical problem on graph coloring. Hence, we need renew to construct a new logic model for this problem.%

\subsection{Edge Relation}
\begin{definition}
Given a traversal relation $\tau$. Let $\epsilon$ be a subset of set $\tau$. For an ordered pair $(u,v)\in\epsilon$, such that there must be its inverse ordered pair $(v,u)\in\epsilon$. Then we call set $\epsilon$ \emph{edge relation}. The relation we  abbr. by $\{u,v\}\in\epsilon$ or $u\epsilon v$.
\end{definition}

Then we can obtain those properties of edge relation. 
\begin{flushleft}
\begin{enumerate} 
\item Anti-reflexivity: if $u\in V$, then there is $(u,u)\notin\epsilon$.
\item Symmetry: if $(u,v)\in\epsilon$, then there is $(v,u)\in\epsilon$.
\item Anti-transitivity: if $(u,v),(v,t)\in\epsilon$, then there may be $(u,t)\notin\epsilon$.
\end{enumerate} 
\end{flushleft}

\begin{proof}
Consider a pair $u,v\in V$. For $\tau\subseteq V^2$, then have $(u,v),(v,u)\in V^2$. If $u=v$, then that is a contradiction to set $V^2$, because otherwise set $V^2$ is a multi-set or exist two pairs $(u,u)$ in it with $(u,u)\neq(u,u)$. Hence, there is not any self-cycle in edge relation, i.e. edge relation possesses property of anti-reflexivity.%

The proof for other properties is similar to traversal relation. Here we have not to prove them.\\ 
\end{proof}

\begin{definition}
If there is $\tau\setminus \epsilon=\varnothing$ on an instance $G$, then we call this instance \emph{simple graph}, denote by $G_{\theta}$. We reserve the abbr. $G_{\theta}=(V,\epsilon)$ to represent a simple graph with no-empty edge relation.
\end{definition}

\begin{definition}
For each pair $u,v\in V$ on an instance $G_{\theta}=(V,\epsilon)$ with $u\neq v$, then there is pair $\{u,v\}\in\epsilon$. We call the instance \emph{completed regular graph},  denote by $G_{\eta}$. We specially define the isolated vertex is a completed regular graph.
\end{definition}

\begin{theorem}\label{cg1}
Given a no-empty graph $G=(V,\tau)$ and $G\notin G_{\theta}$, on which there is a characteristic function $f_{\tau}: V^2\rightarrow \{1,0\}$. If there is a function $g$ with a set $\overline{G}=(\overline{V},\bar{\epsilon})$ as codomain, such that $g\cdot f_{\tau}:V^2\rightarrow \bar{\epsilon}$, then $\overline{G}\subseteq G_{\theta}$. 
\end{theorem}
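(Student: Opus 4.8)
The plan is to unfold the two definitions in play — \emph{edge relation} and \emph{simple graph} $G_\theta$ — and to show that the composite $g\cdot f_{\tau}$ forces the relation $\bar\epsilon$ carried by $\overline{G}$ to be symmetric, which is exactly the condition $G_\theta$ demands. Recall that $G_\theta=(V,\epsilon)$ was defined by $\tau\setminus\epsilon=\varnothing$, i.e. a simple graph is one whose traversal relation coincides with its edge relation. Hence, to conclude $\overline{G}\subseteq G_\theta$ it suffices to prove two things: that $\bar\epsilon$ is a genuine edge relation, and that $\bar\epsilon$ is the whole relation carried by $\overline{G}=(\overline{V},\bar\epsilon)$.

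First I would dispose of non-emptiness: since $G$ is no-empty there is a pair $(u,v)\in\tau$, so $f_{\tau}(u,v)=1$ and therefore $g\cdot f_{\tau}(u,v)\in\bar\epsilon$, whence $\bar\epsilon\neq\varnothing$ and $\overline{V}\neq\varnothing$. Next comes the heart of the argument. Take any $(u,v)$ lying in the image of $g\cdot f_{\tau}$, so $(u,v)\in\bar\epsilon$. Tracing the pair back through $f_{\tau}$ and the symmetrisation recorded in \eqref{eqc1} — where $f(u,v)=f(v,u)$ because the condition $\{(u,v),(v,u)\}\cap\tau\neq\varnothing$ is symmetric in the pair — the value $g$ assigns depends only on the unordered pair $\{u,v\}$, hence $(v,u)\in\bar\epsilon$ as well. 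Thus $\bar\epsilon$ satisfies the defining clause of an edge relation: every ordered pair in it has its inverse in it; anti-reflexivity then follows exactly as in the stated properties of $\epsilon$, since $V^2$ admits no self-pair. Finally, because $\overline{G}=(\overline{V},\bar\epsilon)$ is built solely from $\bar\epsilon$, the traversal relation induced on $\overline{G}$ is $\bar\epsilon$ itself, so $\bar\epsilon\setminus\bar\epsilon=\varnothing$, which is precisely the condition $\tau\setminus\epsilon=\varnothing$ characterising $G_\theta$; therefore $\overline{G}$ is a simple graph, i.e. $\overline{G}\subseteq G_\theta$. (An equivalent route is by contradiction: if $\overline{G}\notin G_\theta$ there would be a pair in $\bar\epsilon$ without its inverse, contradicting the edge-relation property just established.)

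The main obstacle is the symmetry step, since it hinges on what $g$ actually does, and the hypothesis supplies $g$ only as ``a function whose codomain yields $\overline{G}=(\overline{V},\bar\epsilon)$.'' I would resolve this by reading ``$\bar\epsilon$ written in the edge-relation notation'' as the standing assumption that $\bar\epsilon$ is an edge relation, so that closure under inversion is granted and the rest is immediate from the definition of $G_\theta$; but then I would stress that the genuine content of the theorem is the companion fact — guaranteed by \eqref{eqc1} and \eqref{color1} — that such a symmetrising $g$ exists and is well defined on unordered pairs. I would also flag one point of bookkeeping: the hypothesis $G\notin G_\theta$ (the anti-symmetry of $\tau$) is needed only to keep the statement non-vacuous, describing the generic case the transformation $g\cdot f_{\tau}$ is meant for; it plays no role in the deduction, which goes through verbatim even when $G$ is already simple.
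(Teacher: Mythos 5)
Your proposal is correct and follows essentially the same route as the paper: both arguments hinge on the observation that $f_{\tau}$ (and hence the composite $g\cdot f_{\tau}$) depends only on the unordered pair, so the image relation $\bar{\epsilon}$ is closed under inversion and $\overline{G}$ satisfies the defining condition of a simple graph. The only presentational difference is that the paper resolves the vagueness about $g$ by writing down an explicit definition of $g(f(u,v))$ inside the proof, whereas you derive the symmetry from \eqref{eqc1} and flag the underspecification of $g$ as a standing assumption — a fair and arguably more honest reading of the same argument.
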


\begin{proof}
Given a no-empty graph $G=(V,\tau)$, set $G\notin G_{\theta}$. on which there is a characteristic function $f_{\tau}: V^2\rightarrow \{1,0\}$. Consider a pair $u,v\in V$. There is an ordered pair $(\{u,v\},1)_{\in f}$ as the value of function $f$, if and only if $\{u,v\}\cap\tau\neq\varnothing$. Otherwise, there is $ (\{u,v\},0)_{\in f}$.

We set $g$ is a function with a codomain $\overline{G}=(\overline{V},\bar{\epsilon})$, such that there is $g\cdot f_{\tau}:V^2\rightarrow \bar{\epsilon}$ defined in following

\begin{equation*}
g(f(u,v))\coloneq\left\{
\begin{array}{ll}
\{s,t\}\in \overline{\epsilon},\quad&\text{ if  }f(u, v)=1 \wedge\exists s,t\in\overline{V}.\\
~&~\\
\{s,t\}\notin \overline{\epsilon}, \quad&\text{ if  }f(u, v)=0\wedge\exists s,t\in\overline{V}.\\
\end{array}\right.
\end{equation*}

Then given a pair $u\tau v$ on graph $G$, that is mapped to an image in set $\bar{\epsilon}$ by the the composition function $g\cdot f$. If the ordered pairs $(u,v),(v,u)\notin\tau$, then no image in set $\bar{\epsilon}$. We can observe for each pair $\{s,t\}\in \epsilon$ such that having  $\bar{\epsilon}=\bar{\tau}$ on graph $\overline{G}$. As the definition of simple graph, there is  graph $\overline{G}\in G_{\theta}$. With the coloring relation\eqref{color1}, if exist a function $\rho$ and $\rho:V\rightarrow \overline{V}$, then there is $u\mapsto s\wedge v\mapsto t$, we can say the $\overline{G}$ is a relation model of chromatic values on graph $G$.%

Consequently, we understand that the simple graph is the sufficient condition of graph coloring problem for a given instance.\\
\end{proof}

\begin{definition}
Let $\lambda$ be a subset of edge relation $\epsilon$ with $\vert \lambda\vert = N$. Consider a vertex $u\in\lambda$. For each component $\epsilon_i$ in set $\lambda$, there is $u\in\epsilon_i$. We call set $\lambda$ \emph{edge relation subgraph},  the collection of subgraphs denote by $\Lambda$.
\end{definition}

\begin{lemma}\label{cg2}
The edge relation subgraph is a Cartesian product set.
\end{lemma}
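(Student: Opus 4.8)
The plan is to follow the template of Lemma~\ref{t3} almost verbatim, substituting the edge relation subgraph for the unit subgraph. First I would invoke the definition of edge relation subgraph to fix the distinguished vertex $u$ that lies in every component of $\lambda$, and list the components as
\[\lambda = \{u\epsilon v_1,\ u\epsilon v_2,\ \cdots,\ u\epsilon v_t\},\]
where I use the abbreviation $u\epsilon v = \{u,v\}$ so that $u$ is recorded as the left entry of each component. By the anti-reflexivity property of $\epsilon$ proved above, no component is a loop $\{u,u\}$, so every $v_k$ is distinct from $u$; and since $\vert\lambda\vert = N$ we may take the $v_k$ pairwise distinct, giving $t = N$.

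Next I would rewrite this list in product form,
\[\lambda = \{u\}\times\{v_1, v_2, \cdots, v_t\},\]
and appeal directly to the definition of Cartesian product from Section~2 to conclude that $\lambda$ is a Cartesian product set. This also prepares the natural analogue of the root set / leaf set decomposition that the paper introduced right after Lemma~\ref{t3}, so I would phrase the rewriting so that $\{u\}$ is visibly the left factor and $\{v_1,\dots,v_t\}$ the right factor.

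The only point that needs care — and the main obstacle, modest as it is — is the symmetry of the edge relation: strictly speaking each component $\{u,v_k\}$ stands for the pair of ordered pairs $(u,v_k),(v_k,u)\in\epsilon$, so one must check that designating $u$ as the left factor of the product is well defined and independent of which ordered representative one picks. I would resolve this exactly as the paper already handles the analogous situation for the traversal relation, namely by treating $u\epsilon v$ through its fixed abbreviation $\{u,v\}$ and observing that the choice of $u$ as the common vertex (guaranteed by the definition of edge relation subgraph) canonically fixes the left factor; the symmetric partner $(v_k,u)$ does not introduce a second component into the product, since $\{u,v_k\}$ is a single element of $\lambda$. With that observation in place, the enumeration-then-rewrite argument closes the proof.
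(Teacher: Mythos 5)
Your overall strategy (enumerate the components sharing the vertex $u$, then rewrite the list in product form as in Lemma~\ref{t3}) is the same as the paper's, but the one step you flag as needing care is exactly where you and the paper part ways, and your resolution of it does not hold up under the paper's own semantics. The edge relation is defined as a subset of $\tau\subseteq V^2$ that is closed under inversion: $\{u,v\}\in\epsilon$ is explicitly an abbreviation for the presence of \emph{both} ordered pairs $(u,v)$ and $(v,u)$. An edge relation subgraph $\lambda\subseteq\epsilon$ therefore contains, for each leaf $x_i$, both $(u,x_i)$ and $(x_i,u)$. The paper's proof keeps both orientations, expanding $\lambda=\{\{u,x_i\}\}_{i=1}^N$ into $\{(u,x_i),(x_i,u)\}_{i=1}^N$ and arriving at the two-product form \eqref{eqc2}, namely $\lambda=\{\,\{u\}\times A,\ A\times\{u\}\,\}$ with $A=\{x_i\}_{i=1}^N$. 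Your single product $\{u\}\times\{v_1,\dots,v_t\}$ contains only the ordered pairs with $u$ on the left; as an equality of sets of ordered pairs it is a proper subset of $\lambda$, not $\lambda$ itself. Declaring $u$ to be the canonical left entry is a choice of representative per edge, not an identity, and the Cartesian product of Section~2 returns ordered vectors, so the reversed pairs $(x_i,u)$ cannot be absorbed into $\{u\}\times A$.

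The fix is small but necessary: either reproduce the paper's step and exhibit $\lambda$ as built from the two products $\{u\}\times A$ and $A\times\{u\}$ (which is also the form the subsequent definition of $R(\lambda)$ and $L(\lambda)$ refers to), or explicitly redefine $\lambda$ as a set of unordered edges before claiming the product form --- but then you would need to say in what sense a set of unordered pairs is a Cartesian product, which the paper avoids by keeping both orientations. The rest of your argument (anti-reflexivity ruling out loops, distinctness of the $v_k$, the root/leaf decomposition) is fine and matches the paper.
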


\begin{proof}
With the definition of edge relation subgraph, we set edge relation subgraph $\lambda$ having $\vert \lambda\vert = N$ and existing a vertex $u\in\epsilon_i$. Then this form can be written as
\begin{align*}
\lambda&=\{\{u,x_i\}\}_{i=1}^N\\
&=\{(u,x_i),~(x_i,u)\}_{i=1}^N\\
\end{align*}
It implies
\begin{equation}\label{eqc2}
\lambda=\{ \{u\}\times A,~ A\times \{u\}\},\qquad A=\{x_i\}_{ i=1}^N .
\end{equation}
Hence, the set $\lambda$ is a Cartesian product set.\\
\end{proof}

\begin{definition}
As the described in Lemma\ref{cg2}, we call the set $\{u\}$ in form\eqref{eqc2} as root, denote by $R(\lambda)$. Similarly call the set $A$ as leaf set, use $L(\lambda)$ denote. We reserve the subscript of $\lambda$ equals to roots'.
\end{definition}

\begin{definition}
Let $\Lambda_p$ be a subset of collection $\Lambda$ of  edge relation subgraphs. Consider two components $ \lambda_i, \lambda_j\in\Lambda_p$. If there is a partition approach on set $\Lambda_p$, which for a pair $v_i\epsilon v_j$ such that $v_i\epsilon v_j\in\lambda_i\setminus \lambda_j$, or similar to contrary. We call this partition \emph{edge relation subgraph partition}.\\
\end{definition}

\begin{definition}
Given a no-empty $G_{\theta}=(V,~\epsilon)$. Let $A_q$ be a sequence of vertices on set $V$ and $A_q=V$. Consider two elements $x_i,x_j\in A_q$ with $i<j$.  There is a sequence $\Lambda_q$ of edge relation subgraphs with $\Lambda_q\subseteq \Lambda_p$. For two components $\lambda_s,\lambda_t\in\Lambda_q$. If for $x_i\in R(\lambda_s)$ and $x_j\in R(\lambda_t)$ such that $s<t$, then we call sequence $\Lambda_q$ \emph{ordered partition of edge relation subgraph}, abbr. by OPERS, call set $A_q$ \emph{ordered roots set}. \\
\end{definition}

\begin{lemma}\label{cg3}
Given a no-empty $G_{\theta}=(V,\epsilon)$. Let $\Lambda_q$ be an OPERS on set $\epsilon$ and $A_q$ be an ordered roots set of set $\Lambda_q$. Consider two elements $x_i,x_j\in A_q$ and $\{x_i,x_j\}\in\epsilon$ with $i<j$. If there are two components $ \lambda_s,\lambda_t\in\Lambda_q$ with $x_i\in R(\lambda_s)$ and $x_j\in R(\lambda_t)$, then $x_i\epsilon x_j \in \lambda_s\setminus\lambda_t$.
\end{lemma}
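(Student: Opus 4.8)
The plan is to combine the three relevant definitions: the edge relation subgraph (so that the pair $x_i\epsilon x_j$ lies in a star of one of its endpoints), the edge relation subgraph partition (so that it lies in \emph{exactly} one of the two endpoint stars), and the OPERS ordering condition (so that the sequence index $s$ of $R(\lambda_s)=\{x_i\}$ precedes the sequence index $t$ of $R(\lambda_t)=\{x_j\}$ whenever $i<j$).

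First I would record the basic incidences. Since $\{x_i,x_j\}\in\epsilon$ and $\lambda_s$ is the edge relation subgraph with root $x_i$, the pair $x_i\epsilon x_j$ is a member of the full star of $x_i$; likewise it belongs to the full star of $x_j=R(\lambda_t)$ — this is just the Cartesian-product form $\lambda=\{\{u\}\times A,\,A\times\{u\}\}$ from Lemma~\ref{cg2}. Hence the partition can assign $x_i\epsilon x_j$ only to $\lambda_s$ or to $\lambda_t$, and by the definition of edge relation subgraph partition it is assigned to exactly one of them: either $x_i\epsilon x_j\in\lambda_s\setminus\lambda_t$ or $x_i\epsilon x_j\in\lambda_t\setminus\lambda_s$.

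Next I would bring in $i<j$. By the definition of OPERS applied to the ordered roots set $A_q$, from $x_i\in R(\lambda_s)$, $x_j\in R(\lambda_t)$ and $i<j$ we get $s<t$, so $\lambda_s$ precedes $\lambda_t$ in the sequence $\Lambda_q$. I would then suppose, for contradiction, that $x_i\epsilon x_j\in\lambda_t\setminus\lambda_s$. The idea is that an OPERS is built by running through the ordered roots set and letting each root claim the edges still available to it; when the earlier root $x_i$ is reached, the pair $\{x_i,x_j\}$ has not yet been claimed (its only other candidate is the later root $x_j$), which forces $x_i\epsilon x_j\in\lambda_s$ and hence, by the partition property, $x_i\epsilon x_j\notin\lambda_t$ — contradicting the supposition. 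The remaining alternative $x_i\epsilon x_j\in\lambda_s\setminus\lambda_t$ therefore holds, which is the claim.

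The main obstacle is this middle step: the paper's statement of OPERS only constrains the order of the subgraphs in the sequence, not directly which endpoint an edge is assigned to, so I would need to make explicit (or extract from the surrounding text) the rule that in an OPERS the partition respects the ordering — each edge is placed in the star of its lower-indexed endpoint. If that rule is not read into the definition, the fallback is an induction along $A_q$: assuming the statement for all adjacent pairs among $\{x_1,\dots,x_k\}$, show it persists when $x_{k+1}$ is adjoined, since any new edge $\{x_\ell,x_{k+1}\}$ with $\ell\le k$ must be put in the subgraph rooted at $x_\ell$ (the smaller index) for $\Lambda_q$ to remain an OPERS consistent with $A_q$. Everything else is routine bookkeeping with the root/leaf decomposition of edge relation subgraphs.
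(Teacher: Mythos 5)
Your proposal follows essentially the same route as the paper: observe via the Cartesian-product form of an edge relation subgraph that $x_i\epsilon x_j$ can only live in $\lambda_s$ or $\lambda_t$, invoke the partition property to get exclusivity, and then use the ordering $i<j$ to select $\lambda_s$. The ``main obstacle'' you flag --- that the stated definition of OPERS only orders the subgraphs and never says an edge must be assigned to the star of its earlier root --- is real, but it is equally unaddressed in the paper's own proof, which at the corresponding step simply asserts that ``pair $x_i\epsilon x_j$ can be partitioned into $\lambda_s$ fore than $\lambda_t$.'' So your reading of that assignment rule into the definition (or your fallback induction along $A_q$) is exactly what the author tacitly assumes; your version is, if anything, the more honest account of where the argument actually rests.
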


\begin{proof}
Given a no-empty $G_{\theta}=(V,\epsilon)$. There is an OPERS $\Lambda_q$ on set $\epsilon$ and $A_q$ be an ordered roots set of set $\Lambda_q$. Consider two elements $x_i,x_j\in A_q$ with $i<j$. We set two components $ \lambda_s,\lambda_t\in\Lambda_q$ with $x_i\in R(\lambda_s)$ and $x_j\in R(\lambda_t)$. If pair $\{x_i,x_j\}\in\epsilon$. With the definition of edge relation subgraph, observes there may be $x_i\epsilon x_j \in\lambda_s$ or $ x_i\epsilon x_j \in\lambda_t$. Further, if introduce pair $x_i\epsilon x_j$ to subgraph $\lambda_s$, then $ x_i\epsilon x_j\notin\lambda_t$ , similar to contrary.%

Hence, while in case $i<j$, the $i$th position on sequence $A_q$ is fore than $j$th, then pair $x_i\epsilon x_j$ can be partitioned into $\lambda_s$ fore than $\lambda_t$, i.e. there is $x_i\epsilon x_j\in \lambda_s\setminus\lambda_t$ in set $\Lambda_q$.\\
\end{proof}

\begin{theorem}\label{cg4}
Let $\Lambda_q$ be an OPERS on set $\epsilon$, then set $\Lambda_q$ is a partition of set $\epsilon$.
\end{theorem}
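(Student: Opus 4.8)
The plan is to verify, in turn, the three conditions in the definition of a partition of a set from the Preliminaries, applied to the family $\Lambda_q$ over the set $\epsilon$; this mirrors the proofs of Theorem~\ref{t2}, Lemma~\ref{t6} and Theorem~\ref{a1}, and the extra ingredient here is Lemma~\ref{cg3}, which fixes, for each edge, a unique host subgraph.

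First I would fix notation: write $\Lambda_q=\{\lambda_1,\lambda_2,\dots,\lambda_N\}$ with the subscript of each $\lambda_i$ equal to that of its root $R(\lambda_i)$, and let $A_q=(x_1,\dots,x_n)$ be the associated ordered roots set, so that the root ordering of $\Lambda_q$ follows the position ordering of $A_q$. For the union condition, take any pair $\{x_i,x_j\}\in\epsilon$ with $i<j$; by the definition of OPERS together with Lemma~\ref{cg3} this edge is placed in $\lambda_s\setminus\lambda_t$, where $x_i\in R(\lambda_s)$ and $x_j\in R(\lambda_t)$, so in particular it lies in $\bigcup\Lambda_q$; conversely each $\lambda_s\subseteq\epsilon$ by the definition of edge relation subgraph. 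Assuming $\epsilon\setminus\Lambda_q\neq\varnothing$ and taking an edge in it then contradicts the fact that the OPERS rule assigns every edge a host, so $\bigcup\Lambda_q=\epsilon$. Non-emptiness of the components follows from the same observation: every $\lambda_s$ that genuinely appears in $\Lambda_q$ has received at least the edge that witnessed its presence.

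For pairwise disjointness, suppose $\lambda_s\cap\lambda_t\neq\varnothing$ with $s\neq t$, and pick $v_k\epsilon v_l$ in the intersection. Since an edge relation subgraph consists exactly of the edges incident to its single root, $R(\lambda_s),R(\lambda_t)\in\{v_k,v_l\}$; moreover $R(\lambda_s)\neq R(\lambda_t)$, because otherwise $s=t$ by the subscript convention, so $\{R(\lambda_s),R(\lambda_t)\}=\{v_k,v_l\}$. Whichever of $v_k,v_l$ comes first in $A_q$ is then the root of the subgraph that, by Lemma~\ref{cg3}, holds $v_k\epsilon v_l$ to the exclusion of the other, contradicting membership in both $\lambda_s$ and $\lambda_t$. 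Hence $\lambda_s\cap\lambda_t=\varnothing$, and the three conditions together give that $\Lambda_q$ is a partition of $\epsilon$.

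The step I expect to need the most care is the union/non-emptiness part, because the natural, unordered edge subgraphs overlap (each edge $\{u,v\}$ sits in both $\lambda_u$ and $\lambda_v$), so the argument must lean squarely on the OPERS rule of Lemma~\ref{cg3} to break every such tie in the same direction; one also has to be honest about the last root $x_n$, whose subgraph receives no edge at all and therefore, strictly, should not be counted among the components of $\Lambda_q$ — a caveat I would flag explicitly rather than absorb into the ``may be'' phrasing used elsewhere in the paper.
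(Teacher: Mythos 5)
Your proposal is correct and follows essentially the same route as the paper's proof: both verify the three partition conditions and both lean on Lemma~\ref{cg3} to break the tie when an edge could sit in either of two subgraphs. Your disjointness step is in fact slightly tighter than the paper's (you start from an arbitrary edge in $\lambda_s\cap\lambda_t$ and deduce that the two roots must be its endpoints, whereas the paper only examines the edge between the two roots), and your explicit caveat about the last root's empty subgraph addresses a point the paper leaves buried in its ``may be'' phrasing.
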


\begin{proof}
Let $\Lambda_q$ be an OPERS on set $\epsilon$. Consider two components $\lambda_i,\lambda_j\in\Lambda_q$ with $i\neq j$ such that $\lambda_i\neq \lambda_j$. Further, let $v_i\in R(\lambda_i)$ and $ v_j\in R(\lambda_j)$. If pair $\{v_i,v_j\}\nsubseteq\epsilon$, then $v_i\notin L(\lambda_j)$ and $ v_j\notin L(\lambda_i)$ with Lemma\ref{cg2}. Thus there is $\lambda_i\cap\lambda_j=\varnothing$. When $\{v_i,v_j\}\in\epsilon$, we have $v_i\epsilon v_j\in\lambda_i$ or $v_i\epsilon v_j\in\lambda_j$. We set the $i$th position is fore than $j$th ones on sequence $\Lambda_q$. We have $v_i\epsilon v_j\in \lambda_i\setminus\lambda_j$ as Lemma\ref{cg3}. Hence $\lambda_i\cap\lambda_j=\varnothing$ similarly. %

Consider pair $\{v_i,v_j\}\in\epsilon$ such that $v_i\epsilon v_j\in\lambda_i$ or $v_i\epsilon v_j\in\lambda_j$, so that for each component $\lambda_k\in \Lambda_q$, there is $\lambda_k\neq\varnothing$. Furthermore, for each pair $(u,v)\in\epsilon$ we have $(u,v)\in\Lambda_q$, then there is $\epsilon=\Lambda_q$. To sum up above, set $\Lambda_q$ is the partition of set $\epsilon$. Consider the case $v_i\epsilon v_j\in\lambda_i$ or $v_i\epsilon v_j\in\lambda_j$, we understand set $\Lambda_q$ is no an equivalent class of set $\epsilon$.\\
\end{proof}

\begin{theorem}\label{cg5}
There is an OPERS $\Lambda_q$ on an instance $G_{\theta}=(V,\epsilon)$, then $\vert \Lambda_q\vert < n$
\end{theorem}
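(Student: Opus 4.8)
The plan is to exploit the interaction between the ordering constraint built into an OPERS and the fact, established in Theorem~\ref{cg4}, that $\Lambda_q$ is a genuine partition of $\epsilon$ and hence has no empty component. First I would fix an OPERS $\Lambda_q$ on $G_{\theta}=(V,\epsilon)$ with ordered roots set $A_q = V = (x_1,x_2,\ldots,x_n)$, recalling that by convention the subscript of each edge relation subgraph is the subscript of its root, so distinct components carry distinct roots and every root lies in $V$. Consequently $\vert\Lambda_q\vert$ is at most the number of vertices that actually serve as a root of some component of $\Lambda_q$, which is at most $n$; the whole point is to rule out equality.

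The key step is to show that the last vertex $x_n$ of the ordered roots set cannot be the root of any component of $\Lambda_q$. Suppose, toward a contradiction, that $x_n\in R(\lambda_t)$ for some $\lambda_t\in\Lambda_q$. By the ordering property in the definition of OPERS, $t$ must be the largest index occurring among the components of $\Lambda_q$, since for any other root vertex $x_i$ (necessarily with $i<n$) sitting in $R(\lambda_s)$ we get $s<t$. Now any edge of $\epsilon$ incident to $x_n$ has the form $\{x_i,x_n\}$ with $i<n$ (using anti-reflexivity of the edge relation to exclude a self-loop); applying Lemma~\ref{cg3} with $x_i\in R(\lambda_s)$ gives $x_i\epsilon x_n\in\lambda_s\setminus\lambda_t$, so none of these edges lies in $\lambda_t$. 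Since $\lambda_t$ is an edge relation subgraph rooted at $x_n$, each of its members is an edge incident to $x_n$, hence $\lambda_t=\varnothing$. This contradicts Theorem~\ref{cg4}, which forces every component of the partition $\Lambda_q$ to be non-empty. Therefore $x_n$ is not a root, the roots of $\Lambda_q$ form a proper subset of $V$, and $\vert\Lambda_q\vert\leq n-1<n$.

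I would also dispatch the degenerate case $n=1$ separately: an isolated vertex gives $\epsilon=\varnothing$ by anti-reflexivity of the edge relation, so $\Lambda_q=\varnothing$ and $\vert\Lambda_q\vert=0<1$. The main obstacle I anticipate is not a deep one but a bookkeeping matter: being careful that $A_q$ is required to equal $V$, so that it genuinely enumerates all $n$ vertices and ``the last vertex'' is well defined, and justifying the step ``distinct components have distinct roots'' from the stated subscript convention rather than silently assuming it. Beyond that, the only idea needed is the observation that the OPERS ordering always pushes an edge into the component rooted at its earlier endpoint, which leaves the star of the final vertex with no edges at all.
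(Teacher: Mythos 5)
Your proposal is correct and follows essentially the same route as the paper: both arguments fix the ordered roots set $A_q=V$ of length $n$ and use Lemma~\ref{cg3} to show that every edge incident to the last vertex is claimed by the component rooted at its earlier endpoint, so the component rooted at the final vertex cannot be non-empty and $\vert\Lambda_q\vert<n$. The only cosmetic difference is how the contradiction is phrased --- the paper says a leaf of $\lambda_n$ would need an index $t>n$, while you say $\lambda_t=\varnothing$ contradicts the non-emptiness of partition components from Theorem~\ref{cg4} --- and your explicit treatment of the $n=1$ case is a harmless addition.
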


\begin{proof}
Given a no-empty $G_{\theta}=(V,\epsilon)$. There is an OPERS $\Lambda_q$ on set $\epsilon$, which roots set is $A_q$. Because $A_q$ is a sequence on set $V$, therefore $\vert A_q\vert =n$. For each element $u\in A_q$, if there may be $u\in R(\lambda_k)$ and $\lambda_k\in \Lambda_q$, we understand there possibly is $\vert \Lambda_q\vert = n$.%

Further, we can set there are two sub-sequences $\alpha_1,\alpha_2$ on set $A_q$ and $A_q=(\alpha_1,\alpha_2)$. Consider such component $\lambda_s\in\Lambda_q$. If there is $L(\lambda_s)\setminus\alpha_1=\varnothing$ and $ R(\lambda_s)\subseteq \alpha_2$, then it implies that each pair in subgraph $\lambda_s$ may be introduced into other subgraphs with Lemma\ref{cg3}. Hence, no such component exists in set $\Lambda_q$, i.e. $\vert\Lambda_q \vert < n$.%

Assume to $\vert \Lambda_q\vert = n$. Then there is the end component $\lambda_n\neq\varnothing$ on sequence $\Lambda_q$. With Lemma\ref{cg3}, at least have a vertex $x_t\in L(\lambda_n)$ and $x_t\in A_q$ with $ t>n$, i.e. $\vert A_q\vert >n$ is a contradiction.\\
\end{proof}

\begin{lemma}\label{cg6}
Given a no-empty $G_{\theta}=(V,~\epsilon)$. If $\Lambda'$ is the collection of OPERSs, then $\vert\Lambda' \vert\leq n!$ 
\end{lemma}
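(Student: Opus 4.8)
The plan is to bound the number of distinct OPERSs by counting the number of orderings of the vertex set $V$, since by definition an OPERS $\Lambda_q$ is determined by its ordered roots set $A_q$, which is a sequence (permutation) on $V$. First I would recall from Theorem~\ref{cg5} that every OPERS has length less than $n$, so each OPERS is a finite sequence of edge relation subgraphs indexed consistently with a linear order on the roots. The key observation is that once the ordered roots set $A_q = (x_1, x_2, \dots, x_n)$ is fixed, Lemma~\ref{cg3} tells us exactly where each pair $x_i \epsilon x_j$ with $i<j$ must go — namely into $\lambda_s$ where $x_i \in R(\lambda_s)$, and $x_i\epsilon x_j \notin \lambda_t$ where $x_j\in R(\lambda_t)$. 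So the partition $\Lambda_q$ is a function of the permutation $A_q$ alone; different OPERSs must arise from different orderings (or at least, the map from orderings to OPERSs is surjective onto $\Lambda'$).

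Next I would count: the number of sequences on an $n$-element set $V$ is exactly $n!$. Since every OPERS corresponds to at least one such ordering of the roots, the map sending an ordering of $V$ to its induced OPERS is a surjection from the set of $n!$ permutations onto $\Lambda'$. Hence $\vert \Lambda' \vert \leq n!$. One should note that the inequality need not be strict: distinct permutations can induce the same partition (for instance, reordering vertices within a region that has no internal edges, or permuting isolated vertices), and also not every ordering need yield a \emph{valid} OPERS satisfying all the structural constraints — but neither of these affects the upper bound, they only make it potentially loose, which is consistent with ``$\leq$''.

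The main obstacle, and the step requiring the most care, is establishing rigorously that the ordered roots set genuinely determines the OPERS — i.e., that the assignment of each edge to a subgraph is forced once the ordering is chosen. This rests on combining the definition of OPERS with Lemma~\ref{cg3}: I would argue that for any edge $\{x_i, x_j\} \in \epsilon$ with $i < j$ in the ordered roots set, the edge relation subgraph partition condition together with the ordering constraint $s < t$ forces $x_i\epsilon x_j \in \lambda_s \setminus \lambda_t$ uniquely, leaving no freedom. A secondary subtlety is that the roots set $A_q$ has cardinality $n$ (it equals $V$ as a sequence) while $\vert\Lambda_q\vert < n$, so some vertices are roots of no subgraph in $\Lambda_q$; I would handle this by noting such vertices simply contribute nothing new, and the counting by permutations still dominates. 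Once the "ordering determines the OPERS" claim is in hand, the bound $n!$ is immediate.
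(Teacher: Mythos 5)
Your proposal is correct and takes essentially the same approach as the paper: both arguments count the $n!$ orderings of $V$ and observe that, via Lemma~\ref{cg3}, each OPERS is induced by at least one such ordering of its roots set, giving $\vert\Lambda'\vert\leq n!$. The paper's proof additionally splits into the cases $G_{\theta}\in G_{\eta}$ and $G_{\theta}\notin G_{\eta}$ to discuss when the bound is attained or strict, but your cleaner surjection-from-permutations framing already establishes the stated inequality.
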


\begin{proof}
Given a no-empty $G_{\theta}=(V,~\epsilon)$. Let $A_q$ be a sequence on set $V$, then there are $n!$ arrangements of vertices in set $A_q$ with $A_q=V$. We let $\Lambda'$ be a collection of OPERSs. For a component $\Lambda_q\in\Lambda'$, we can set sequence $A_q$ is the roots set of $\Lambda_q$. Hence there may be $n!$ arrangements of subgraphs in set $\Lambda_q$.%

Consider instance $G_{\theta}\in G_{\eta}$, on which there is each pair $u,v\in V$ such that $\{u,v\}\in\epsilon$ with $u\neq v$. Then for each component $\lambda_i\in\Lambda_q$, we may have $\vert L(\lambda_i) \vert \leq n-1$. Further, for component $\lambda_{n-1}\in\Lambda_q$, there may be $\vert L(\lambda_{n-1}) \vert =1$, and no $\lambda_{n}$ on sequence $\Lambda_q$ as Lemma\ref{cg3}. Hence, for $n$th position of sequence $A$, there possibly are $n$ vertices on it, and $(n-1)!$ arrangements for $n-1$ vertices on $n-1$ positions on sequence $A_q$. For a pair $x_i,x_j\in A_q$ such that $\{x_i,x_j\}\in\epsilon$, then the relation of position between $i$th and $j$th can change two components $\lambda_i, \lambda_j\in\Lambda_q$ with $x_i\in R(\lambda_i)$ and $x_j\in R(\lambda_j)$. Namely, there is $n!$ possibilities of arrangements of subgraphs in set $\Lambda_q$, i.e. $\vert\Lambda' \vert= n!$%

If $G_{\theta}\notin G_{\eta}$, at least some cardinalities of leaf sets are less than $n-1$. Obviously, at least we can find two vertices $x_i,x_j\in A$ such that having $\lambda_i,\lambda_j\notin\Lambda_q$ with Lemma\ref{cg5}. Then there at most are $P_n^{n-2}$ possibilities of arrangement of subgraphs in set $\Lambda_q$. Hence $\vert\Lambda' \vert< n!$.%

Consequently, we prove that set $\Lambda_{q}$ is not an equivalent class on set $\epsilon$ again.\\
\end{proof}

\begin{descussion}
Author constructed such logic structure $\epsilon\subseteq\tau\subseteq V^2$. It is necessary to show the difference between edge relation and traversal relation. With traversal relation, these pairs $u\tau v,~v\tau u$ are not equivalent. You can not decide that a visiting from A city to B city equals to it from B city to A city. But in Geometry space, segment AB equals to BA. If we define \emph{edge path} like above, then a minimum edge cycle can be $\mathring{\pi}_{\epsilon}=\{\{A,~B\},\{B,~C\},\{C,~A\}\}$. Obviously, this form is a triangle, which is the minimum polygon. It is just to the reason that there is the symmetry property on edge. %

As the described in Lemma\ref{cg6}, the partition of edge relation subgraph is not an equivalent class on $\epsilon$. Such that if we use the edge relation for graph traversal, then we would meet a big trouble to enumerate much more possibilities of adjacency. That is why author defined simple graph in this section. The graph coloring problem must be discussed underlaying this logic structure with symmetry property, but not for graph traversal.\\
\end{descussion}

\subsection{Interval Vertices Set}
Here we will define a special class: interval vertices set. Indeed, there are logic relations between two interval vertices as the coloring relation\eqref{color1}. We will explore these features of those vertices in this subsection.

\begin{definition}
Consider a pair $u,v$ with $u\neq v$ on instance $G_{\theta}=(V,\epsilon)$. If pair $\{u,v\}\in(V^2\setminus\epsilon)$, then we call them \emph{interval vertices}. The collection of these interval vertices we denote by $\bar{\epsilon}$ and call it \emph{interval vertices set}.%

Let $\delta$ be a subset of set $\bar{\epsilon}$. For each pair $u, v\in\delta$ such that there is $u,v\in \bar{\epsilon}$, we call set $\delta$ \emph{completed interval vertices set}, abbr. by CIVS.
\end{definition}

\begin{lemma}\label{cg7_1}
Given a no-empty CIVS $\delta$, then $\vert \delta\vert\geq 2$.
\end{lemma}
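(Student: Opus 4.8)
The plan is to read the bound straight off the definitions of \emph{interval vertices} and of \emph{CIVS}, since there is no combinatorial content here beyond unfolding notation. First I would record two facts. By the definition of interval vertices, the assertion ``$\{u,v\}$ are interval vertices'' (equivalently $\{u,v\}\in V^2\setminus\epsilon$) is only ever made for a genuine pair $u,v$ with $u\neq v$; so a vertex placed in $\bar{\epsilon}$ never stands alone — it is always one half of a two-element interval pair. Second, by the definition of CIVS, a set $\delta\subseteq\bar{\epsilon}$ earns the label ``completed'' exactly by exhibiting, for each pair $u,v\in\delta$, the interval relation $\{u,v\}\in V^2\setminus\epsilon$; this is the same ``every admissible pair realizes the relation'' pattern already used for the completed regular graph $G_{\eta}$, where every pair of vertices is an edge. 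I would state this reading explicitly at the outset, because the whole lemma hinges on it.

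With those facts in place the argument is short. Since $\delta$ is no-empty I would fix a vertex $u\in\delta$; from $\delta\subseteq\bar{\epsilon}$ we get $u\in\bar{\epsilon}$, so $u$ is an interval vertex and hence has a distinct partner $v$ with $u\neq v$ and $\{u,v\}\in V^2\setminus\epsilon$, i.e. $u,v$ are interval vertices to one another. Then I would invoke the ``completed'' clause: a completed interval vertices set containing $u$ must also contain this partner $v$, so $v\in\delta$ with $v\neq u$, giving $\vert\delta\vert\geq 2$. Equivalently, and perhaps cleaner to present, I would argue by contradiction: $\vert\delta\vert\neq 0$ as $\delta$ is no-empty, and if $\vert\delta\vert=1$, say $\delta=\{u\}$, then $\delta$ carries the half $u$ of the interval pair $\{u,v\}$ without the pair itself, so it fails to be a \emph{completed} interval vertices set; hence $\vert\delta\vert\geq 2$.

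The one delicate point — the only place a reader could push back — is the step ``$v\in\delta$'', i.e. the claim that a singleton cannot count as a completed interval vertices set (the defining clause ``for each pair $u,v\in\delta$ \dots'' being vacuously true on a singleton). I would settle this up front by fixing the convention that a completed interval vertices set, like the \emph{unit subgraph}, \emph{visiting set}, and the other ``$\geq 2$'' structures in this paper, is understood to actually contain a pair rather than to qualify vacuously, and that ``completed'' is read exactly as for $G_{\eta}$. Once that convention is on the table, everything else is a one-line application of the definition of interval vertices and needs no further work.
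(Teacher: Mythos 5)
Your conclusion is right and you have correctly located the crux (a singleton makes the clause ``for each pair $u,v\in\delta$ \dots'' look vacuous), but the step you use to close it does not follow from the paper's definition. The definition of a CIVS only constrains pairs that are \emph{already inside} $\delta$ — it is a clique-in-$\bar{\epsilon}$ condition, exactly parallel to $G_{\eta}$ being a clique in $\epsilon$ — and it imposes no closure under interval partners. So from $u\in\delta$ and the existence of some $v$ with $\{u,v\}\in\bar{\epsilon}$ you cannot conclude $v\in\delta$; nothing in the definition forces a completed interval vertices set to absorb the partners of its members. Your own analogy with $G_{\eta}$ actually works against you here: the paper \emph{explicitly} declares the isolated vertex to be a completed regular graph, i.e.\ for the edge-side analogue the singleton is admitted by special fiat rather than excluded, so ``completed'' plainly does not mean ``closed under partners.'' Settling the matter by announcing a convention that singletons do not qualify is not a proof from the stated definitions; it is an admission that the argument needs an input the definitions do not supply.

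The paper closes the gap differently and without any extra convention: if $\delta=\{u\}$, it reads ``for each pair $u,v\in\delta$'' as including the degenerate choice $v=u$, so the defining clause would require $\{u,u\}\in\bar{\epsilon}$; but interval vertices are defined only for $u\neq v$, so $\{u,u\}\notin\bar{\epsilon}$ and the singleton fails the condition outright. Whether one finds that reading of ``pair'' congenial or not, it is an actual derivation of a contradiction from the definitions as written, whereas your partner-closure step is not. If you want to salvage your write-up, replace the closure claim with the degenerate-pair argument (or, failing that, state honestly that the definition must be amended to exclude singletons, which is a criticism of the definition rather than a proof of the lemma).
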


\begin{proof}
Given a no-empty CIVS $\delta$. If $\vert \delta\vert=1$, then we let $u\in\delta$. It is a contradiction for pair $\{u,u\}\in\bar{\epsilon}$. Hence there is no such CIVS, $\vert \delta\vert\geq 2$.\\
\end{proof}

\begin{theorem}\label{cg7}
Let $\Lambda_p$ be a partition of edge relation subgraph on set $\epsilon$. Consider two components $\lambda_i,\lambda_j\in\Lambda_p$ with $\lambda_i\neq\lambda_j$. If their is such a case $R(\lambda_i)\cap L(\lambda_j)=\varnothing$ and $R(\lambda_j)\cap L(\lambda_i)=\varnothing$, then $R(\lambda_i),R(\lambda_j)\in\bar{\epsilon}$.
\end{theorem}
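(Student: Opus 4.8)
The plan is to argue by contradiction on whether the two root vertices are adjacent. Write $R(\lambda_i)=\{v_i\}$ and $R(\lambda_j)=\{v_j\}$. Since the subscript of an edge relation subgraph is inherited from its root and $\lambda_i\neq\lambda_j$ forces $i\neq j$, I first record $v_i\neq v_j$. The two hypotheses then read $v_i\notin L(\lambda_j)$ (from $R(\lambda_i)\cap L(\lambda_j)=\varnothing$) and $v_j\notin L(\lambda_i)$ (from $R(\lambda_j)\cap L(\lambda_i)=\varnothing$). The conclusion "$R(\lambda_i),R(\lambda_j)\in\bar\epsilon$" unwinds, via the definition of interval vertices, to the single assertion $\{v_i,v_j\}\in V^2\setminus\epsilon$; so it suffices to rule out $\{v_i,v_j\}\in\epsilon$.

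So suppose instead $\{v_i,v_j\}\in\epsilon$, i.e. $v_i\epsilon v_j$. Applying the definition of edge relation subgraph partition to this pair and to the two components $\lambda_i,\lambda_j$ (whose roots are exactly $v_i,v_j$), we get either $v_i\epsilon v_j\in\lambda_i\setminus\lambda_j$ or $v_i\epsilon v_j\in\lambda_j\setminus\lambda_i$. In the first case $v_i\epsilon v_j\in\lambda_i$, so the Cartesian-product description $\lambda_i=\{\{v_i\}\times L(\lambda_i),\,L(\lambda_i)\times\{v_i\}\}$ from Lemma\ref{cg2} yields $v_j\in L(\lambda_i)$; then $v_j\in R(\lambda_j)\cap L(\lambda_i)$, contradicting $R(\lambda_j)\cap L(\lambda_i)=\varnothing$. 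The second case is symmetric: $v_i\epsilon v_j\in\lambda_j$ gives $v_i\in L(\lambda_j)$, hence $v_i\in R(\lambda_i)\cap L(\lambda_j)$, contradicting $R(\lambda_i)\cap L(\lambda_j)=\varnothing$. Therefore $\{v_i,v_j\}\notin\epsilon$, and since $v_i\neq v_j$ the pair lies in $V^2\setminus\epsilon$, so $v_i,v_j$ are interval vertices, which is precisely what the abbreviation $R(\lambda_i),R(\lambda_j)\in\bar\epsilon$ records.

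The only point that needs care — and the step I would flag as the real content — is the appeal to the partition clause: one must be sure that an edge of $\epsilon$ with endpoints $v_i,v_j$ is forced by the partition $\Lambda_p$ into one of the two subgraphs rooted at $v_i$ and $v_j$ (and hence appears in the leaf set of that one). This is where the definition of \emph{edge relation subgraph} (each of its members contains the root) and of \emph{edge relation subgraph partition} (every pair $v_i\epsilon v_j$ goes into $\lambda_i\setminus\lambda_j$ or $\lambda_j\setminus\lambda_i$) jointly do the work; if one additionally wants to invoke that $\Lambda_p$ exhausts $\epsilon$, that is available from Theorem\ref{cg4} in the OPERS case. A secondary, purely notational subtlety is that "$R(\lambda_i),R(\lambda_j)\in\bar\epsilon$" must be read in the same loose sense the paper already uses for CIVS, namely that the unordered pair of the two root vertices belongs to $\bar\epsilon$; I would make that reading explicit so the final line is unambiguous. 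Everything else is routine bookkeeping with the Lemma\ref{cg2} form.
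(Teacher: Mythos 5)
Your proposal is correct and follows essentially the same route as the paper: both arguments reduce the hypothesis to $v_i\notin L(\lambda_j)$ and $v_j\notin L(\lambda_i)$ via the Cartesian-product form of Lemma\ref{cg2}, and then observe that an edge $\{v_i,v_j\}\in\epsilon$ would have to be placed by the partition into one of the two subgraphs rooted at $v_i,v_j$ and hence put the other root into that subgraph's leaf set, a contradiction. Your version merely makes the case split on the partition clause explicit where the paper compresses it into one sentence, and the paper additionally proves the converse direction, which the stated theorem does not require.
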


\begin{proof}
Let $\Lambda_p$ be a partition of edge relation subgraph on instance set $\epsilon$. There are two components $\lambda_i,\lambda_j\in\Lambda_p$ with $\lambda_i\neq\lambda_j$. Moreover Let vertices $x_i\in R(\lambda_i)$ and $ x_j\in R(\lambda_j)$, we have $x_i\neq x_j$. We set 
\begin{align*}
I(i,j)=&R(\lambda_i)\cap L(\lambda_j);\quad I(j,i)=R(\lambda_j)\cap L(\lambda_i).
\end{align*}
Consider $ I(i,j)=\varnothing$ and $ I(j,i)=\varnothing$. As Lemma\ref{cg2}, there are  $x_i\notin L(\lambda_j)$ and $ x_j\notin L(\lambda_i)$. As the definition of edge relation subgraph, there is no such pair $\{x_i,x_j\}\in \epsilon$. Hence, we have $\{x_i,x_j\}\in\bar{\epsilon}$.%

Assume to $\{x_i,x_j\}\in\bar{\epsilon}$. Then  $\{x_i,x_j\}\notin\epsilon$, it implies $ x_j\notin\lambda_i$ and $x_i\notin\lambda_j$. Hence we have $I(i,j)=\varnothing$ and $ I(j,i)=\varnothing$. We can understand that the given condition is the necessary and sufficient condition in this Theorem.\\
\end{proof}

\begin{definition}
Let $\Lambda_q$ be an OPERS on set $\epsilon$ and $A_q$ be the roots set of set $\Lambda_q$. Consider $\Lambda_e$ is a subset of $A_q$. Given each vertex $v_i\in\Lambda_e$, there is no such edge relation subgraph $\lambda_i$ for $v_i\in R(\lambda_i)$ and $\lambda_i\in \Lambda_q$. We call set $\Lambda_e$ \emph{empty subgraph set}.
\end{definition}

\begin{lemma}\label{cg8}
Let $\Lambda_e$ be an empty subgraph set and $\vert \Lambda_e\vert\geq 2$ on an instance, then $\Lambda_e\subseteq\delta$.
\end{lemma}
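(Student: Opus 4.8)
The plan is to characterise precisely which vertices of the ordered roots set $A_q$ fail to root a subgraph in an OPERS $\Lambda_q$, and then show that any two such vertices must be interval vertices. First I would recall that, by Theorem~\ref{cg4}, $\Lambda_q$ is a partition of $\epsilon$, so every pair $\{x_i,x_j\}\in\epsilon$ lies in exactly one subgraph of $\Lambda_q$; and by Lemma~\ref{cg3}, if $x_i$ precedes $x_j$ in $A_q$ then $x_i\epsilon x_j$ is placed in the subgraph rooted at $x_i$, never in the one rooted at $x_j$. Consequently the subgraph that would be rooted at a vertex $v$ consists exactly of the pairs joining $v$ to the vertices appearing after $v$ in $A_q$; hence $v\in\Lambda_e$ (i.e. no subgraph of $\Lambda_q$ is rooted at $v$, so that hypothetical subgraph is empty) if and only if $v$ has no edge to any vertex that follows it in $A_q$.

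Next I would take two distinct elements $v_i,v_j\in\Lambda_e$ and, since $A_q$ is a linear sequence on $V$, assume without loss of generality that $v_i$ precedes $v_j$ in $A_q$. Because $v_i\in\Lambda_e$, the characterisation above tells us $v_i$ has no edge to any vertex following it in $A_q$, and in particular $\{v_i,v_j\}\notin\epsilon$. As $A_q=V$ and $v_i\neq v_j$, the pair $\{v_i,v_j\}$ therefore lies in $V^2\setminus\epsilon$, so $v_i$ and $v_j$ are interval vertices, i.e. $\{v_i,v_j\}\in\bar{\epsilon}$.

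Since this holds for every pair drawn from $\Lambda_e$, the set $\Lambda_e$ satisfies the defining property of a completed interval vertices set; together with $\vert\Lambda_e\vert\geq 2$ (which makes it a genuine, non-degenerate CIVS, consistent with Lemma~\ref{cg7_1}) this yields $\Lambda_e\subseteq\delta$ for the CIVS $\delta$ it determines. The point requiring care — and the main obstacle — is the first step: one must reason correctly about the \emph{direction} of the partition in an OPERS (that an edge always falls to the earlier root) and combine it with the partition property of Theorem~\ref{cg4}, so as to be sure that an empty subgraph at $v_i$ really does force $v_i$ to have no later neighbour, rather than merely meaning that its incident edges were distributed among other subgraphs; note that $v_i$ may still be joined to vertices that precede it in $A_q$, which is why only the one-directional conclusion $\{v_i,v_j\}\notin\epsilon$ is available and is exactly what the argument uses.
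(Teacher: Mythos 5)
Your proof is correct and follows essentially the same route as the paper's: the key step in both is that an edge $\{v_i,v_j\}$ would have to be recorded in the subgraph rooted at one of $v_i,v_j$, contradicting the emptiness of both, so every pair in $\Lambda_e$ lies in $\bar{\epsilon}$ and $\Lambda_e$ is a CIVS. The only cosmetic difference is that you route the contradiction through Lemma~\ref{cg3}'s directional assignment (the edge falls to the earlier root), whereas the paper invokes Theorem~\ref{cg7}; your version is, if anything, slightly more precise about which subgraph would be forced to be non-empty.
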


\begin{proof}
Let $\Lambda_e$ be an empty subgraph set and $\vert \Lambda_e\vert=N$. If $N=1$, then it $\Lambda_e\subseteq\delta$ does not hold with Lemma\ref{cg7_1}. Consider $N\geq 2$ and two vertices $v_i,v_j\in \Lambda_e$ with $v_i\neq v_j$. With definition of empty subgraph, there are no such components $\lambda_i, \lambda_j\in\Lambda_p$ for $v_i\in R(\lambda_i)$ and $v_j\in R(\lambda_j)$. Then there is no pair $\{v_i,v_j\}\in\epsilon$ with Theorem\ref{cg7}. Hence, have $\Lambda_e\subseteq \delta$ with definition of CIVS.%

Assume that there is a pair $v_i,v_j\in\Lambda_e$ and $\{v_i,v_j\}\notin \bar{\epsilon}$. It is obvious for $\{v_i,v_j\}\in\epsilon$ such that the pair can not be partitioned into other subgraph besides $\lambda_i$ or $\lambda_j$. As the descirbed in Theorem\ref{cg7}, we have that subgraph $\lambda_i$ or $\lambda_j$ is no empty, a contradiction. \\
\end{proof}

\begin{definition}
Let $\Delta$ be a subset of a interval vertices set $\bar{\epsilon}$. Given each component $h_i\in \Delta$ having $h_i\subseteq\delta$. For each pair $h_i,h_j\in\Delta$ with $h_i\neq h_j$, then there is $(h_i\cup h_j)\nsubseteq\delta$. We call set $\Delta$ \emph{minimum completed interval vertices set}, abbr. by MCIVS.\\
\end{definition}

\begin{lemma}\label{cg9_1}
Let $\Delta$ be MCIVS on an instance set $\bar{\epsilon}$, then set $\Delta$ is the partition of set $\bar{\epsilon}$.
\end{lemma}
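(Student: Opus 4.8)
The plan is to verify, one by one, the three defining conditions of a partition from the Preliminaries for the family $\Delta$ over the interval vertices set $\bar\epsilon$: that each member of $\Delta$ is non-empty, that the union of the members exhausts $\bar\epsilon$, and that two distinct members are disjoint. The first two conditions are routine and will follow the same fashion as Lemma~\ref{t6} and Theorem~\ref{cg4}; the disjointness condition is where the argument must do real work, since — unlike the OPERS case, where Lemma~\ref{cg3} supplies an ordering that forces $v_i\epsilon v_j\in\lambda_s\setminus\lambda_t$ — an MCIVS carries no ordering and the only structural fact available is the non\text{-}mergeability of distinct components. So I expect disjointness to be the main obstacle.

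First I would dispose of non-emptiness. Every component $h_i\in\Delta$ is by definition a completed interval vertices set, so Lemma~\ref{cg7_1} gives $\hash h_i\geq 2$; in particular $h_i\neq\varnothing$, which is the second partition condition.

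Next, covering. I would argue by contradiction, in the style of Lemma~\ref{t6}: assume $\bar\epsilon\setminus\bigcup_{h_i\in\Delta}h_i\neq\varnothing$ and pick an interval pair $\{u,v\}$ in this difference. Since $u,v$ are interval vertices, $\{u,v\}$ is itself a CIVS (indeed a minimal one by Lemma~\ref{cg7_1}), so there certainly is a component $h_k\in\Delta$ to which this CIVS can be introduced — otherwise $\Delta$ would fail to exhaust $\bar\epsilon$ as an MCIVS must — whence $\{u,v\}\subseteq h_k$, contradicting $\{u,v\}\notin\bigcup_{h_i}h_i$. Hence $\bigcup_{h_i\in\Delta}h_i=\bar\epsilon$.

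The main obstacle is disjointness. Here I would take two distinct components $h_i,h_j\in\Delta$ and assume for contradiction that $h_i\cap h_j\neq\varnothing$, fixing a shared interval pair $\{x,y\}\in h_i\cap h_j$. The definition of MCIVS gives $h_i\cup h_j\nsubseteq\delta$, so $h_i\cup h_j$ is not a CIVS: there are vertices $a,b$ occurring in $h_i\cup h_j$ with $\{a,b\}\in\epsilon$, and since each of $h_i,h_j$ is a CIVS, such $a,b$ cannot both lie inside $h_i$ nor both inside $h_j$, so after relabelling $a$ comes from $h_i\setminus h_j$ and $b$ from $h_j\setminus h_i$. The delicate step is to turn this conflict between $h_i$ and $h_j$ into a genuine contradiction with the shared pair $\{x,y\}$: the idea is that, because $\{x,y\}$ is already carried by $h_i$, the component $h_j$ is not indispensable — one shows that every interval pair supplied by $h_j$ is either already contained in $h_i$ or can be reassigned to a CIVS already present in $\Delta$, so $\Delta\setminus\{h_j\}$, suitably completed, is still a family of CIVSs covering $\bar\epsilon$, contradicting the minimality built into the notion of MCIVS. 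Getting this reassignment right — handling the case where $a$ or $b$ coincides with $x$ or $y$, and checking that no other component loses its completeness property — is the real content of the proof. Once disjointness is in hand, the three conditions together show that $\Delta$ is the partition of $\bar\epsilon$, and (as with cycle being no multi-set earlier) $\bar\epsilon$ is thereby exhibited as split into blocks with no overlap.
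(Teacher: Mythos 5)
Your non-emptiness step (via Lemma~\ref{cg7_1}) and your covering step are essentially what the paper does: the paper sets $R=\bar{\epsilon}\setminus\Delta$ and argues $R=\varnothing$ by showing any leftover vertex either joins an existing component or was never in $\bar{\epsilon}$. The gap is in your disjointness argument, which you correctly identify as the crux but do not actually close. You appeal to ``the minimality built into the notion of MCIVS,'' but the definition of MCIVS contains no such minimality: it requires only that each component be a CIVS and that the union of any two \emph{distinct} components fail to be a CIVS. There is no clause saying $\Delta$ has the fewest possible components, nor that no component is redundant. So even if you carried out the reassignment and showed that $\Delta\setminus\{h_j\}$ (suitably completed) still covers $\bar{\epsilon}$, nothing in the definition would be contradicted; and the reassignment itself need not be possible, since a vertex of $h_j$ may be joined by an edge to some vertex of every other component and hence cannot be absorbed anywhere.

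In fact disjointness is not derivable from the stated definition at all. Take $V=\{a,b,c,d,e\}$ with $\epsilon$ consisting of the pairs $\{a,d\},\{b,d\}$ and all pairs containing $e$; then $\bar{\epsilon}$ consists of $\{a,b\},\{a,c\},\{b,c\},\{c,d\}$. The sets $h_1=\{a,b,c\}$ and $h_2=\{c,d\}$ are both CIVSs, their union is not a CIVS (because $\{a,d\}\in\epsilon$), and together they cover $\bar{\epsilon}$, so $\Delta=\{h_1,h_2\}$ satisfies every clause of the MCIVS definition — yet $h_1\cap h_2=\{c\}\neq\varnothing$. Any proof of pairwise disjointness must therefore fail on this example; the condition $(h_i\cup h_j)\nsubseteq\delta$ is simply too weak to force it. For what it is worth, the paper's own proof does not close this gap either: it concedes that a shared vertex ``can not affect $\delta_i\subseteq\delta$ or $\delta_j\subseteq\delta$'' and then secures disjointness only by stipulating that $\Delta$ is ``no a multi-set,'' which rules out repeated components but not overlapping ones. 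So your instinct about where the difficulty lies is right; the difficulty is that the lemma, as stated, needs either a strengthened definition of MCIVS (explicitly requiring pairwise disjoint components) or a weaker conclusion (a cover rather than a partition).
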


\begin{proof}
Let $\Delta$ be MCIVS on a no-empty instance set $\bar{\epsilon}$. We set $R=\bar{\epsilon}\setminus \Delta$. Consider $R\neq\varnothing$. For $\vert R\vert=1$, then set vertex $u\in R$. If there is at least component $\delta_k\in \Delta$ such that $(\delta_k\cup \{u\})\subseteq\delta$, then we have $u\in \delta_k$ and $R=\varnothing$. Given each vertex $v\in\Delta$. If there is pair $\{u,v\}\in\epsilon$, then $u\notin \bar{\epsilon}$ with definition of interval vertices set, we have $R=\varnothing$. If only there is a vertex $v\in\Delta$ such that pair $\{u,v\}\in\bar{\epsilon}$, then we have a component $\{u,v\}\subseteq\Delta$. Hence, observe $\vert R\vert=1$, it does not holds.%

Consider $\vert R \vert>1$. Summarizing the case of $\vert R\vert=1$, for each vertex $u\in R$ such that we may have $u\in\Delta$ or $u\notin\bar{\epsilon}$. Hence, we can understand $\bar{\epsilon}\setminus \Delta=\varnothing$.%

Assume that for each pair $\delta_i, \delta_j\in\Delta$ with $\delta_i\neq \delta_j$ such that $\delta_i\cap\delta_j\neq\varnothing$. Consider $\vert\delta_i\vert= 2$ and $\vert\delta_j\vert=2$, we set $v\in(\delta_i\cap\delta_j)$. If we remove $v$ from one of components, we may have the case of $\vert R\vert=1$ above. If $\vert\delta_i\vert>2$ and $\vert\delta_j\vert> 2$, then vertex $v$ in $\delta_i$ or $\delta_j$ can not affect $\delta_i\subseteq\delta$ or $\delta_j\subseteq\delta$. Because we define the set $\Delta$ as no a multi-set, therefore there may be $\delta_i\cap\delta_j=\varnothing$. Consequently, we understand that the MCIVS impossibly is an equivalent class on set $\bar{\epsilon}$.%

As Lemma\ref{cg7_1}, we understand that given a component $\delta_i\in\Delta$, there is $\vert \delta_i\vert\geq 2$. Hence, the MCIVS $\Delta$ is the partition of interval vertices set.\\
\end{proof}

\begin{theorem}\label{cg9}
Given a no-empty $G_{\theta}=(V,\epsilon)$.  Let $\Delta$ be a MCIVS on set $V$. If $G_{\theta}\in G_{\eta}$, then $\Delta=\varnothing$.
\end{theorem}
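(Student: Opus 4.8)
The plan is to unwind the relevant definitions; the statement will follow almost immediately once the interval vertices set is shown to be empty. First I would use the hypothesis $G_{\theta}\in G_{\eta}$: by the definition of \emph{completed regular graph}, for every pair $u,v\in V$ with $u\neq v$ we have $\{u,v\}\in\epsilon$. Hence there is no pair of distinct vertices lying in $V^{2}\setminus\epsilon$.

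Next I would examine the interval vertices set $\bar{\epsilon}$. By its definition, a member of $\bar{\epsilon}$ is a pair $\{u,v\}$ with $u\neq v$ and $\{u,v\}\in(V^{2}\setminus\epsilon)$. Combining this with the previous paragraph, no such pair can exist, so $\bar{\epsilon}=\varnothing$. I would also note that the degenerate case allowed in $G_{\eta}$ — the isolated vertex, where $V=\{v\}$ — is harmless: there are simply no pairs of distinct vertices, so again $\bar{\epsilon}=\varnothing$. Thus the argument is uniform over all of $G_{\eta}$.

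Finally I would descend the definitional chain $\Delta\rightsquigarrow\delta\rightsquigarrow\bar{\epsilon}$. Each component $h_i$ of an MCIVS $\Delta$ satisfies $h_i\subseteq\delta$ for some CIVS $\delta$, and by definition $\delta\subseteq\bar{\epsilon}$. Since $\bar{\epsilon}=\varnothing$, any such $\delta$ would be empty, contradicting Lemma~\ref{cg7_1}, which asserts $\vert\delta\vert\geq 2$ for a no-empty CIVS. Hence there is no non-empty CIVS, so $\Delta$ admits no component, i.e. $\Delta=\varnothing$. There is essentially no hard step here; the only point demanding a moment's care is confirming that the special isolated-vertex instance of $G_{\eta}$ does not break the chain, which it does not.
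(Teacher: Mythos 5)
Your proposal is correct and follows essentially the same route as the paper: both derive $\bar{\epsilon}=\varnothing$ directly from the definition of completed regular graph and then conclude $\Delta=\varnothing$ via Lemma~\ref{cg7_1}, with the isolated-vertex case handled as a harmless degenerate instance. The only difference is that the paper additionally argues the converse (that $\Delta=\varnothing$ implies $G_{\theta}\in G_{\eta}$), which the stated theorem does not require.
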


\begin{proof}
Given a no-empty $G_{\theta}=(V,\epsilon)$.  There is a MCIVS $\Delta$ on set $V$. Consider $n=1$. There is $G_{\theta}\in G_{\eta}$ with definition of completed regular graph. As Lemma\ref{cg7_1}, there is no interval vertices set on instance. Hence, $\Delta=\varnothing$ holds for given instance is an isolated node.%

Consider $n>1$. For each pair $u,v\in V$ with $u\neq v$ such that there is $\{u,v\}\in\epsilon$. Hence there is no any pair in set $\bar{\epsilon}$, i.e. set $\bar{\epsilon}$ is empty. Then $\Delta=\varnothing$ holds.%

If there is $\Delta=\varnothing$ on instance. As Lemma\ref{cg9_1}, we have $\Delta=\bar{\epsilon}$ such that no any pair in set $\bar{\epsilon}$. Hence, there is $G_{\theta}\in G_{\eta}$. We understand that the condition of $G_{\theta}\in G_{\eta}$ is the necessary and sufficient condition.\\
\end{proof}

\begin{theorem}\label{cg10}
Let $S$ is the collection of unit subgraphs on an instance $G_{\theta}=(V,\epsilon)$. Given each component $s_i\in S$ such that $\vert L(s_i)\vert\equiv m$. If  $ 2\leq m< n-1$, then there is a MCIVS $\Delta$ on set $\bar{\epsilon}$ with $2\leq\vert \Delta\vert\leq m$.
\end{theorem}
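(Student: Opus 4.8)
The plan is to split the statement into the lower bound $|\Delta|\ge 2$ and the upper bound $|\Delta|\le m$, and to run both in the proof-by-contradiction / OPERS-bookkeeping style already used for Theorems \ref{cg5}--\ref{cg9}. For the lower bound I would first observe that the hypothesis $m<n-1$ means every unit subgraph $s_u$ has $|L(s_u)|=m<n-1$, so each vertex $u$ misses at least one partner and therefore lies in at least one interval pair; hence $\bar\epsilon\neq\varnothing$ and, by Lemma \ref{cg9_1}, a MCIVS $\Delta$ on $\bar\epsilon$ exists and partitions $\bar\epsilon$. To force at least two blocks I would rule out the degenerate possibility that $\bar\epsilon$ is itself a single CIVS: if it were, every vertex incident to some non-edge would be pairwise interval, i.e. pairwise non-adjacent in $G_\theta$, and since $m<n-1$ forces \emph{every} vertex to be incident to a non-edge, this would make $G_\theta$ edgeless, i.e. $m=0$, contradicting $m\ge 2$. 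So no one-block partition of $\bar\epsilon$ into CIVSs exists, and $|\Delta|\ge 2$.

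For the upper bound I would fix one vertex $v$ and use its leaf set $L(s_v)$, of size $m$, as a gauge. The intended mechanism is: for each block $h_i\in\Delta$ pick a witnessing interval pair $\{x_i,y_i\}\in h_i$, and argue — via Theorem \ref{cg7} (the characterisation $R(\lambda_i),R(\lambda_j)\in\bar\epsilon$ through the $I(i,j)$ intersections) together with the non-mergeability clause $(h_i\cup h_j)\nsubseteq\delta$ from the MCIVS definition — that two distinct blocks cannot both be "charged" to the same neighbour of $v$. This would inject the block set into $L(s_v)$ and cap $|\Delta|$ at $m$. Along the way I would reuse Lemma \ref{cg3} and Theorem \ref{cg4} (the OPERS partition property) essentially verbatim, since deciding "which subgraph a given interval pair is assigned to" is the same accounting that drove Theorems \ref{cg5} and \ref{cg6}, and I would invoke Lemma \ref{cg7_1} to keep every block of size $\ge 2$.

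The hard part will be the upper bound $|\Delta|\le m$. Non-mergeability only says two blocks cannot be fused; it does not by itself hand us an injection of blocks into $L(s_v)$, so the charging argument above has a gap and will probably need either a careful choice of $v$ (e.g. a vertex whose non-neighbourhood induces the fewest "independent pieces") or a switch from an arbitrary $\Delta$ to a specially built one — which is allowed, since the theorem only claims \emph{existence} of a MCIVS meeting the bound. I would therefore make the upper bound constructive: sweep the ordered roots set $A_q$ of a suitable OPERS on $\epsilon$, absorb each new interval pair into an existing block whenever the non-mergeability test permits, and open a new block only when forced; then show a forced opening can be blamed on a fresh element of $L(s_v)$, so at most $m$ blocks are ever opened. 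The remaining work — checking that this greedily built family is genuinely a MCIVS (each block a CIVS of size $\ge 2$ and no two blocks mergeable) — is routine casework in the paper's style, but it is where most of the verification effort will actually go.
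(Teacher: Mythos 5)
Your lower bound argument is sound and takes a different route from the paper: you rule out a one-block MCIVS by noting that $m<n-1$ puts every vertex into $\bar{\epsilon}$, so a single CIVS covering $\bar{\epsilon}$ would make $G_{\theta}$ edgeless, contradicting $m\geq 2$. The paper instead exhibits a two-block MCIVS directly, by taking a graph partition $\Sigma$ whose regions contain no internal edge relations and setting $\delta_1=\bigcup_k\sigma_{2k}$ and $\delta_2=\bigcup_k\sigma_{2k+1}$, with Lemma~\ref{gp3} guaranteeing that no edge joins $\sigma_i$ to $\sigma_{i+2}$. Your version is cleaner and does not need the extra hypothesis that the regions are internally edge-free.

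The upper bound is where the real problem sits, and the gap you flag yourself is genuine and is not closed by the greedy repair you sketch. A sweep that opens a new block only when a vertex $u$ is adjacent to something in every existing block caps the count at $(\deg u)+1=m+1$, not $m$: at the moment the $(m+1)$-st block is opened, each of the $m$ existing blocks is already blamed on a distinct neighbour of $u$, and nothing is left over to blame for the new one. Moreover your charging target $L(s_v)$ is the leaf set of one fixed $v$, while forced openings occur at many different vertices, so there is no single set of size $m$ to inject into. The paper's (informal) way past $m+1$ is precisely the observation your proposal never uses: since $2\leq m<n-1$ and $G_{\theta}$ is connected, the closed neighbourhood $\{v_i\}\cup L(s_i)$ cannot induce a completed regular graph --- otherwise $G_{\theta}$ would be complete or disconnected --- so every vertex has at least two leaves that are interval to each other and can share a block, which is what allows the $m+1$ vertices of a closed neighbourhood to be packed into $m$ classes. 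Any rigorous proof of $\vert\Delta\vert\leq m$ must exploit this non-completeness of neighbourhoods (it is exactly the hypothesis that separates a Brooks-type bound from the trivial greedy one); without it the best your construction can certify is $\vert\Delta\vert\leq m+1$.
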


\begin{proof}
Let $S$ is the collection of unit subgraphs on an instance $G_{\theta}=(V,\epsilon)$. For each component $s_i\in S$ we have $\vert L(s_i)\vert\equiv m$. We set $ 2\leq m< n-1$. Furthermore, let $\bar{\epsilon}$ be an interval vertices set on set $V$.%

As definition of completed regular graph, the graph $G_{\theta}$ is not a completed regular graph with $m<n-1$. Then the set $\bar{\epsilon}$ is no empty such that the MCIVS $\Delta$ on set $\bar{\epsilon}$ is no empty with Theorem\ref{cg9}. It is obviously that if $n=1$ or $n=2$, then $G_{\theta}\in G_{\eta}$ such that these cases can not satisfy the condition of $m<n-1$. Hence, we have to consider the case $n\geq 3$.%

Consider each unit subgraph $s_i$ with vertex $v_i\in R(s_i)$. Set there is a cutting graph $G'=s_i$ and let $G'=(V',~\epsilon')$. We assume that for each pair $v_j,v_k\in L(s_i) $ such that $\{v_j,v_k\}\in\epsilon$. With definition of unit subgraph, we have those pairs $v_i\epsilon v_j,v_i\epsilon v_k$ on $G'$, then for each pair $u,v\in G'$ such that $\{u,v\}\in\epsilon'$, we may have $G'\in G_{\eta}$ and $\vert V'\vert =m+1$. If $G'=G_{\theta}$, then this assumption contradicts the given conditions. In case $G'\subset G_{\theta}$, then the instance $G_{\theta}$ is no connected. Hence $G'\notin G_{\eta}$ and $V'\cap\bar{\epsilon}\neq\varnothing$.%

We can decide that for these leaves on subgraph $s_i$, there are at least two vertices $u,v\in\bar{\epsilon}$, thus all leaves at most have to be partitioned into $m-1$ CIVSs. Observe that $m$ CIVSs may satisfy the partition for those vertices on subgraph $s_i$. If $L(s_i)\setminus \Delta=\varnothing$ and $\vert \Delta\vert=m$, we understand there is at least component $\delta_j\in\Delta$ to be introduced with root $v_i$, similar to case of $L(s_i)\setminus \Delta\neq\varnothing$. While $L(s_i)\cap\Delta=\varnothing$, then it is certainly that each component $\delta_j\in\Delta$ can be introduced with the root $v_i$. Hence, $\vert \Delta\vert\leq m$.%

Consider there is a partition of graph $\Sigma$ on instance $G_{\theta}$. If there is $\vert \Sigma\vert \geq 2$ and no any traversal relation on $V$ direction in each region, then for each pair $u,v$ in a region such that have $u,v\in\bar{\epsilon}$. Then we have for each $\sigma_i\in \Sigma$, if $\vert \sigma_i\vert\geq 2$ then there may be $ \sigma_i\subseteq \delta$ with definition of CIVS. With Lemma\ref{gp3} and Lemma\ref{gp7}, we understand there is no traversal relation between $\sigma_i$ and $\sigma_{i+2}$. Then we have that
\begin{align*}
\delta_1&=\bigcup_{k=1}^N\sigma_{2k}\quad \delta_2=\bigcup_{k=0}^N\sigma_{2k+1},\quad (N\leq \lfloor\vert \Sigma\vert /2\rfloor).\\
\Delta&=\{\delta_1,~\delta_2\}.
\end{align*}
Hence $\vert\Delta\vert=2$, i.e. it $2\leq\vert \Delta\vert\leq m$ holds. We understand that we can utilize graph partition to find out CIVS on an instance.\\
\end{proof}

\begin{descussion}
If we view the interval as a relation between two vertices, then this relation can possess transitivity and symmetry. As for the relation of $=$ or $\neq$  similarly possesses these properties, we can understand it is no an occasional case of existing the minimum chromatic set on an instance. Theorem\ref{cg9_1} shows that the MCIVS  is no an equivalent class such that there exist more types of partition on set $\bar{\epsilon}$, which is easy to confuse us. Hence, author will only show the speed-up approximation algorithms, but they have a nice precision.\\
\end{descussion}

\subsection{Minimum Chromatic Set}
\begin{definition}
Let $V'$ be a subset of set $V$ and $\vert V'\vert \geq 2$. For labeling all vertices in set $ V'$, if there exist a relation having three properties of reflexive, symmetry and transitivity among those coloring values, then we say there is a relation of \emph{equivalence-color} on set $V'$, denote by $\kappa$. Reserve the abbr. $V'\subseteq\kappa$ to represent this concept. 
\end{definition}

\begin{definition}
Given a no-empty $G_{\theta}=(V,~\epsilon)$. Let $V'$ be a subset of set $V$ and $\vert V'\vert \geq 2$. For labeling all vertices in set $ V'$, if there exist a relation having two properties of symmetry and transitivity among those coloring values, then we call this relation \emph{inequivalence-color} on set $V'$, denote by $\bar{\kappa}$. Reserve the abbr. $V'\subseteq\bar{\kappa}$ to represent this concept.
\end{definition}

\begin{theorem}\label{MC1}
Let $V'$ be a collection of vertices and $\vert V'\vert \geq 2$. If $V'\subseteq\delta$,  the  interval relation $\bar{\epsilon}$ possesses these properties of symmetry and transitive on set $V' $, similar to $V'\subseteq G_{\eta}$.
\end{theorem}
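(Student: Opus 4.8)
The plan is to reduce both assertions to one elementary observation: under each hypothesis the relevant relation becomes the \emph{total} relation on $V'$ (every admissible pair is related), so that symmetry and transitivity are automatic once the diagonal is handled. I would begin by unwinding the hypothesis $V'\subseteq\delta$: since $\delta$ is a CIVS, every two distinct vertices appearing in $\delta$ form an interval pair, hence for any distinct $u,v\in V'$ we have $\{u,v\}\in\bar{\epsilon}$. Symmetry of $\bar{\epsilon}$ on $V'$ then follows from the symmetry property of the edge relation $\epsilon$ (if $(u,v)\notin\epsilon$ then $(v,u)\notin\epsilon$), equivalently from $\bar{\epsilon}$ being a set of unordered pairs. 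For transitivity, suppose $u\,\bar{\epsilon}\,v$ and $v\,\bar{\epsilon}\,t$ with $u,v,t\in V'$; the desired $u\,\bar{\epsilon}\,t$ holds not by any chaining argument but simply because $u$ and $t$ are two vertices of $V'\subseteq\delta$, so $\{u,t\}\in\bar{\epsilon}$ by the completeness clause defining a CIVS. This gives that $\bar{\epsilon}$ restricted to $V'$ satisfies the two axioms of an inequivalence-color relation $\bar{\kappa}$.

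Next I would record the companion claim $V'\subseteq G_{\eta}$ in exactly the same shape: here the hypothesis means the induced subgraph on $V'$ is completed regular, so $\{u,v\}\in\epsilon$ for every distinct $u,v\in V'$. Symmetry is again a direct instance of the symmetry property of $\epsilon$, and transitivity collapses because if $u\,\epsilon\,v$ and $v\,\epsilon\,t$ with $u,v,t\in V'$ then $\{u,t\}\in\epsilon$ merely from $u,t\in V'$. Thus the edge relation on a clique is symmetric and transitive, mirroring the interval relation on a CIVS, which is the analogy the word ``similar'' is pointing at.

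The only point that needs explicit attention, and the only friction I anticipate, is the diagonal: both $\epsilon$ and $\bar{\epsilon}$ are anti-reflexive ($\bar{\epsilon}$ being defined only for $u\neq v$), so the ``transitivity'' in the statement must be read for \emph{distinct} $u,v,t$, exactly as the relation ``$\neq$'' among colour values in \eqref{color1} is used. I would state this convention at the outset so that the reader does not look for reflexivity, and then the whole argument is the two-line verification above carried out once for $\bar{\epsilon}$ on a CIVS-block and once for $\epsilon$ on a clique. I would close by remarking that these properties are in fact forced, not merely permitted, since the hypotheses make the restricted relation the total relation on $V'$.
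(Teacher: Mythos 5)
Your proposal is correct and follows essentially the same route as the paper: both arguments observe that $V'\subseteq\delta$ (respectively $V'\subseteq G_{\eta}$) makes every distinct pair in $V'$ an interval pair (respectively an edge), so symmetry comes from the unordered/symmetric nature of the relation and transitivity holds vacuously-by-totality rather than by any chaining. Your explicit remark that the relations are anti-reflexive and that transitivity must be read for distinct vertices is a small point of care the paper glosses over, but it does not change the substance of the argument.
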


\begin{proof}
Given an instance $G_{\theta}=(V,~\epsilon)$. Let $V'$ be a subset of set $V$ and $\vert V'\vert \geq 2$. Consider the case $V'\subseteq\delta$ without any edge relation on it. For each pair $u,v\in V'$ such that the pair $u,v\in\delta$. As the definition of interval vertices, there is pair $\{u,v\}\in V^2\setminus\epsilon$. Hence, we have these pairs $(u,v),(v,u)\subseteq\bar{\epsilon}$, i.e. there is the property of symmetry in interval relation.%

Consider $\vert V'\vert>2$ and there are three vertices $u,v,t\in V'$. With the definition of CIVS, these vertices $u,v,t$ are interval to each other. Then for pairs $\{u,v\},\{v,t\}\subseteq\bar{\epsilon}$, we similarly can have $\{u,t\}\subseteq\bar{\epsilon}$. Hence, there exists the property of transitive on set $V'$.%

Similarly we can use the same fashion to prove there existing two properties of symmetry and transitivity on set $V'$ with $V'\subseteq G_{\eta}$.\\
\end{proof}

\begin{theorem}\label{cg11}
There is a coloring relation $g$ on an instance set $V$. Let $V'$ be a subset of set $V$. If $V'\subseteq\delta$, then there may be $g_{\bar{\epsilon}}:V'\rightarrow \kappa$.
\end{theorem}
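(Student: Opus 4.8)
The plan is to carry out the whole argument inside the simple graph $G_{\theta}=(V,\epsilon)$, where by definition $\tau\setminus\epsilon=\varnothing$, so that both the characteristic function $f$ of \eqref{eqc1} and the coloring relation $g$ of \eqref{color1} are governed by membership in $\epsilon$; stating this reduction at the outset is essentially an appeal to Theorem \ref{cg1}. First I would unpack the hypothesis $V'\subseteq\delta$: by Lemma \ref{cg7_1} any CIVS has cardinality at least $2$, and by the definition of CIVS every pair $u,v\in V'$ satisfies $u,v\in\bar{\epsilon}$, hence $\{u,v\}\in V^2\setminus\epsilon$. Because $\tau=\epsilon$ on $G_{\theta}$, this yields $(u,v),(v,u)\notin\tau$, so $\{(u,v),(v,u)\}\cap\tau=\varnothing$ and therefore $f(u,v)=f(v,u)=0$ for every pair drawn from $V'$.

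Next I would invoke the coloring relation \eqref{color1}. Since $f(u,v)=0$ on all of $V'$, the branch $C_u=C_v\vee C_u\neq C_v$ applies to each pair, so the constraint never forces $C_u\neq C_v$ inside $V'$. Hence one admissible realization of $g$ restricted to the interval pairs of $V'$ is the constant one, assigning a single common colour to every vertex of $V'$. This is exactly where the ``may be'' of the statement enters: among the colourings compatible with $g_{\bar{\epsilon}}$ there exists one in which all chromatic values on $V'$ coincide.

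Then I would verify that this constant colouring makes the relation among the chromatic values of $V'$ an equivalence-color relation $\kappa$. Reflexivity is immediate ($C_u=C_u$); symmetry and transitivity of the underlying interval relation on $V'$ are already supplied by Theorem \ref{MC1}, and the equality of all assigned values transports these properties to the colour values, giving symmetry and transitivity of ``$=$'' on the $C_u$'s. Thus all three properties of $\kappa$ hold, i.e. $V'\subseteq\kappa$, and so the composite $g_{\bar{\epsilon}}:V'\rightarrow\kappa$ is realized.

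The part needing most care is the bookkeeping between the edge relation $\epsilon$ and the traversal relation $\tau$: interval vertices are defined via $V^2\setminus\epsilon$ while $f$ is phrased via $\tau$, so the argument only closes because the coloring discussion lives on $G_{\theta}$ with $\tau=\epsilon$; I would make that explicit rather than leave it implicit. A secondary subtlety is to pin down what $g_{\bar{\epsilon}}$ denotes — namely $g$ evaluated on pairs taken from $\bar{\epsilon}$ — and to observe that since every pair of $V'$ is an interval pair, restricting $g$ to $V'$ and restricting $g_{\bar{\epsilon}}$ to $V'$ are the same thing, which is what legitimizes writing the map with domain $V'$.
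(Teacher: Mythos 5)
Your proposal is correct and follows essentially the same route as the paper: both arguments observe that every pair of $V'$ is an interval pair, so the coloring relation \eqref{color1} never forces $C_u\neq C_v$ inside $V'$, and then select the all-equal assignment, whose reflexivity, symmetry and transitivity realize the equivalence-color relation $\kappa$. The paper additionally contrasts this with the ``unequal'' choice (which can violate transitivity) and appends a partial converse ($A\subseteq\kappa\Rightarrow A\subseteq\delta$), but these are embellishments beyond what the statement requires; your explicit bookkeeping of $\tau=\epsilon$ on $G_{\theta}$ is a welcome clarification the paper leaves implicit.
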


\begin{proof}
Given an instance $G_{\theta}=(V,\epsilon)$. Let $V'$ be a subset of set $V$ and $V'\subseteq\delta$. We set there is a coloring relation $g$ on set $V$. Then for each pair $u,v\in V'$ such that their chromatic values have $C_u=C_v$ or $C_u\neq C_v$ with definition of coloring relation\eqref{color1}.%

Consider there are three vertices $u,v,t\in V'$ and we choose the relation of unequal $\neq$ on set $V'$. It is obvious that the case $C_u=C_t$ may satisfy the condition of $C_u\neq C_v$ and $C_v\neq C_t$. Hence it may contradict the property of transitive on CIVS $\delta$.%

If we choose the relation of equal = on set $V'$. It is naturally that there is $C_u=C_t$ with $C_u=C_v$ and $C_v= C_t$, satisfies the property of transitive on set $V'$. And it similarly satisfies existing two properties of symmetry and reflexivity on set $V'$. Hence there is $g_{\bar{\epsilon}}:V'\rightarrow \kappa$.%

Let $A$ be the subset of set $V$ and $A\subseteq\kappa$. Consider each pair $u,v\in A$ with $u\neq v$. For $C_u=C_v$ such that there is pair $u,v\in\bar{\epsilon}$ with the definition of coloring relation\eqref{color1}. Hence $A\subseteq \delta$. It $g_{\bar{\epsilon}}:V'\rightarrow \kappa$ is the necessary condition in this Theorem.\\
\end{proof}

\begin{theorem}\label{cg12}
There is a coloring relation $g$ on an instance set $V$. Let $V'$ be a subset of set $V$. If $V'\subseteq G_{\eta}$, then there is  a coloring relation $g$ on set $V'$ such that $g_{\epsilon}:V'\rightarrow  \bar{\kappa}$.
\end{theorem}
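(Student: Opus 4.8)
The plan is to exploit the completeness of $G_{\eta}$ together with the coloring relation \eqref{color1} to read off the defining properties of $\bar{\kappa}$ directly, in close analogy with Theorem \ref{MC1}. First I would take the restriction of the given coloring relation $g$ to $V'$ and, for each pair $u,v\in V'$ with $u\neq v$, observe that $V'\subseteq G_{\eta}$ forces $\{u,v\}\in\epsilon$, hence $f(u,v)=1$ by \eqref{eqc1}, hence $C_u\neq C_v$ by \eqref{color1}. This single observation — that every pair of distinct vertices of $V'$ receives distinct colours — is the engine of the whole argument.

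Next I would verify the two properties in the definition of inequivalence-color on $V'$. For symmetry: since $f(u,v)=f(v,u)$ in \eqref{eqc1}, the relation $C_u\neq C_v$ is unchanged under swapping $u$ and $v$. For transitivity: given three vertices $u,v,t\in V'$, completeness yields $\{u,v\},\{v,t\},\{u,t\}\in\epsilon$, so $C_u\neq C_v$, $C_v\neq C_t$ and also $C_u\neq C_t$ hold simultaneously; thus the implication ``$C_u\neq C_v$ and $C_v\neq C_t$ $\Rightarrow$ $C_u\neq C_t$'' holds on $V'$. I would also remark that reflexivity fails, since trivially $C_u=C_u$, so the colour relation on $V'$ carries exactly the symmetry and transitivity required by $\bar{\kappa}$, giving $g_{\epsilon}:V'\rightarrow\bar{\kappa}$. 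This parallels the proof of Theorem \ref{MC1}, where the same completeness made the edge relation symmetric and transitive on $V'$.

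The main obstacle is conceptual rather than computational: ``$\neq$'' is in general not a transitive relation, so one must be careful to see that it becomes transitive on $V'$ \emph{only because} $V'\subseteq G_{\eta}$ forces every pair to be coloured distinctly — the transitivity here is vacuous, not an intrinsic property of $\neq$. I would state this explicitly instead of appealing to ``the usual properties of $\neq$''. Finally, if a necessity clause is wanted in the style of the closing remark of Theorem \ref{cg11}, I would point out that the converse fails — by \eqref{color1} the outcome $C_u\neq C_v$ is compatible with $f(u,v)=0$ — so the hypothesis $V'\subseteq G_{\eta}$ is sufficient but not necessary, and I would confine the proof to the forward direction.
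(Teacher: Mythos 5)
Your proposal is correct and follows essentially the same route as the paper's own proof: use $V'\subseteq G_{\eta}$ to force $C_u\neq C_v$ for every distinct pair via \eqref{color1}, then check symmetry and transitivity of the colour-inequality relation and note that reflexivity fails, concluding $g_{\epsilon}:V'\rightarrow\bar{\kappa}$. Your explicit remark that the transitivity is supplied by completeness (the third inequality being forced directly) rather than by any intrinsic transitivity of $\neq$ is a welcome clarification of a point the paper glosses over, but it does not change the argument.
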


\begin{proof}
There is a coloring relation $g$ on an instance set $V$. Let $V'$ be a subset of set $V$. We set $V'\subseteq G_{\eta}$. Then for each pair $u,v\in V'$ such that these is $C_u\neq C_v$ with definition of coloring relation\eqref{color1}. Consider there are three vertices $u,v,t\in V'$. Then we have $C_u\neq C_v\neq C_t$ and $C_u\neq C_t$. It is certainly that the case can satisfy those properties of symmetry and transitivity.%

Because of no existing $C_u\neq C_u$, then there is the relation of inequivalence-color on set $V'$, therefore, we have that $g_{\epsilon}:V'\rightarrow \bar{\kappa}$. Consider there is a set $A$ with $A\subseteq\bar{\kappa}$, for pair $u,v\in A$ and $u\neq v$ with $u,v\in\bar{\epsilon}$, there may be a case $C_u\neq C_v$ with definition of coloring relation $g$. Therefore $g_{\epsilon}:V'\rightarrow \bar{\kappa}$ is no necessary condition in this Theorem.\\
\end{proof}

\begin{theorem}\label{cg14}
Let $V'$ be a subset of set $V$ on an instance. There is an approach to use chromatic values set $C$ to label each vertex in set $V'$ as coloring relation $g$. If $V'=\Delta$, then $\vert C\vert \geq\vert \Delta\vert$
\end{theorem}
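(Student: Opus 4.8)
Proof proposal for Theorem~\ref{cg14}.

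The plan is to read ``$V'=\Delta$'' as the (type-corrected) statement that $V'$ is exactly the union of the components of the MCIVS $\Delta$, and then to show that any coloring $g$ of $V'$ that respects the equivalence-color structure of each component is forced to spend a distinct chromatic value on each of those components. First I would invoke Lemma~\ref{cg9_1}: $\Delta=\{h_1,\dots,h_N\}$ is a partition of $\bar{\epsilon}$, so the $h_i$ are pairwise disjoint, $V'=\bigcup_{i=1}^N h_i$, and $N=\vert\Delta\vert$; moreover by Lemma~\ref{cg7_1} each $\vert h_i\vert\geq 2$. The cases $N\leq 1$ are immediate: if $\Delta=\varnothing$ then $V'=\varnothing$ and $\vert C\vert\geq 0=\vert\Delta\vert$; if $N=1$ then coloring the non-empty set $V'$ needs $\vert C\vert\geq 1=\vert\Delta\vert$. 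So I assume $N\geq 2$.

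Next I would pin down the within-component behaviour. Each $h_i$ is a CIVS, i.e. $h_i\subseteq\delta$, so Theorem~\ref{cg11} applies and gives $g_{\bar{\epsilon}}:h_i\rightarrow\kappa$, meaning the coloring relation restricted to $h_i$ is the equivalence-color relation; hence all vertices of $h_i$ share one common value, which I call $c_i$. This is the step that links the abstract partition $\Delta$ to the actual label set $C$, since it identifies $C$ with (a superset of) $\{c_1,\dots,c_N\}$.

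Then comes the cross-component step: fix $i\neq j$. By the definition of MCIVS we have $h_i\cup h_j\nsubseteq\delta$, so $h_i\cup h_j$ is not a CIVS. Since $h_i\subseteq\delta$ and $h_j\subseteq\delta$ individually and $h_i\cap h_j=\varnothing$ (disjointness from Lemma~\ref{cg9_1}), the only way $h_i\cup h_j$ can fail to be a CIVS is that some pair $\{u,v\}$ with $u\in h_i$, $v\in h_j$ does not lie in $\bar{\epsilon}$. As $u\neq v$, and because on $G_\theta$ we have $\tau\setminus\epsilon=\varnothing$ with $\epsilon\subseteq\tau\subseteq V^2$, the set $V^2\setminus\epsilon$ equals $\bar{\epsilon}$ up to the diagonal; hence $\{u,v\}\notin\bar{\epsilon}$ forces $\{u,v\}\in\epsilon$. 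Now the coloring relation~\eqref{color1} with $f(u,v)=1$ gives $C_u\neq C_v$, i.e. $c_i\neq c_j$. Therefore $c_1,\dots,c_N$ are pairwise distinct, $\{c_1,\dots,c_N\}\subseteq C$, and $\vert C\vert\geq N=\vert\Delta\vert$, as required.

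The hard part will be the justification used in the second paragraph that the coloring $g$ in question really is the equivalence-color relation on each CIVS rather than merely permitted to be so: Theorem~\ref{cg11} delivers this under the minimum-color reading of $g$, and I would state explicitly that the bound is proved for colorings that are constant on each component of $\Delta$, noting (via Theorem~\ref{cg11} and Theorem~\ref{cg12}) that the equivalence-color relation is the only one consistent with an independent CIVS. A secondary technical point is the dichotomy ``every off-diagonal pair is in $\epsilon$ or in $\bar{\epsilon}$'', which I would derive cleanly from the definitions of interval vertices set and of simple graph as indicated above; this is exactly what turns ``$h_i\cup h_j$ is not a CIVS'' into ``there is an actual edge between $h_i$ and $h_j$'', which is the crux of the argument.
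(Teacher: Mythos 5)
Your argument runs in the opposite direction from the paper's: you start from the components of $\Delta$, make each one monochromatic, and force distinct colors across components via the cross-edges guaranteed by $(h_i\cup h_j)\nsubseteq\delta$; the paper instead starts from the color classes of a coloring using ``the least colors $k$'', argues each class of size at least $2$ is a CIVS, and then identifies the class structure with $\Delta$. Your cross-component step is sound and is cleaner than anything in the paper's proof: since $h_i,h_j\subseteq\delta$ and $h_i\cap h_j=\varnothing$, the failure of $h_i\cup h_j$ to be a CIVS can only come from a cross pair, and on $G_{\theta}$ an off-diagonal pair not in $\bar{\epsilon}$ must lie in $\epsilon$, so relation~\eqref{color1} with $f=1$ forces the two colors apart.

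The genuine gap is the one you flag yourself but do not close: nothing in the coloring relation~\eqref{color1} forces a valid coloring to be constant on a CIVS. The second branch of~\eqref{color1} explicitly permits $C_u=C_v\vee C_u\neq C_v$ for interval pairs, and Theorem~\ref{cg11} only asserts that ``there may be'' $g_{\bar{\epsilon}}:V'\rightarrow\kappa$ --- it licenses the equivalence-color option on a CIVS, it does not impose it. So your proof establishes $\vert C\vert\geq\vert\Delta\vert$ only for colorings that happen to be constant on each $h_i$, which is strictly weaker than the stated theorem, and the restriction is not harmless. Take the $6$-cycle $a_1\,b_1\,a_2\,c_1\,b_2\,c_2\,a_1$ and set $h_1=\{a_1,a_2\}$, $h_2=\{b_1,b_2\}$, $h_3=\{c_1,c_2\}$: each $h_i$ is a CIVS, each pairwise union contains an edge, so $\Delta=\{h_1,h_2,h_3\}$ meets the paper's literal definition of MCIVS with $\vert\Delta\vert=3$, yet the $6$-cycle admits a valid $2$-coloring (necessarily non-constant on the $h_i$). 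Closing the gap therefore requires either strengthening the hypothesis on $\Delta$ (genuine minimality over all partitions into CIVSs, so that the color classes of an arbitrary valid coloring cannot be fewer than $\vert\Delta\vert$) or explicitly restricting the class of colorings considered; the paper's own proof quietly does the latter by assuming a minimum coloring and then asserting the general inequality without justification, so it does not close this gap either.
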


\begin{proof}
Let $V'$ be a subset of set $V$ on an instance. There is an approach for labeling those vertices in set $V'$ as coloring relation $g$. We set $C$ is the chromatic values set, moreover let $V'=\Delta$ and $ C =c_1,c_2,\cdots,c_k$ for $i\neq j$ such that $ c_i\neq c_j$. %

We set the least colors is $k$ for labeling those vertices in set $V'$. Then there are $k$ components in set $V'$. For each component $\alpha_i\in V'$, it is certainly for each vertex $ u\in \alpha_i$ such that $C_u=c_i$ for $c_i\in C$. Obviously, if $\vert \alpha_i\vert \geq 2$, then with Theorem\ref{cg11}, $\alpha_i\subseteq\kappa$ having $\alpha_i\subseteq\delta$. Then for $V'=\{\alpha_i\}_{i=1}^k$ such that $V'\subseteq\Delta$. %

Assume that there is $\alpha_i\nsubseteq \delta$. Then there is a pair $u,v\in\alpha_i$ with $u\neq v$. If $\{u,v\}\in \epsilon$, it is obvious a contradiction for this pair having $C_u=C_v$ to the definition of coloring relation $g$. Hence $\vert C\vert =\vert \Delta\vert$. Because of maybe existing the case $u,v\in \bar{\epsilon}$ and $C_u\neq C_v$, therefore $\vert C\vert \geq\vert \Delta\vert$.\\
\end{proof}

\begin{definition}
Let $\Delta$ is a MCIVS on an instance $G_{\theta}=(V,\epsilon)$. We call the complement of $V\setminus\Delta$ \emph{edge association set}, denote by $\overline{V}$.
\end{definition}

\begin{theorem}\label{cg10}
Let $\Delta$ is a MCIVS on an instance $G_{\theta}=(V,\epsilon)$. There is an edge association set $\overline{V}=V\setminus\Delta$. Set existing a relevant cutting graph $\overline{G}=(\overline{V},\epsilon')$(for $\epsilon'\subseteq\epsilon$). If $\overline{V}\neq\varnothing$, then $\overline{G}\in G_{\eta}$.
\end{theorem}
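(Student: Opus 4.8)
The plan is to identify $\overline{V}$ explicitly as the set of \emph{universal} vertices of $G_{\theta}$ --- those vertices $v$ for which $\{u,v\}\in\epsilon$ for every $u\neq v$ --- and then to observe that any two universal vertices are mutually adjacent, so the cutting graph they span is completed regular. Everything reduces to pinning down what $\overline{V}=V\setminus\Delta$ actually is.

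First I would unwind the notation: since $\Delta$ is a collection of CIVS components, $\overline{V}=V\setminus\bigcup_{\delta_i\in\Delta}\delta_i$. By Lemma~\ref{cg9_1}, $\Delta$ is a partition of $\bar{\epsilon}$; reading $\bar{\epsilon}$ as the set of vertices occurring in some interval pair, a vertex $v$ lies in $\bar{\epsilon}$ precisely when there is some $u\neq v$ with $\{u,v\}\notin\epsilon$. The one direction needing care is the following: if $v$ were \emph{not} universal, say $\{v,w\}\notin\epsilon$ for some $w\neq v$, then $\{v,w\}$ is a CIVS (of size $2$, consistent with Lemma~\ref{cg7_1}), so $v\in\bar{\epsilon}$, hence $v$ lies in some component of the partition $\Delta$ and thus $v\notin\overline{V}$. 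Combined with the trivial converse (a universal vertex never appears in an interval pair, so it is excluded from every component of $\Delta$), this gives $\overline{V}=\{\,v\in V:\{u,v\}\in\epsilon\text{ for all }u\in V\setminus\{v\}\,\}$ exactly.

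Next, assuming $\overline{V}\neq\varnothing$, I would split on $\vert\overline{V}\vert$. If $\vert\overline{V}\vert=1$, the cutting graph $\overline{G}=(\overline{V},\epsilon')$ is an isolated vertex, which is a completed regular graph by the explicit convention in the definition of $G_{\eta}$. If $\vert\overline{V}\vert\geq 2$, take an arbitrary pair $u,v\in\overline{V}$ with $u\neq v$. Since $u$ is universal, $\{u,v\}\in\epsilon$; since both endpoints lie in $\overline{V}$ and $\epsilon'\subseteq\epsilon$ is the restriction of $\epsilon$ to $\overline{V}$, we get $\{u,v\}\in\epsilon'$. As this holds for every such pair, $\overline{G}$ satisfies the defining condition of a completed regular graph, so $\overline{G}\in G_{\eta}$. (As a sanity check, if $\Delta=\varnothing$ then $\bar{\epsilon}=\varnothing$ by Lemma~\ref{cg9_1} and Theorem~\ref{cg9}, so $\overline{V}=V$ and the claim is immediate; this case is subsumed by the argument above.)

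The main obstacle is the first step --- establishing that $\overline{V}$ really is the universal-vertex set. This hinges on reading the partition claim of Lemma~\ref{cg9_1} correctly and on the size-$\geq 2$ requirement for a CIVS (Lemma~\ref{cg7_1}), which is precisely what guarantees that every non-universal vertex is captured by some component of $\Delta$ and therefore absent from $\overline{V}$. Once that identification is in hand, completeness of $\overline{G}$ is immediate; a secondary point worth one sentence is that the singleton case is discharged through the special clause ``the isolated vertex is a completed regular graph'' rather than through the pairwise condition.
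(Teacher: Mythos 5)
Your proposal is correct and follows essentially the same route as the paper: both arguments reduce to the fact that $\Delta$ covers all of $\bar{\epsilon}$ (Lemma~\ref{cg9_1}), so any pair of vertices surviving in $\overline{V}=V\setminus\Delta$ cannot form an interval pair and must therefore satisfy $\{u,v\}\in\epsilon$, with the singleton case discharged by the isolated-vertex convention in the definition of $G_{\eta}$. Your explicit identification of $\overline{V}$ with the universal-vertex set, justified via the size-$2$ CIVS observation and Lemma~\ref{cg7_1}, is a somewhat more careful unpacking of the step the paper states tersely as ``because of $\Delta=\bar{\epsilon}$ \ldots the pair $\{u,v\}\in\epsilon$,'' but it is the same underlying argument.
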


\begin{proof}
Let $\Delta$ is a MCIVS on an instance $G_{\theta}=(V,\epsilon)$. There is an edge association set $\overline{V}=V\setminus\Delta$ with a relevant cutting graph $\overline{G}=(\overline{V},\epsilon')$ with $\epsilon'\subseteq\epsilon$.

Consider $\Delta=\varnothing$, then there is $\overline{G}=G_{\theta}$. For each pair $u,v\in V$ such that $u,v\notin\bar{\epsilon}$. We can have $\overline{G} \in G_{\eta}$ with Theorem\ref{cg9}. %

If $\vert \overline{V}\vert = 1$. There is $\overline{G} \in G_{\eta}$ as the definition of completed regular graph. Consequently, there is $\overline{G}\in G_{\eta}$ if $\Delta=\varnothing$ or $\vert \overline{V}\vert = 1$.%

Consider $\Delta\neq\varnothing$ and $\vert \overline{V}\vert >1$. There is $\overline{V}\cap \Delta=\varnothing$ as the given condition of $\overline{V}=V\setminus\Delta$. Thus for each pair $u,v\in\overline{V}$, there is $u,v\notin\Delta$. Because of $\Delta=\bar{\epsilon}$ with Lemma\ref{cg9_1}, we can understand the pair $ \{u,v\}\in\epsilon$. Hence, $\overline{G} \in G_{\eta}$.%

Assume to $\overline{G}\nsubseteq G_{\eta}$. Then there is at least a pair $u,v\in\overline{G}$ such that $u,v\in\bar{\epsilon}$. As described in Lemma\ref{cg9_1}, the set $\Delta$ is the partition of set $\bar{\epsilon}$, then there is $\Delta\cap\overline{V}\neq\varnothing$, a contradiction to $\overline{V}=V\setminus\Delta$.%

If there is a no-empty subset $A\subset \overline{V}'$, then there is $V_h=V\setminus A$. It $V_h\subseteq\Delta$ obviously may not hold. Hence, it $\overline{G}\in G_{\eta}$ is not the necessary condition in this Theorem.\\
\end{proof}

\textbf{Proposition. }Consider $\vert V\vert \geq 3$ and for each unit subgraph $s_i$ such that $L(s_i)\equiv m $. If $2\leq m< n-1$, then there is $0\leq\vert\overline{V}\vert\leq 1$ for $\overline{V}=V\setminus\Delta$.%

For example, instance is a cycle. If the number of $\vert V\vert$ is odd, then $\vert\overline{V}\vert=1$. Otherwise $\vert\overline{V}\vert=0$. Is it true or false, how to prove?\\

\begin{theorem}\label{cg15}
Let $\Delta$ be a MCIVS and $\overline{V}$ be the edge association set. Consider there is an approach to use chromatic values set $C$ to label each vertex on graph $G_{\theta}=(V,\epsilon)$ as coloring relation $g$. If $\vert C\vert=K$, then $K\geq\vert \Delta\vert + \vert\overline{V}\vert$.
\end{theorem}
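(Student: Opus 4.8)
Since the statement is a lower bound on $K$, the plan is to locate inside an arbitrary proper colouring $g$ of $G_{\theta}=(V,\epsilon)$ two disjoint blocks of colours of sizes $\vert\Delta\vert$ and $\vert\overline{V}\vert$. Write $U=\bigcup_{\delta\in\Delta}\delta$; by Lemma~\ref{cg9_1} we have $U=\bar{\epsilon}$, and by definition $\overline{V}=V\setminus U$, so $V=U\sqcup\overline{V}$ is a disjoint union of vertex sets. Let $C_1\subseteq C$ be the set of colours that $g$ assigns to vertices of $U$, and $C_2\subseteq C$ the set of colours it assigns to vertices of $\overline{V}$. It then suffices to establish three facts: (i) $\vert C_1\vert\ge\vert\Delta\vert$; (ii) $\vert C_2\vert=\vert\overline{V}\vert$; and (iii) $C_1\cap C_2=\varnothing$. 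Granting these, $K=\vert C\vert\ge\vert C_1\cup C_2\vert=\vert C_1\vert+\vert C_2\vert\ge\vert\Delta\vert+\vert\overline{V}\vert$, which is the claim.

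For (i), the restriction of $g$ to $U$ is still a colouring obeying the coloring relation~\eqref{color1}, so (i) is precisely Theorem~\ref{cg14} applied with $V'=U=\Delta$: colouring $U$ alone already forces at least $\vert\Delta\vert$ colours, because no two components $\delta_i,\delta_j$ of the MCIVS may be merged, which forces an edge between them and hence a fresh colour on each of the $\vert\Delta\vert$ parts. For (ii), the edge association set induces a completed regular graph $\overline{G}=(\overline{V},\epsilon')\in G_{\eta}$ (Theorem~\ref{cg10}); combining this with Theorem~\ref{cg12}, the restriction of $g$ to $\overline{V}$ realises an inequivalence-colour relation, i.e.\ the $\vert\overline{V}\vert$ vertices of $\overline{V}$ carry pairwise distinct colours.

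The crux is (iii), and this is the step I expect to be the real obstacle, since it is where one must unwind the definitions of $\overline{V}$, $\bar{\epsilon}$ and $\epsilon$ rather than quote an earlier lemma. I would argue that every $w\in\overline{V}=V\setminus\bar{\epsilon}$ is a universal vertex: since $w$ lies in no interval pair, $\{w,v\}\notin(V^2\setminus\epsilon)$ for each $v\neq w$, hence $\{w,v\}\in\epsilon$, so $w$ is adjacent to every other vertex of $G_{\theta}$. Therefore for any $u\in U$ and any $w\in\overline{V}$ we have $f(u,w)=1$, and the coloring relation~\eqref{color1} yields $C_u\neq C_w$; thus no colour of $C_2$ occurs on $U$, giving $C_1\cap C_2=\varnothing$. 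With (i)--(iii) assembled the inequality is immediate, and the degenerate cases fall in line with the same chain: if $\Delta=\varnothing$ then $G_{\theta}\in G_{\eta}$ by Theorem~\ref{cg9} and $\overline{V}=V$ is a clique needing $n$ colours, while if $\overline{V}=\varnothing$ the statement reduces to Theorem~\ref{cg14}.
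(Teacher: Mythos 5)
Your proposal follows the same basic route as the paper: decompose $V$ into $\Delta$ (as a union of its components) and $\overline{V}=V\setminus\Delta$, invoke Theorem~\ref{cg14} for the $\vert\Delta\vert$ colours, and invoke Theorems~\ref{cg10} and~\ref{cg12} to see that $\overline{V}$ induces a completed regular graph whose vertices need pairwise distinct colours. Where you genuinely add something is your step (iii): the paper simply asserts that ``the number of colors to label each vertex in set $V$ is the sum of $\vert\Delta\vert+\vert\overline{V}\vert$'' without explaining why the colours used on $\Delta$ and on $\overline{V}$ cannot coincide, which is exactly the point at which the two lower bounds could otherwise fail to add. Your observation that every $w\in\overline{V}$ is a universal vertex --- since $\Delta$ partitions $\bar{\epsilon}$ by Lemma~\ref{cg9_1}, any vertex outside $\Delta$ lies in no interval pair, hence is adjacent to all of $V\setminus\{w\}$ and in particular to every vertex of $\bigcup_{\delta\in\Delta}\delta$ --- supplies precisely the missing disjointness $C_1\cap C_2=\varnothing$, and I would regard it as a correction rather than a mere variant. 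The paper spends its remaining effort on a converse-style argument (assuming a colouring with $m<\vert\Delta\vert+\vert\overline{V}\vert$ colours and sorting its colour classes into CIVSs and singletons), which your version does not need for the stated inequality; your treatment of the degenerate cases $\Delta=\varnothing$ and $\overline{V}=\varnothing$ covers the same ground more economically.
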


\begin{proof}
Given a no-empty $G_{\theta}=(V,~\epsilon)$. Let $\Delta$ be a MCIVS on graph $G_{\theta}$ and $\overline{V}=V\setminus\Delta$. There is an approach to use chromatic values set $C$ to label each vertex in set $V$ as coloring relation $g$. Moreover let $\vert C\vert=K$. As the definition of edge association set, we have $V=\Delta\cup\overline{V}$. As Theorem\ref{cg14}, there is $\Delta\subseteq\kappa$. As Theorem\ref{cg12} and Theorem\ref{cg10}, there is $\overline{V}\subseteq\bar{\kappa}$. Hence the number of colors to label each vertex in set $V$ is the sum of $\vert \Delta\vert + \vert \overline{V}\vert$, i.e. $K\geq\vert \Delta\vert + \vert\overline{V}\vert$.%

We assume there is a minimum chromatic set $C'$, in which there are $m$ colors with $m<\vert \Delta\vert + \vert\overline{V}\vert$, satisfy labeling each vertex in set $V$ as coloring relation $g$. Then there is $m$ components in set $V$. For each component $\alpha_j\in V$ with each vertex $u\in \alpha_j$ such that $C_u=c_j $ and $ c_j\in C'$. When $\vert \alpha_j\vert>1$, then there is $\alpha_j\subseteq\delta$ as Theorem\ref{cg14}. We have $V\subseteq \Delta$ such that $V\setminus\Delta=\varnothing$, having $\overline{V}=\varnothing$ and $m=\vert \Delta\vert$. Namely, the inequality above holds similarly. Otherwise, when $\vert \alpha_j\vert=1$, if there exists a pair $u,v\in V$ such that $u,v\in\bar{\epsilon}$, it is a contradiction to assumption of $m$ being minimum color value. Hence, $\Delta=\varnothing$ and $m=\vert\overline{V}\vert$, the inequality above holds too.\\%

Consider there are two classes $A_1, A_2\in V$. For each components $\alpha_i\in A_1$ such that there is $\vert\alpha_i\vert=1$. Otherwise, $\vert\alpha_j\vert>1$ has $\alpha_j\in A_2$. First we understand class $A_2\subseteq\Delta$ as Theorem\ref{cg11}. For class $A_1$, we have proved the case $A_1\in \overline{V}$ above by $m$ being minimum color value. Hence, $m=\vert \Delta\vert + \vert\overline{V}\vert$, i.e. inequality $K\geq\vert \Delta\vert + \vert\overline{V}\vert$ holds.\\

\end{proof}

\begin{descussion}
Finally, Theorem\ref{cg15} shows that looking for the minimum chromatic value indeed is to partition each vertex in set $V$ to two relation classes, equivalence-color and inequivalence-color. Then the algorithms can be two types. One is looking for the MCIVS on graph $G$. Other is utilizing the edge relation subgraph to record the info. of the inequivalence-color relation for coloring each root.\\
\end{descussion}

\subsection{Coloring Algorithm and K Value}
Author gives two approximation algorithms, they respectively are \emph{Based On Graph Partition Coloring}( abbr. BOGPC ) and\emph{ Based On Edge Relation Coloring}( abbr. BOERC ), which both choose those vertices by random.%

To BOGPC, we understand these interval vertices nearest to the start-nodes are in the region $\sigma_3$ with Lemma\ref{gp3}. Then the approach can iteratively use the method of graph partition to enumerate those vertices and merge them with start-nodes, until can not partition out the region $\sigma_3$. Hence, we define the iterated graph partition function in following, denote by $\phi$\\

\begin{align*}
&\phi(X,G)\coloneq\left\{
\begin{array}{ll}
X\leftarrow\sigma_3,&\text{if  }\exists\sigma_3\in\Sigma \text{ and } \Sigma=V .\\
~&~\\
\text{undefine,}&\text{otherwise}.\\
\end{array}\right.\\
&~\\
&\phi^{m+1}=\phi\cdot\phi^{m}.
\end{align*}
~\\

Function will return a set $X$ as a CIVS. Then the BOGPC can iteratively cut the graph with set $X$ until the remainder vertices set belongs to edge association set $\overline{V}$ or empty relation subgraph set $\Lambda_e$. The pseudocode of algorithm is given as follow\\

\renewcommand\arraystretch{1.0}
\begin{longtable}{ p{120mm}}
\toprule  
\textbf{Algorithm 5: BOGPG}\\
\toprule   
\textbf{input} graph $G=s_1,s_2, \cdots, s_n; $\\
\qquad\quad set $H\leftarrow v$\\
\qquad\quad set $V \coloneq V\setminus H$\\
\qquad\quad set $R$\\
\textbf{output}  $R$\\
\textbf{00}\quad         \textbf{While}$(V\neq\varnothing)$\\
\textbf{01}\qquad            $S=H;~ H\coloneq\phi (H,G);$ \\
\textbf{02}\qquad 		$R(k)\leftarrow~ H$;\\       
\textbf{03}\qquad        \textbf{If}$(S\setminus H=\varnothing)$ \textbf{Than break}; \\ 
\textbf{04}\qquad 		$V\coloneq  V\setminus H$;\\
\textbf{05}\qquad 		$G\coloneq G\setminus (s_i\cup\beta_i )~\Longleftarrow \forall v_i\in H;$\\     
\textbf{06}\qquad         \textbf{If}$(V=\varnothing )$ \textbf{Than break};\\
\textbf{07}\qquad        $H\leftarrow (v\cup u)\Longleftarrow v\in\Lambda_e \wedge u=\textbf{random}(V);$\\
\textbf{08}\quad  \textbf{Output }$R=c_1, c_2,\cdots, c_k$(for $c_i=\{v_1,v_2,\cdots\}$)\\
\bottomrule
\end{longtable}

~\newline
\textbf{Complexity and Algorithms. }The core problem is the count of invoking graph partition function by program. Let $m$ be the cardinality of leaf set of each unit subgraph. As Theorem\ref{cg10}, in the worst-case the program can invoke function for $m+1$ times. For per-time, in worst-case pre-invoking equals to partition whole graph. And in worst-case, we let the $m$ is the maximum value in whole unit subgraph. Hence, the runtime complexity is $O(mn^3)$. The runtime complexity of cutting graph is $O(\tau^2)$(for $\tau=mn$).\\%

But you must understand that there exist possible edge relations on $V$ direction, which leads to the region $\sigma_3$ may not belong to a CIVS. Then it produces a problem how to introduce each vertex. Author uses the method of random selection, which is to randomly pick up a vertex $v_i$. If subgraph $s_i$ satisfied the condition of $ L(s_i)\cap X=\varnothing$, then root $v_i$ is valid. Hence this can affect the precision of looking for the minimum chromatic value $K$. Consequently, there is a problem that how much probabilistic can this approach find out the minimum chromatic value $K$. This problem will be discussed with BOERC in following. Now we need prove the precision of approach BOGPC is less than and equal to $m+1$ in following Lemma.\\%

\begin{lemma}
Let $m$ be the cardinality of a leaf set in unit subgraph. On an instance $G_{\theta}=(V,\epsilon)$, if coloring each vertex in set $V$ by approach BOGPC, then the chromatic value $ K \leq m + 1$.
\end{lemma}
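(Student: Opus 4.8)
The plan is to reduce the statement to the classical greedy-colouring bound: in a graph every vertex of which has exactly $m$ neighbours, a vertex can be pushed into a freshly created colour class only when it already meets, on one of its $m$ neighbours, every class created so far, so at most $m+1$ classes ever arise. I would therefore argue that BOGPC, despite being phrased through graph partition, behaves like a structured greedy procedure and never exceeds this budget.

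First I would isolate the bookkeeping: each pass of the \textbf{While} loop in Algorithm~5 appends exactly one entry $R(k)$, so $K$ is the number of iterations and it suffices to bound that number by $m+1$. Next I would analyse how an iteration with $k\ge 2$ begins. Line~07 seeds the new class with a vertex $v$ taken from the empty-subgraph set $\Lambda_e$ of the \emph{current} (already cut) graph. Using the definitions of unit subgraph, of the cut $G\coloneq G\setminus(s_i\cup\beta_i)$ performed in line~05, and of the empty-subgraph set, I would show that $v\in\Lambda_e$ holds precisely when every edge originally incident to $v$ has been removed by the cuts of iterations $1,\dots,k-1$; equivalently, all $m$ neighbours of $v$ in $G_\theta$ already lie in $R(1)\cup\cdots\cup R(k-1)$.

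The crux, and the step I expect to be the main obstacle, is to show that a new colour is opened only when genuinely forced, i.e. that the seed $v$ of class $k$ shares an edge of $\epsilon$ with at least one vertex of every earlier class $R(j)$, $j<k$. This is where the iterated graph-partition operator $\phi$ does the work: when class $R(j)$ stops growing, the partition of the cut graph rooted at $H=R(j)$ has $\sigma_3=\varnothing$, so by Lemma~\ref{gp3} every still-uncoloured vertex that is interval to $R(j)$ (distance two from it) has already been merged into $R(j)$; hence a vertex that is still uncoloured and lies in $\Lambda_e$ after iteration $j$ cannot be interval to $R(j)$ and must share an edge with it. Iterating over $j=1,\dots,k-1$ yields the claim. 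Here I would have to be careful about the two defects the paper itself names as the reason BOGPC is only approximate: a $\sigma_3$ may carry internal edges, and a $V$-direction edge can prevent $\sigma_3$ from being a genuine CIVS. I would absorb these into the random-selection device $u=\mathbf{random}(V)$ of line~07 and argue that it only refines an existing class or opens at most one overflow class, so the "conflicts with every earlier class" property of each new seed is preserved.

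Granting the crux, the finish is a one-line count: the $m$ neighbours of the seed of the last class $R(K)$ meet all of the pairwise disjoint classes $R(1),\dots,R(K-1)$, whence $K-1\le m$ and $K\le m+1$. As an independent check I would invoke the earlier results that, for $2\le m<n-1$, the MCIVS satisfies $|\Delta|\le m$ and the edge-association set satisfies $0\le|\overline V|\le 1$, together with Theorem~\ref{cg15}, to see that $m+1$ is exactly the worst value BOGPC can realise; the boundary cases $m=n-1$ (so $G_\theta=G_\eta$, needing $n=m+1$ colours) and $m\le 1$ are immediate.
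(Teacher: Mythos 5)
Your proposal follows essentially the same route as the paper's proof: both are the degree-based greedy bound, arguing that a vertex whose leaf set has cardinality $m$ can be blocked by at most $m$ colour classes, so the $(m+1)$th always suffices. The paper phrases this as ``for a unit subgraph, there are at most $m$ times to cut its leaves from leaf set, hence the root should be introduced to the $(m+1)$th set''; your version, counted from the seed of the last class ($K-1$ pairwise disjoint earlier classes each meeting one of its $m$ neighbours, so $K-1\leq m$), is the dual of the same statement. The step you single out as the crux --- that every completed class $R(j)$ genuinely contains a neighbour of any vertex later forced to open a new class --- is precisely the step the paper asserts without argument, so you have correctly located where the real content of the lemma lies. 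However, your proposed resolution does not close it. The termination condition $\sigma_3=\varnothing$ only says that no uncoloured vertex is at distance exactly two from the finished class $H$ in the \emph{cut} graph; a vertex that is interval to all of $R(j)$ but sits at odd distance from it, or in a component of the cut graph not reached from the seed, is never offered to $\phi$, is not merged, and leaves $R(j)$ as a class containing none of its neighbours --- exactly the situation in which the pigeonhole count $K-1\leq m$ fails. Your remark that the random device $u=\mathbf{random}(V)$ of line 07 ``only refines an existing class or opens at most one overflow class'' is an assertion, not an argument, and it is doing all the work at the one point where work is needed. To be fair, the paper's own proof has the identical gap (it never justifies why an iteration that colours no neighbour of $v_i$ cannot occur before $v_i$ is coloured), so your write-up is no less complete than the original; but as a standalone proof the crux remains open.
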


\begin{proof}
Given a no-empty $G_{\theta}=(V,\epsilon)$. Let $m$ be the cardinality of a leaf set in unit subgraph. There is an approach BOGPC for coloring each vertex in set $V$. If $G_{\theta}\in G_{\eta}$, there is $m=n-1$ to each leaf set. Because for each pair in set $\epsilon$, therefore either graph or cutting graph is always partitioned to only two regions by function $\phi$ with definition of graph partition. Then function always returns a vertex in set $H$. Hence, The set $V$ should be partitioned for $n$ times, the chromatic value $K=m+1$.\\%

Consider $G_{\theta}\notin G_{\eta}$ and $m$ is a constant with $2\leq m<n-1$. Because the approach iteratively cuts the CIVSs $\delta$ from graph, therefore for a unit subgraph, there are at most $m$ times to cut its leaves from leaf set. Hence, the root should be introduced to the $(m+1)$th set. If the potential of that $(m+1)$th set greater than 1, then it is a CIVS with Lemma\ref{cg8}. When the potential is 1, then it is a completed regular graph. However, the chromatic value $K\leq m+1$. %

When the $m$ maybe difference to each other, in worst-case we can set $m$ equal to the maximum value among those leaf sets on instance. Namely, we understand in the worst-case, the precision of BOGPC is $m+1$\\
\end{proof}

~\newline
\textbf{BOERC} is an approach with completely random. First it constructs a roots set $A_q$, afterwards setup each edge relation subgraph by removing the relevant leaves from unit subgraph. Then the unit subgraphs set is converted to an OPERS $\Lambda_q$. While labeling the root, the leaves record the chromatic value of its root. When the leaf is to be labeled as a root, its color would be decided by the record made for it before.\\

~\newline
\renewcommand\arraystretch{1.0}
\begin{longtable}{ p{120mm}}
\toprule  
\textbf{Algorithm 6: BOERC}\\
\toprule  
\textbf{input} graph $G= s_1, s_2,\cdots,s_n;$\\
\qquad\quad   set $A=(x_1, x_2, \cdots, x_n); $\\
\qquad\quad	set $\Omega=(r_1, r_2,\cdots,r_n): r_i=(x_i,~\omega_i=array());$\\
\qquad\quad  Color set $C=(1,~2);$\\
\quad\quad\quad set $R$\\
\textbf{output}  $R$\\
\textbf{00}\quad    \textbf{For }$~1\rightarrow n$ \textbf{ do}\\
\textbf{01}\qquad                 $ L(s_j)\setminus \{x_i\} \Longleftarrow \exists i<j \wedge x_j\in L(\lambda_i) ;$\\
\textbf{02}\quad    \textbf{For }$~1\rightarrow n$ \textbf{ do}\\      
\textbf{03}\qquad      \textbf{If}$(~\Omega(\omega_i )= \varnothing~)$  \textbf{Than }$k=\textbf{random}(C);$\\    
\textbf{04}\qquad 	\textbf{Else If }$(C\setminus \Omega(\omega_i )=\varnothing)$\\     
\textbf{05}\qquad \quad         \textbf{Than }$k=\vert C\vert+1;~C\leftarrow \vert C\vert+1;$ \\
\textbf{06}\qquad\quad   \textbf{Else} $k= \textbf{random}(C\setminus  \Omega(\omega_i )) ;$\\
\textbf{07}\qquad          $R(k)\leftarrow x_i;~ \Omega(\omega_j)\leftarrow k\Longleftarrow \forall x_j\in L(\lambda_i)$\\
\textbf{08}\quad     \textbf{Output }$R=c_1, c_2,\cdots, c_k$(for $c_i=(v_1,v_2,\cdots,v_t$));\\
\bottomrule
\end{longtable}
~\newline
\textbf{Complexity and Algorithm. }In the first loop, the approach converts each unit subgraph to the edge subgraph along the sequence $A_q$. Then it need modify each leaf set, such that the runtime complexity is $O(\vert\tau\vert^2)$. Here, the table of edge subgraph partition likes a unit subgraph for characterizing a directed graph, and there exists at least one empty subgraph like an isolated vertex. But it is truth that this data structure supports the speed-up algorithm. The reason is that the inequivalence-color only need one time to be recorded between the root and leaf. Then the record set increases and the roots and leaves for labeling reduces in the process of coloring. Hence, the runtime complexity of second loop is $O(m^2n^2)$ similarly.%

There is a problem that how is the precision of BOERC. The answer similarly is less than and equal to $m+1$, which would be proved by the following Lemma.

\begin{lemma}\label{cg16}
Let $m$ be the cardinality of a leaf set in unit subgraph. If coloring each vertex in set $V$ on instance $G_{\theta}=(V,~\epsilon)$ by approach BOERC, then there is chromatic value $ K \leq m + 1$.
\end{lemma}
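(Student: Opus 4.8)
The plan is to mirror the proof of the preceding Lemma for BOGPC, but to run the argument on the recording set $\Omega$ rather than on iterated graph partition. First I would observe that the first loop of BOERC turns the collection of unit subgraphs into an OPERS $\Lambda_q$ whose ordered roots set is $A=(x_1,\dots,x_n)$: for $i<j$ with $\{x_i,x_j\}\in\epsilon$ the vertex $x_i$ is deleted from $L(s_j)$, so the pair $x_i\epsilon x_j$ survives only in $\lambda_i$, exactly as in Lemma \ref{cg3}; by Theorem \ref{cg4} the resulting $\Lambda_q$ is a partition of $\epsilon$, so each edge is recorded in precisely one leaf set and, after the loop, $L(\lambda_i)$ is exactly the set of neighbours $x_k$ of $x_i$ with $k>i$.

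Next I would set $m$ to be the largest cardinality among the leaf sets $L(s_1),\dots,L(s_n)$ (the worst case, just as in the BOGPC lemma), so that every vertex has at most $m$ neighbours in $\epsilon$. The heart of the argument is an invariant for the second loop: at the moment the iteration reaches index $i$, the set $\Omega(\omega_i)$ equals the set of colours already assigned to those neighbours $x_k$ of $x_i$ with $k<i$. This holds because line 07 pushes the colour of a just-coloured root $x_k$ into $\Omega(\omega_j)$ for every $x_j\in L(\lambda_k)$, and $L(\lambda_k)$ consists precisely of the later neighbours of $x_k$; hence no colour is recorded twice and none is recorded for a non-neighbour. Consequently $\vert\Omega(\omega_i)\vert$ is at most the degree of $x_i$, hence at most $m$.

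Then I would close with the greedy counting on the colour set $C$. Initially $\vert C\vert=2$; $C$ is enlarged (lines 04--05) only when $C\setminus\Omega(\omega_i)=\varnothing$, i.e. $C\subseteq\Omega(\omega_i)$, which forces $\vert C\vert\le\vert\Omega(\omega_i)\vert\le m$ at that instant, so the newly opened colour index $\vert C\vert+1$ is at most $m+1$. Therefore $\vert C\vert$ never exceeds $m+1$, and since $R$ uses exactly the colours of $C$, the chromatic value produced satisfies $K\le m+1$. I would finish by checking the extremal case in the style of the BOGPC lemma: if $G_{\theta}\in G_{\eta}$ then every leaf set has size $n-1$, every pair conflicts, $\Omega(\omega_i)$ eventually has size $i-1$, the greedy step is forced to open a fresh colour at each index, and $K=n=m+1$; for $2\le m<n-1$ one gets the stated $K\le m+1$.

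The main obstacle I expect is making the $\Omega$-invariant fully precise: one must verify that after the first loop each $L(\lambda_i)$ is exactly the set of strictly-later neighbours of $x_i$ (so the edge bookkeeping of Lemma \ref{cg3} is respected by the code), and then that in the second loop $\Omega(\omega_i)$ never acquires a spurious or duplicated entry, since only that guarantees the tight bound $\vert\Omega(\omega_i)\vert\le m$. Once this is in place the remainder is the routine greedy-colouring count already rehearsed for BOGPC.
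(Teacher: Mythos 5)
Your proposal is correct and follows essentially the same route as the paper's proof: both arguments reduce to the observation that each root $x_i$ can accumulate at most $m$ recorded chromatic values from its earlier neighbours (since $\Lambda_q$ partitions $\epsilon$ and every leaf set has cardinality at most $m$), so the greedy step never needs to open a colour beyond the $(m+1)$th, with the complete regular graph giving the extremal case $K=n=m+1$. Your version is somewhat tighter than the paper's --- you state the $\Omega$-invariant explicitly and justify the growth condition $C\subseteq\Omega(\omega_i)\Rightarrow\vert C\vert\leq m$, where the paper argues the same bound by a looser induction along the sequence $A_q$ --- but the underlying idea is identical.
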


\begin{proof}
Given a no-empty $G_{\theta}=(V,~\epsilon)$. Let $m$ be the cardinality of a leaf set in unit subgraph. There is an approach BOERC for coloring each vertex in set $V$. If $G_{\theta}\in G_{\eta}$, there is $m=n-1$ for each leaf set. Then a vertex at most records $n-1$ chromatic values, such that there is the value $K=n=m+1$ if $G_{\theta}\in G_{\eta}$.%

Consider instance $G_{\theta}\notin G_{\eta}$. Let $m$ be a constant and $\Lambda_q$ be an OPERS on instance. Moreover Let $A_q$ be the roots set of set $\Lambda_q$. There is a number $i$ with $i\geq m$ such that $x_i\in A_q$. If $x_i\in\Lambda_e$, there are at most $m$ records of chromatic values for $x_i$. Then $x_i$ would be labeled with the $(m+1)th$ color. %

If there is a sub-sequence $\alpha\in A_q$ and $\alpha=x_1,x_2,\cdots,x_k$ and $k>i$. we assume that all vertices on sequence $\alpha$ are labeled with $m+1$ colors. When there is a vertex $u$ at the $(k+1)th$ position on sequence $A_q$. If $u\in\Lambda_e$, similarly it has at most $m$ records. Then vertex $u$ would be labeled with the $(m+1)th$ color.%

Summarizing above, we understand the precision of BOERC is $m+1$ in the worst-case. Consider there is the case all cardinalities of leaf sets are difference. In worst-case, we can set the $m$ equals to the maximum cardinality among those leaf sets.\\
\end{proof}

\subsection{Graph Coloring Exp.}
The first object is dodecahedron with $n=20$ and $m=3$. The experimental method is separately implementing the BOGPC and BOERC for 1\,000 times to compare the probability of exploring the minimum chromatic value $K$ and the runtime in practices. These approaches both run on the Apache2.0 server and written by PHP5.0. The reason is the speed of implement is lower than $C++$, so that it is easy to compare the runtime. The results is in following\\

\renewcommand\arraystretch{1.0}
\begin{longtable}{ p{15mm}|| p{20mm}| p{20mm}| p{20mm}| p{24mm}}
\caption{\textbf{EXP. Comparison of BOGPC and BOERC }}\\
\toprule
~&\textbf{K=4 }(1)&\textbf{K=3} (2)&(2)/1000&\textbf{R.T.}\\
\toprule
BOGPC&742&258&0.258&1 s\\ 
\midrule 
BOERC&674&326&0.326&305 ms\\ 
\bottomrule
\end{longtable}
~\\

Whichever options of runtime or probabilistic of exploring the minimum $K$, BOERC is winner. Further we test an instance underlaying two conditions, which numbers of $n,m$ may not change, such that the quantity of edges does not change along with shape changing. The shape of instance can be changed as follow Figure 3.\\
\begin{center}
\includegraphics[height=30mm]{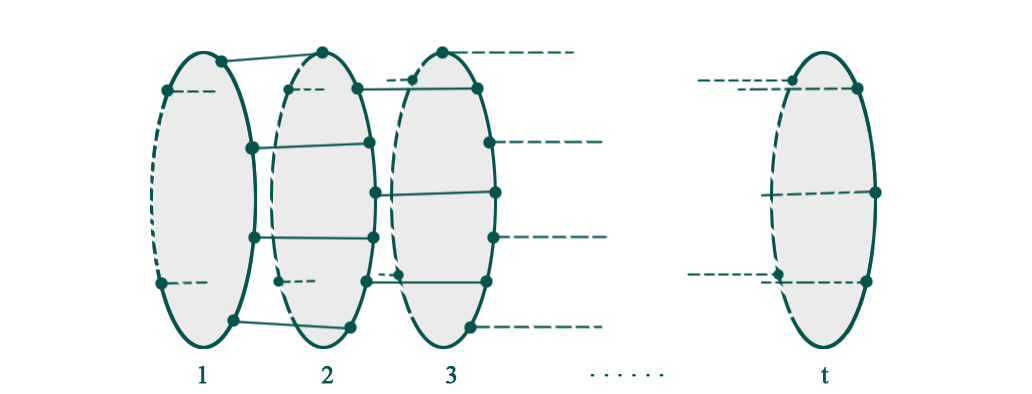}\\
\textbf{Figure 3}
\end{center}

The Figure 3 illustrates the shape: set $s$ vertices on the first and end cycle. There is $2s$ vertices on a medium cycle, such that the $n=2(t-1)s$ and $m=3$. Because of $t=1+ n/2s$, therefore changing the shape can be due to changing the $s$ value. Now we let $n=120$ and $t=3,~4,~5,~6,~7$, thus responsively there is $s=30,~20,~15,~12,~10$.%

Similarly implement two approaches for 1\,000 times to compare the possibilities of exploring chromatic value $K$. We use two Tables to record these results as follow

\renewcommand\arraystretch{1.2}
\begin{longtable}{ p{15mm}|| p{20mm}| p{20mm}| p{20mm}| p{24mm}}
\caption{\textbf{EXP. Testing BOGPC}}\\
\toprule
~&\textbf{s=}&\textbf{K=4} (1)&\textbf{K=3 }(2)&\textbf{R.T.}\\
\toprule
t=3&30&989&11&14 s\\ 
\midrule 
t=4&20&567&433&12 s\\ 
\midrule 
t=5&15&764&236&11.7 s\\ 
\midrule 
t=6&12&735&265&11.4 s\\ 
\midrule 
t=7&10&739&261&11.6 s\\ 
\bottomrule
\end{longtable}

\renewcommand\arraystretch{1.0}
\begin{longtable}{ p{15mm}|| p{20mm}| p{20mm}| p{20mm}| p{24mm}}
\caption{\textbf{EXP. Testing BOERC}}\\
\toprule
~&\textbf{s=}&\textbf{K=4} (1)&\textbf{K=3 }(2)&\textbf{R.T.}\\
\toprule
t=3&30&998&2&2 s\\ 
\midrule 
t=4&20&987&13&2 s\\ 
\midrule 
t=5&15&981&19&2 s\\ 
\midrule 
t=6&12&962&38&2 s\\ 
\midrule 
t=7&10&966&34&2 s\\ 
\bottomrule
\end{longtable}

However the complexity is depended on $mn$, which is the number of $\vert\tau\vert$. And those things deal by both approaches are relation of edges, so that we understand the shape is not major factor to affect the runtimes. The method of BOERC is labeling vertices along the sequence of roots set, such that we get a same value of runtime. But for BOGPC, the shape changing may affect the number of regions in graph partition set $\Sigma$, then there is a little differenc on values of runtime. Consider method of BOERC is constructed by completely random sequence, such that the random scope is larger than ones of  BOGPC. Hence the probabilistic of getting minimum $K$ reduces quickly with the data increasing. We can image that the probabilistic would toward to 0 with a large number. But you can deny the fact that it faithfully give you a precision of $m+1$. %

And the BOGPC is still stable than BOERC, which random scope at most is $n/m$ less than BOERC, but it may be affected by the shape of instance. How to choose two approach is decided by you and your research. However two approach always guaranteed the coloring is correct to satisfy the demand of coloring relation, and their precision are less than and equal to $m+1$.%

There is a remarkable point for grid and completed regular graph. In fact, the instance of $\vert\Delta\vert=2$ in Theorem\ref{cg10} is grid or tree. Yon can understand how to get the $K$ value on grid figure by exploiting method of graph partition. And the completed regular graph, you can obtain its type by checking the data structure.%

~\newline
\textbf{Exact} algorithm for the problem of graph coloring. Certainly, we can enumerate all possibilities in approach of BOGPC like backtrace all possible paths in BOTS. Author tested the object of dodecahedron for this answer. The conclusion is there are plenty of repeated data. As Lemma\ref{cg9_1}, which shows the MCIVS is not an equivalent class in set $\bar{\epsilon}$. Then author give the such theorem as follow

\begin{theorem}\label{cg17}
Let $\Delta$ be a MCIVS on an instance $G_{\theta}=(V,\epsilon)$. Consider set $\Delta\neq\varnothing$ and is a sequence of components like $\Delta =(\delta_1,\delta_2,\cdots,\delta_N)$. Let $i_k=\vert\delta_k\vert$ to represent the cardinality of each component on sequence $\Delta$. Moreover, let $\Lambda_q$ be an OPERS on instance and $A_q$ be roots set of set $\Lambda_q$. If $\Delta$ is on sequence $ A_q$, then there is a number of permutation $p=\prod_{k=1}^N i_k!$ on set $A_q$, such that there is $A_q(p)\leadsto\Lambda_q(p)$. For each subgraph $\lambda_i\in\Lambda_q$, such that there is $\lambda_i(p)=\lambda_i$.
\end{theorem}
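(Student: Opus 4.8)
The plan is to make explicit which permutations of the ordered roots set $A_q$ are being counted, and then to show that each of them fixes every edge relation subgraph $\lambda_i$ because the only combinatorial datum defining $\lambda_i$ — the order in which the root $v_i$ meets each of its neighbours along $A_q$ — is untouched. First I would fix the reading of ``$\Delta$ is on sequence $A_q$'': each component $\delta_k$ occupies a block of consecutive positions of $A_q$, while the vertices of the edge association set $\overline{V}=V\setminus\Delta$ may be interleaved among these blocks in any way. The permutations $p$ in question are exactly those that independently reorder the $i_k$ vertices inside each block $\delta_k$ and leave every position outside $\bigcup_k\delta_k$ fixed; there are $\prod_{k=1}^N i_k!$ of them, which is the asserted count. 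Since every arrangement of $V$ is a legitimate ordered roots set, Theorem~\ref{cg4} gives that $A_q(p)$ induces an OPERS $\Lambda_q(p)$ which is again a partition of $\epsilon$, so $A_q(p)\leadsto\Lambda_q(p)$ is immediate; note also that $p$ carries the partition $\Delta$ onto itself, since it merely permutes vertices within each $\delta_k$.

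Next I would record the structural description of a subgraph in an OPERS that follows from Lemma~\ref{cg3}: for a vertex $v_i$, the component $\lambda_i$ rooted at $v_i$ is precisely the ``forward star'' $\{\,\{v_i,v_j\}\in\epsilon : v_i \text{ precedes } v_j \text{ in } A_q\,\}$, because an edge $\{v_i,v_j\}$ is placed in the subgraph rooted at its earlier endpoint. Consequently $\lambda_i(p)=\lambda_i$ will follow as soon as I show that for every neighbour $v_j$ of $v_i$ the relative order of $v_i$ and $v_j$ is the same along $A_q$ and along $A_q(p)$. I would split this into two cases. If $v_i\notin\bigcup_k\delta_k$, then $v_i$ keeps its position under $p$, and each neighbour $v_j$ stays on the same side of $v_i$: if $v_j$ also lies outside the blocks its position is fixed, and if $v_j$ lies inside some block $\delta_{k'}$ (necessarily $v_i\notin\delta_{k'}$), that block sits entirely before or entirely after $v_i$ and keeps that relation under $p$. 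If $v_i\in\delta_k$, the key point is that $\delta_k$ is a CIVS, so $v_i$ is non-adjacent to every other vertex of $\delta_k$; hence every neighbour $v_j$ of $v_i$ lies outside the block $\delta_k$, so it lies entirely before or entirely after that block, and since $p$ keeps $v_i$ inside $\delta_k$ and keeps the block $\delta_k$ in place, the relative order of $v_i$ and $v_j$ is preserved. In both cases the forward star of $v_i$ is unchanged, whence $\lambda_i(p)=\lambda_i$ (this also covers empty subgraphs $v_i\in\Lambda_e$, where both sides are $\varnothing$).

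I expect the main obstacle to be expository rather than mathematical: pinning down the intended meaning of ``$\Delta$ is on sequence $A_q$'' and of the notation $A_q(p)\leadsto\Lambda_q(p)$, and being careful that the subgraphs are indexed by their root vertex and not by their position in the sequence — the positional labels genuinely do change under $p$, while the vertex-indexed subgraphs do not. The real content, namely that no edge joins two vertices of the same $\delta_k$ so that a block-internal reshuffle cannot reassign any edge from one subgraph to another, is a one-line consequence of the definition of CIVS (and of Lemma~\ref{cg7_1}/Lemma~\ref{cg9_1}), so once the bookkeeping is settled the argument is short. If desired, one could also note as a remark that the $\prod_{k=1}^N i_k!$ permutations described are in fact exactly the permutations of $A_q$ fixing all $\lambda_i$, but that sharper statement is not needed here.
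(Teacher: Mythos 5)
Your proposal is correct and follows essentially the same route as the paper's own proof: both arguments reduce to the observation that a CIVS component contains no edges, so reshuffling a block $\delta_k$ internally cannot change the relative order of any adjacent pair along $A_q$ and hence (via Lemma~\ref{cg3}/Theorem~\ref{cg7}) leaves every $\lambda_i$ fixed, giving the count $\prod_{k=1}^N i_k!$. Your write-up is merely more explicit about the block reading of ``$\Delta$ is on sequence $A_q$'', the forward-star description of $\lambda_i$, and the case $v_i\notin\bigcup_k\delta_k$, which the paper's proof glosses over.
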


\begin{proof}
Given a no-empty graph $G_{\theta}=(V,~\epsilon)$. Let $\Delta$ be a MCIVS on graph $G_{\theta}$ and $\Delta =(\delta_1,\delta_2,\cdots,\delta_N)$, then having $i_k=\vert\delta_k\vert$. Moreover, let $\Lambda_q$ be an OPERS on instance and $A_q$ be roots set of set $\Lambda_q$. We can set sequence $\Delta$ is on sequence $ A_q$ such that $\Delta\subseteq A_q$.%

Consider given each $x_i,x_j\in A_q$.  We can assume there are tow components  $\lambda_i, \lambda_j\in\Lambda_q$ with $x_i\in R(\lambda_i)$ and $ x_j\in R(\lambda_j)$. If pair $x_i,x_j\in \bar{\epsilon}$, we have 
$R(\lambda_i)\cap L(\lambda_j)=\varnothing$ and $R(\lambda_j)\cap L(\lambda_i)=\varnothing$,
as Theorem\ref{cg7}.  Hence, changing positions of vertices $x_i,x_j$ on set $A_q$ can not change $\lambda_i,\lambda_j$. Then if for a component $\delta_k\in\Delta$ and pair $x_i,x_j\in \delta_k$, those subgraph $\lambda_i,\lambda_j$ similarly can be changed by changing positions of pair $x_i,x_j$  in $\delta_k$.%

Consider a vertex $x_t\in\delta_s$ and $\delta_s\neq \delta_k$ with pair $\{x_t, x_i\}\in\epsilon$. Because there is the given condition of $\Delta$ on sequence $ A_q$, therefore changing position of $x_t$ in set $\delta_s$ can not change the values of $k,s$; i.e. if the vertex $x_t$ is at the fore position than vertex $x_i$ on sequence $A_q$, the fact can not be changed by changing position of $x_t$ in set $\delta_s$. It is similar to vertex $x_i$. Hence, there is a number of permutation $i_k!$ of vertices in each component on sequence $\Delta$, i.e. there is a number of permutation $p=\prod_{k=1}^N i_k!$ on set $A_q$, for each component $\lambda_i\in\Lambda_q$ such that $\lambda_i(p)=\lambda_i$.%

Consider a vertex $x_s\in (A_q\setminus\Delta)$. If  there is randomly channging the position of vertex $x_s$ on sequence $A_q$, it is certainly that the case causes each associated edge subgraph change. Hence, $\lambda_i(p)=\lambda_i$ is not a necessary condition in this Theorem. Of cause, it is easy to prove there is no such property on completed regular graph with $\Delta =\varnothing$. Now we prove what condition can lead to edge subgraph partition unchanged in set $\Lambda_q$. We call this Theorem \emph{edge subgraph partition unchanged theorem}.\\
\end{proof}

As the Theorem\ref{cg17}, we have this experiment with dodecahedron as follow. The Table shows the case of which the program enumerated a completed interval set.

\renewcommand\arraystretch{1.0}
\begin{longtable}{ p{30mm}|| p{20mm}| p{20mm}| p{20mm}}
\caption{\textbf{Exp. for Exact Algorithm of Coloring}}\\
\toprule
~&$\vert \Delta\vert=6$&$\vert \Delta\vert=7$&$\vert \Delta\vert=8$\\
\toprule
Original  Data&269&90&2\\ 
\midrule 
Exact Data&6&55&2\\ 
\bottomrule
\end{longtable}

With rough viewpoint, program need do much more works of comparing among arrays to remove the repeated data. Such that the price of runtime and memory increase quickly. Because of $m=3$ on dodecahedron figure, therefore it must at least enumerate CIVSs for two times. Let $x_1(6)=6;~x_1(7)=55;~x_1(8)=2$ represent the first loop, and similarly the second one has $x_2(6)=6;~x_2(7)=55;~x_2(8)=2$. Then there exists a combination formula 
\[X^2=(\sum_{i=1}^2 x_i(k))^2;\qquad k=6,7,8. \]
such that we have a number of combination $(6+55+2)^2=3\,969$, i.e. there is a search breadth of 3\,969. We can evaluate the number of combination for a given instance as follow
\[ X^{m-1}=(\sum_{i=1}^{m-1} x_i(k))^{m-1};\qquad 2\leq k \leq n. \]
We can let $\sum x_i(k)\approx n$, then complexity is $n^3n^{m-1}=O(n^{m+2})$. It is similar to the current approaches.\\

\section{Conclusion}
\textbf{Summary. }In this work we studied how to cut graph actually, with such logic structure $\epsilon\subseteq \tau\subseteq V^2\text{; }~\Lambda\subseteq S \subseteq \tau$. We proved that some problems can be quantified so that it can be a basic relation model for applications. Similarly, we proved some axioms in current theory and show why some things are so hard to us. The algorithmic contribution focused on the data-structure such that solve the problem of general. In the process, the equivalent class, unit subgraph is the keypoint. It let us freely choose the method to abstract basic relation for construct new logic model. For example we abstract the edge relation from it, and finally construct two classes, with those properties of symmetry and transitivity. In fact, there are more methods to abstract this binary relation for problems such as AI, flow network, TSP. Due to limited space of page, author can not continue to do these works.

~\newline
\textbf{Future} Work. A wide range of possible future work exists for present abstract relation among those objects, e.g. TSP. We give the cutting graph by BOGPC or BOERC. For among each CIVS, those edges are the bridges between two arbitrary domains of vertices, and for those domains indeed, there exists the relation of inequivalence-color among them. It lets us may have a nice condition to use the \emph{Greedy Algorithm} to exactly solve this problem, so that graph coloring is not a pure problem of graph theory. Further for AI, we can use graph partition to characterize the process of solving some problems based on $k$ conditions. Finally, we pose the conjecture in following.
\\
~\newline
\textbf{Conjecture. }There are two binary relation $\rho$ and $\bar{\rho}$ with $\rho\neq\bar{\rho}$ on a universal set $U$, which lead to two equivalent classes $\rho(U)$ and $\bar{\rho}(U)$ on $U$ respectively. If for $R=[a_i]_{\rho}\cap [a_i]_{\bar{\rho}}$ such that $R\neq\varnothing$, then $R$ is \emph{Russell Paradox}.\\

\emph{Reason.} Self-cycle appears in graph traversal but vanishing in edge relation. And traversal relation justly possesses property of reflexivity without symmetry, to contrary for edge relation. And we can find the equivalent classes for traversal relation in $V^2$, but not on edge relation.


\begin{thebibliography}{99}
\bibitem{1} Bellman, R. (1960), "Combinatorial Processes and Dynamic Programming", in Bellman, R., Hall, M., Jr. (eds.), Combinatorial Analysis, Proceedings of Symposia in Applied Mathematics 10, American Mathematical Society, pp.
\bibitem{2} Kernighan, B. W.; Lin, Shen (1970). "An efficient heuristic procedure for partitioning graphs". Bell Systems Technical Journal 49.
\bibitem{3} Lucas, John F. (1990). Introduction to Abstract Mathematics. Rowman \& Littlefield. p. 187. ISBN 9780912675732.
\bibitem{4} Burnstein, Ilene (2003), Practical Software Testing, Springer-Verlag, p. 623, ISBN 0-387-95131-8
\bibitem{5} Hazewinkel, Michiel, ed. (2001), "Direct product", Encyclopedia of Mathematics, Springer, ISBN 978-1-55608-010-4
\end{thebibliography}
\end{document}